\documentclass[submission]{eptcs}
\usepackage[utf8]{inputenc}

\usepackage{amsthm}
\usepackage{amssymb}
\usepackage{amsmath}
\usepackage{tikz}
\usepackage{rotating}
\usepackage{multirow}
\usepackage{textcomp}
\usepackage{MnSymbol,wasysym}
\usepackage{paralist}
\usepackage{comment}
\usepackage{textcomp}
\usepackage{stmaryrd}

\newif\ifdraft\draftfalse

%
%

\newcommand{\nullproc}{{\bf 0}}

%
%
\newcommand{\dedrule}[2]{\frac{#1}{#2}}
\newcommand{\trans}[1][]{\xrightarrow{\, {#1} \, }}
\newcommand{\transs}[1][]{\overset{\, #1\, }{\Longrightarrow}}
\newcommand{\ntrans}[1][]{\mathrel{{\trans[#1]}\hspace{-0.7em}/}{\!}}

%
%

\newcommand{\Nat}{\mathbb{N}_0}
\newcommand{\relR}{\mathrel{R}}

\newcommand{\bisim}{\sim}
\newcommand{\nbisim}{\not \sim}

\newcommand{\Var}{\mathcal{V}}

\newcommand{\closedT}[1][]{T}
\newcommand{\closedTerms}[1][]{T(\Sigma)}

%
%


\newcommand{\ov}[1]{\overline{#1}}

\newcommand{\ms}[1]{\ensuremath{\lbag{#1}\rbag}}

\newcommand{\porder}{\preccurlyeq}
\newcommand{\porderstrict}{\prec}
\newcommand{\fn}{\textit{fn}}
\newcommand{\bn}{\textit{bn}}

\newcommand{\pp}{{\bf P}}

\newcommand{\length}{\textit{length}}

\newcommand{\stops}{\downarrow}

\newcommand{\depth}[1]{\textit{depth}(#1)}
\newcommand{\norm}[1]{\textit{norm}(#1)}
\newcommand{\normt}{\textit{norm}'}

\newcommand{\finitePi}{\Pi_f}
\newcommand{\wossPi}{\Pi_{\scriptsize \not \circlearrowright}}

\newcommand{\wbisim}{\approx}
\newcommand{\nwbisim}{\not \approx}

\newcommand{\wec}[1]{ {[#1]}_{\wbisim} }
\newcommand{\wecRest}[1]{\wec{ #1 }^{\scriptsize \not \circlearrowright}}
\renewcommand{\sec}[1]{ {[#1]}_{\bisim} }

\newcommand{\wpp}{\pp_{\wbisim}}
\newcommand{\spp}{\pp_{\bisim}}

\newcommand{\parpi}{\mid}

\newcommand{\newtext}[1]{#1}

\usepackage[ddmmyyyy]{datetime}
\usepackage{datenumber}
\usepackage{expl3}
\usepackage{color}

\usepackage{etex}
\reserveinserts{30}


\makeatletter
\newcommand*\getnumtz[2]{%
    \expandafter\@getnumtz\the\numexpr 0#2\relax
        \empty\relax\relax\@nnil{#1}{#2}%
}

\def\@getnumtz#1\relax#2\relax#3\@nnil#4#5{%
    \ifx\relax#2\relax
        \edef#4{#1}%
    \else
        \begingroup\expandafter\endgroup
        \expandafter\let\expandafter#4\csname getnumtz@#5\endcsname%
    \fi
}

\newcommand*\definetz[2]{%
    \@namedef{getnumtz@#1}{#2}%
}%

\definetz{Z}{+0000}
\definetz{GMT}{+0000}
\definetz{UTC}{+0000}
\definetz{CET}{+0100}
\definetz{CEST}{+0200}

\newcommand*\converttimezone[9]{%
    \begingroup
    \c@myyear=\numexpr#2\relax
    \c@mymonth=\numexpr#3\relax
    \c@myday=\numexpr#4\relax
    \c@myhour=\numexpr#5\relax
    \c@myminute=\numexpr#6\relax
    \c@mysecond=\numexpr#7\relax
    \getnumtz\origtz{#8}%
    \getnumtz\targettz{#9}%
    \c@myhourminute=\numexpr (#5)*100+(#6) - \origtz + \targettz \relax
    \c@myhour=\numexpr \c@myhourminute / 100\relax
    \c@myminute=\numexpr \c@myhourminute - \c@myhour*100\relax
    \loop\ifnum\c@myminute<\z@
        \advance\c@myhour by \m@ne
        \advance\c@myminute by 60\relax
    \repeat
    \loop\ifnum\c@myminute>59\relax
        \advance\c@myhour by \@ne
        \advance\c@myminute by -60\relax
    \repeat
    \ifnum\c@myhour<0\relax
        \setmydatenumber{mydatenumber}{\value{myyear}}{\value{mymonth}}{\value{myday}}%
        \advance\c@mydatenumber by \m@ne
        \setmydatebynumber{\value{mydatenumber}}{myyear}{mymonth}{myday}%
        \advance\c@myhour by 24\relax
    \else\ifnum\c@myhour>23\relax
        \setmydatenumber{mydatenumber}{\value{myyear}}{\value{mymonth}}{\value{myday}}%
        \advance\c@mydatenumber by \@ne
        \setmydatebynumber{\value{mydatenumber}}{myyear}{mymonth}{myday}%
        \advance\c@myhour by -24\relax
    \fi\fi
    \edef\@tempa{\unexpanded{#1}{\themyyear}{\themymonth}{\themyday}{\themyhour}{\themyminute}{\themysecond}{#9}}%
    \expandafter
    \endgroup\@tempa
}

\newcounter{myyear}
\newcounter{mymonth}
\newcounter{myday}
\newcounter{myhour}
\newcounter{myminute}
\newcounter{mysecond}
\newcounter{mydatenumber}
\makeatother

\ExplSyntaxOn
\cs_set_eq:NN \lgt_file_date: \today

\cs_new_nopar:Npn \lgt_file_time:
  {
    \lgt_get_time:N \currenthour :
    \lgt_get_time:N \currentminute :
    \lgt_get_time:N \currentsecond
  }

\cs_new_nopar:Npn \lgt_get_time:N #1
  {
    \int_compare:nNnT { #1 } < { 10 } { 0 }
    \int_use:N #1
  }

\ExplSyntaxOff


\definecolor{lightblue}{RGB}{224,224,255}
\definecolor{lightred}{RGB}{255,224,224}
\definecolor{lightgreen}{RGB}{224,255,224}
\definecolor{lightyellow}{RGB}{255,255,224}
\definecolor{lightpurple}{RGB}{255,224,255}
\definecolor{darkerred}{RGB}{64,0,0}
\definecolor{darkred}{RGB}{128,0,0}
\definecolor{darkblue}{RGB}{0,0,128}
\definecolor{darkgreen}{RGB}{0,128,0}
\definecolor{darkpurple}{RGB}{128,0,128}
\definecolor{grey}{rgb}{0.745098,0.745098,0.745098}
\definecolor{lightgrey}{rgb}{0.9,0.9,0.9}
\definecolor{darkgrey}{rgb}{0.6,0.6,0.6}

\newcommand{\colorpar}[3]{\colorbox{#1}{\parbox{#2}{#3}}}

\newcommand{\marginremark}[3]{\marginpar{\colorpar{#2}{0.65\linewidth}{\color{#1}#3}}}

\makeatletter
\def\THICKhrulefill{\leavevmode \leaders \hrule height 5pt\hfill \kern \z@}
\makeatother

\newcommand{\remarkBL}[1]{\marginremark{darkred}{lightred}{\tiny{[BL]~ #1}}}
\newcommand{\remarkMDL}[1]{\marginremark{darkgreen}{lightgreen}{\tiny{[MDL]~ #1}}}
\newcommand{\remarkPRD}[1]{\marginremark{darkpurple}{lightpurple}{\tiny{[PRD]~ #1}}}
\newcommand{\remarkWF}[1]{\marginremark{black}{lightyellow}{\tiny{[WF]~ #1}}}

\ifdraft
\pagestyle{headings}
\else
\renewcommand{\remarkBL}[1]{}
\renewcommand{\remarkMDL}[1]{}
\renewcommand{\remarkPRD}[1]{}
\renewcommand{\remarkWF}[1]{}
\fi

\newtheorem{definition}{Definition}
\newtheorem{lemma}{Lemma}
\newtheorem{theorem}{Theorem}

\newtheorem{convention}{Convention}
\newtheorem{corollary}{Corollary}
\title{Unique Parallel Decomposition for the $\pi$-calculus
  \thanks{
    M.D. Lee has been supported by the project ANR 12IS02001 PACE}
}
\author{
Matias David Lee 
\institute{
  Univ. Lyon, ENS de Lyon, CNRS, UCB Lyon 1, LIP, France.
\email{matias-david.lee@ens-lyon.fr}}
\and
Bas Luttik
\institute{
 Eindhoven University of Technology, The Netherlands.
\email{s.p.luttik@tue.nl}}
}

\begin{document}

\maketitle

\begin{abstract}
A (fragment of a) process algebra satisfies unique parallel decomposition if 
the definable behaviours admit a unique decomposition into indecomposable parallel components. 
In this paper we prove that finite processes of the $\pi$-calculus, i.e. processes that perform no infinite executions, 
satisfy this property modulo strong bisimilarity and weak bisimilarity.
Our results are obtained by an application of a general technique for
establishing unique parallel decomposition using decomposition orders.
\end{abstract}

\section{Introduction}

A (fragment of a) process algebra has \emph{unique parallel decomposition} (UPD) if 
all definable behaviours admit a unique decomposition into indecomposable parallel components. 
In this paper we prove that finite processes definable in the
$\pi$-calculus satisfy this property modulo strong bisimilarity and modulo weak bisimilarity.

From a theoretical point of view, this property is interesting because it can be used to
prove other theoretical properties about process calculi. 
For instance, relying on unique parallel decomposition,  Moller proves in \cite{Moller90b,Moller90a} that PA and CCS cannot be
finitely axiomatized without auxiliary operations, and Hirshfeld and Jerrum prove in \cite{HJ99}
that bisimilarity is decidable for normed PA.
Unique parallel decomposition can be also used to define a notion of normal form.
Such a notion of normal form is useful in completeness proofs for
equational axiomatizations in settings in which an elimination theorem
for parallel composition is lacking (see, e.g., \cite{AFIL05,AFIL09,AILT08,FL00,HP08}).
In \cite{LPSS11}, UPD is used to prove complete axiomatisation and decidability 
results in the context of a higher-order process calculus.

From a practical point of view, unique parallel decomposition can be
used to devise methods for finding the maximally parallel
implementation of a behaviour \cite{CGM98}, 
or for improving verification methods \cite{GM92}. 
In \cite{DLL12}, a unique parallel decomposition result is used as a tool in the comparison of different
security notions in the context of electronic voting.

%

The UPD property has been widely studied for different process calculi
and variants of the parallel operator.
Milner and Moller were the first to establish a unique parallel
decomposition theorem; they proved the property for a simple process
calculus that allows the specification of all finite behaviours up to
strong bisimilarity and includes parallel composition in the form of
pure interleaving without interaction between its components \cite{MM93}. 
Moller, in his dissertation \cite{Moller89},  extended this result replacing interleaving parallel composition by CCS parallel composition,
and then also considering weak bisimilarity. 
%
Christensen, also in his dissertation \cite{Chris93}, proved unique
decomposition for normed behaviours recursively definable modulo strong bisimilarity, and for
all behaviours recursively definable modulo distributed bisimilarity;
the proof of the latter result relies on a cancellation law for
parallel composition up to distributed bisimilarity, first established
by Castellani as \cite[Lemma 4.14]{Cas88}.

Most of the aforementioned unique parallel decomposition results were established with subsequent refinements of an ingenious proof technique attributed to Milner.
%
In \cite{LO05}, the notion of \emph{decomposition order} is introduced
in order to formulate a sufficient condition on
commutative monoids that facilitates an abstract version of Milner's
proof technique. It is then proved that if a partial commutative monoid can be
endowed with a decomposition order, then it has unique
decomposition. Thus, an algebraic tool is obtained that allows one to
prove UPD for a process calculus by finding a decomposition order. 

The tool can deal with most of the settings aforementioned.
In this paper, we show how the tool can also be applied to obtain
unique parallel decomposition results for finite processes of the $\pi$-calculus
w.r.t.\ strong bisimilarity and w.r.t.\ weak bisimilarity.
But, to this end, we do face two complications:
The first complication, in the context of the $\pi$-calculus is that,
as opposed to previous settings, the decomposition order is not
directly induced on the commutative monoid of processes by the
transition relation. The culprit is that, in general, two parallel
components may fuse into a single indecomposable process as a result
of scope extrusion. To define the decomposition order we consider a
fragment of the transition relation that avoids this phenomenon.
The second complication, which arises only in the case of weak bisimilarity, is
that certain transitions are deemed unobservable, and that, as a
consequence, there are transitions that do not change state (are
between weakly bisimilar processes). We demonstrate that a
decomposition order can, nevertheless, be obtained by ignoring such
\emph{stuttering transitions}.

The paper \cite{DELL16} studies unique parallel decomposition w.r.t. both
strong bisimilarity and weak bisimilarity for the applied $\pi$-calculus. 
The applied $\pi$-calculus is a variant of the $\pi$-calculus that was designed 
for the verification of cryptographic protocols.
Its main feature is that channels can only transmit variables and the values of the variables are
set using \emph{active substitutions}. Roughly, active substitution is an extension of the grammar of the $\pi$-calculus
that works as a `memory' that save the value of a variable.
Because the variables in a transition are observable but the `memories' are not, it is possible to mask sensitive information.
The proof of the result for the strong case \newtext{in \cite{DELL16}} relies on induction over the \emph{norm} of a process
and the fact that the norm of the arguments of a parallel composition is less than the norm of the parallel composition.
Unfortunately, this property is not true because of the restriction operator
(see Section~\ref{sec:upd_strong_case} for a counter example).
This is the reason why we restrict ourselves to finite processes in the strong setting.
The proof of the weak case in \cite{DELL16} follows the proof technique attributed to Milner. 
The general techniques from \cite{LO05} cannot be applied directly in
the setting of the applied $\pi$-calculus due to the active
substitutions.

In \cite{Luttik16}, the second author presented an adaptation of the
general result of \cite{LO05} in order to make it suitable for
establishing unique parallel decomposition in settings with a notion
of unobservable behaviour.
The ensued technique amounts to showing that the transition relation
induces a so-called \emph{weak} decomposition order satisfying a
property that is called \emph{power cancellation}.
%
In the present paper, we show how, instead of using the adapted
technique from \cite{Luttik16}, the original technique from
\cite{LO05} may be applied in settings with a notion of unobservable
behaviour, considering a stutter-free fragment of the transition
relation. This method appears to be simpler than the method
suggested by the result in \cite{Luttik16}.

The paper is organized as follows. In Section~\ref{sec:unique_decomposition}, we briefly recall the 
abstract framework introduced in~\cite{LO05} to prove UPD results.
In Section~\ref{sec:pi} we recall the syntax and different semantics of the $\pi$-calculus.
Section~\ref{sec:upd_strong_case} is composed of two subsections.
In Section~\ref{subsec:depth} we introduce the notion of \emph{depth} of a process and we prove some properties of this notion.
In Section~\ref{subsec:upd} we use these results and the result in Section~\ref{sec:unique_decomposition} 
to prove that finite processes of the $\pi$-calculus satisfy unique parallel decomposition w.r.t. strong bisimilarity. 
Section~\ref{sec:upd_weak_case} follows a similar structure.
In Section~\ref{subsec:pwss} we introduce the notion of \emph{processes without stuttering transitions} 
and we prove some properties of this kind of processes.
These properties and the result in Section~\ref{sec:unique_decomposition}
are used in Section~\ref{subsec:upd_for_wb} to prove that finite processes of the $\pi$-calculus satisfy unique parallel 
decomposition w.r.t.\ weak bisimilarity. 
In Section~\ref{sec:final_remarks} we present some final remarks.

\section{Decomposition orders}
\label{sec:unique_decomposition}

In this section, we briefly review the theory of unique decomposition for
commutative monoids that we shall apply in the remainder of the paper
to prove UPD results in the context of the $\pi$-calculus.
 
\begin{definition}
A \emph{commutative monoid} is a set $M$ with a distinguished element $e$ and a binary operation on $M$ 
denoted by $\cdot$ such that for all $x, y, z \in M$ :
\begin{itemize}
 \item 
 $x \cdot (y \cdot z) = (x \cdot y) \cdot z$ (associativity);
 \item 
 $x \cdot y = y \cdot x$ (commutativity);
 \item
 $x \cdot e = e \cdot x = x$ (identity).
\end{itemize}
\end{definition}
\noindent
In the remainder of the paper we often suppress the symbol $\cdot$ or use ${\parallel}$.

\begin{definition}
An element $p$ of a commutative monoid $M$ is called
\emph{indecomposable} if $p \neq e$ and $p = xy$ implies $x = e$ or $y
= e$. 
\end{definition}

\begin{definition}
Let $M$ be a commutative monoid. 
 A \emph{decomposition} in $M$ is a finite multi-set $\ms{p_1,\ldots , p_k}$ of indecomposable elements
 of $M$ such that $p_1 \cdot p_2 \cdots p_k$ is defined. 
 The element $p_1 \cdot p_2 \cdots p_k$ in $M$ will be called the \emph{composition associated} 
 with the decomposition $\ms{p_1 ,\ldots, p_k}$, and, conversely,
 we say that $\ms{p_1 ,\ldots, p_k}$ is a \emph{decomposition} of the element $p_1 \cdot p_2 \cdots p_k$ of $M$. 
 Decompositions $d = \ms{p_1 ,\ldots, p_k}$ and $d' = \ms{p'_1 ,\ldots, p'_l}$ are equivalent in $M$
 (notation $d \equiv d'$) if they have the same composition, i.e. $p_1 \cdot p_2 \cdots p_k  = p'_1 \cdots p'_l$.
 A decomposition $d$ in $M$ is \emph{unique} if $d \equiv d'$ implies $d = d'$ for all decompositions
 $d'$ in $M$. 
 We say that an element $x$ of $M$ has a \emph{unique decomposition} if it has a
 decomposition and this decomposition is unique. 
 If every element of $M$ has a unique decomposition, then
 we say that $M$ has \emph{unique decomposition}.
\end{definition}

Theorem~\ref{th:unique_decomposition_strong_case} below 
gives a sufficient condition to ensure that a commutative monoid $M$
has unique decomposition. 
It requires the existence of a \emph{decomposition order} for $M$.
\begin{definition}
\label{def:wdo}
Let $M$ be a commutative monoid; a
partial order $\porder$ on $M$ is a \emph{decomposition order} if
\begin{enumerate}
 \item 
 it is \emph{well-founded}, i.e., for every non-empty subset $\hat M \subseteq M$
 there is $m \in \hat M$ such that for all $m'\in M$, $m' \porder m$ implies $m'=m$.  
 In this case, we say that $m$ is a $\porder$-minimal element of $\hat M$;
 \item 
 the identity element $e$ of $M$ is the least element of $M$ with respect to $\porder$, i.e.,
 $e \porder x$ for all $x$ in M;
 \item 
 \label{def:wdo:strict_compatible}
 ${\porder}$ is \emph{strictly compatible}, i.e., for all $x, y, z \in M$
 if $x \porderstrict y$ (i.e. $x \porder y$ and $x \neq y$) and $yz$ is defined, $xz \porderstrict yz$;
 \item 
 it is \emph{precompositional}, i.e., for all $x, y, z \in M$
 $x \porder yz$ implies $x = y'z'$ for some $y' \porder y$ and $z' \porder z$; and
 \item
 it is \emph{Archimedean}, i.e., for all $x, y \in M$
 $x^n \porder y$ for all $n \in \Nat$ implies that $x = e$.
\end{enumerate}
\end{definition}

\begin{theorem}[\cite{LO05}]
 \label{th:unique_decomposition_strong_case}
 Every commutative monoid $M$ with a decomposition order has unique decomposition.
\end{theorem}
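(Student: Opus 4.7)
The plan is to adapt Milner's classical unique parallel decomposition argument, using the five axioms of a decomposition order as an abstract interface. The argument splits into an existence part and a uniqueness part, both proved by well-founded induction on $\porder$.

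For existence, I would induct on $x$: if $x = e$ the empty multiset works; if $x$ is indecomposable then $\ms{x}$ works; otherwise $x = yz$ with $y, z \neq e$, and since $e$ is the least element we have $e \porderstrict y$ and $e \porderstrict z$, so strict compatibility applied to $e \porderstrict z$ gives $y = e \cdot y \porderstrict z \cdot y = x$ and symmetrically $z \porderstrict x$; the induction hypothesis then supplies decompositions of $y$ and $z$ whose multiset union decomposes $x$.

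The central combinatorial lemma for uniqueness, proved by induction on $n \geq 1$ using precompositionality, is: if $p$ is indecomposable and $p \porder q_1 \cdots q_n$, then $p \porder q_i$ for some $i$. For $n \geq 2$, precompositionality factors $p \porder q_1 \cdot (q_2 \cdots q_n)$ as $p = rs$ with $r \porder q_1$ and $s \porder q_2 \cdots q_n$; indecomposability of $p$ forces $r = e$ or $s = e$, yielding either $p \porder q_2 \cdots q_n$ (close by the induction hypothesis) or $p \porder q_1$ directly. Uniqueness is then proved by well-founded induction on $x$: given two decompositions $d_1$ and $d_2$ of $x$, pick an indecomposable $p$ that is $\porder$-maximal among the elements of $d_1 \uplus d_2$. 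The lemma applied to both $d_1$ and $d_2$, together with maximality and indecomposability (which upgrade $p \porder q_i$ to $p = q_i$), forces $p$ to occur in both; writing $d_1 = \ms{p} \uplus d_1'$, $d_2 = \ms{p} \uplus d_2'$ and letting $x_1 = \prod d_1'$, $x_2 = \prod d_2'$, we have $p x_1 = p x_2 = x$ and, by strict compatibility with $p \neq e$, both $x_1, x_2 \porderstrict x$; so provided $x_1 = x_2$ the induction hypothesis yields $d_1' = d_2'$ and hence $d_1 = d_2$.

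The main obstacle is therefore cancellation: $p x_1 = p x_2$ must imply $x_1 = x_2$. Strict compatibility alone only excludes $x_1 \porderstrict x_2$ and $x_2 \porderstrict x_1$ and is insufficient when $x_1$ and $x_2$ are incomparable. I would establish cancellation as a separate lemma, first reducing the case of an arbitrary cancelled element to that of an indecomposable one by expanding it into a decomposition (using the existence part already proved) and peeling off factors via strict compatibility, and then treating the indecomposable case by combining repeated applications of precompositionality with the Archimedean axiom to rule out the infinite-descent configurations in which successive factorisations keep absorbing copies of $p$ without ever stabilising. This cancellation lemma is the technical heart of Theorem~\ref{th:unique_decomposition_strong_case}; with it in hand the well-founded induction above closes and the theorem follows.
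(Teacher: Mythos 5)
A preliminary remark: the paper does not prove Theorem~\ref{th:unique_decomposition_strong_case} at all --- it is quoted from \cite{LO05} --- so your attempt has to be measured against the proof given there rather than against anything in this document. With that understood, your outline reproduces the routine parts of that proof correctly: existence of a decomposition by well-founded induction, using the fact that $e$ is least together with strict compatibility to obtain $y, z \porderstrict yz$ when $y, z \neq e$; the ``prime'' lemma that an indecomposable $p$ with $p \porder q_1 \cdots q_n$ satisfies $p \porder q_i$ for some $i$, derived from precompositionality and indecomposability; and the reduction of uniqueness to matching a $\porder$-maximal indecomposable $p$ across the two decompositions and then cancelling it.

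The genuine gap is the cancellation step. You have located it accurately, but what you offer in its place --- ``repeated applications of precompositionality with the Archimedean axiom to rule out the infinite-descent configurations'' --- is a statement of intent, not an argument, and it is exactly here that the difficulty of the theorem is concentrated. Two concrete problems. First, inside the well-founded induction the statement that $p x_1 = p x_2 = x$ with $p$ indecomposable implies $x_1 = x_2$ is essentially equivalent to uniqueness of the decomposition of $x$ itself (given uniqueness below $x$): if either were available the other would follow at once, so presenting cancellation as a ``separate lemma'' to be proved first does not reduce the problem. Second, the descent you gesture at does not obviously start: precompositionality applied to $x_1 \porder p x_2$ yields only $x_1 = p' y$ with $p' \porder p$ and $y \porder x_2$, and indecomposability of $p$ does \emph{not} force $p' \in \{e, p\}$, because the strict $\porder$-predecessors of an indecomposable element need not be trivial (in the monoid $\spp$ of this paper, $\sec{\ov{a}b.\ov{c}d.\nullproc}$ is indecomposable yet lies strictly above the nontrivial class $\sec{\ov{c}d.\nullproc}$). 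By detaching cancellation from the uniqueness induction you also discard the one piece of leverage that the argument in \cite{LO05} actually exploits, namely that $p$ was chosen $\porder$-maximal among the indecomposables occurring in the two decompositions; the Archimedean axiom is brought to bear on that maximal $p$. As written, the proposal establishes everything except the step that carries the theorem's content.
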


\section{The $\pi$-calculus}
\label{sec:pi}

We recall the syntax of the $\pi$-calculus and  
the rules to define the transition relation~\cite{SW01}.
We assume a set of \emph{names} or \emph{channels} $\Var$. 
We use $a,b,c, x,y,z$ to range over $\Var$.
\begin{definition}
 The \emph{processes, summations} and \emph{prefixes} of the $\pi$-calculus
 are given respectively by 
 \begin{align*}
  P   ::= & \quad  M \quad | \quad P\parpi P' \quad | \quad \nu z. P \quad | \quad !P \\
  M   ::= & \quad \nullproc \quad | \quad \pi.P \quad | \quad  M + M' \\   
  \pi ::= & \quad \ov{x}y \quad | \quad x(z) \quad | \quad \tau \quad | \quad [x=y] \pi
 \end{align*}
 We denote with $\Pi$ the set of processes of the $\pi$-calculus. 
\end{definition}

\begin{table}
\begin{center}
 \begin{minipage}{0.2\textwidth}
 \begin{gather}
  \dedrule{}{\ov{x}y.P \trans[\ov{x}y] P}
  \tag{Out}
  \label{trans:out}
 \end{gather}
 \end{minipage}
 \begin{minipage}{0.25\textwidth}
 \begin{gather}
  \dedrule{}{x(z).P \trans[xy] P\{y/z\}}
  \tag{Inp}
  \label{trans:inp}
 \end{gather}
 \end{minipage}
 \begin{minipage}{0.18\textwidth}
 \begin{gather}
  \dedrule{}{\tau.P \trans[\tau] P}
  \tag{Tau}
  \label{trans:tau}
 \end{gather}
 \end{minipage}
 \begin{minipage}{0.25\textwidth}
 \begin{gather}
  \dedrule{\pi.P \trans[\alpha] P}{[x=x]\pi.P \trans[\alpha] P}
  \tag{Mat}
  \label{trans:mat}
 \end{gather}
 \end{minipage}
\\
 \begin{minipage}{0.24\textwidth}
 \begin{gather}
  \dedrule{P \trans[\alpha] P'}{P + Q \trans[\alpha] P'}
  \tag{Sum-L}
  \label{trans:sum-l}
 \end{gather}
 \end{minipage}
 \qquad
 \begin{minipage}{0.5\textwidth}
 \begin{gather}
  \dedrule{P \trans[\alpha] P'}{P \parpi Q \trans[\alpha] P' \parpi Q}%
  \quad \bn(\alpha) \cap \fn(Q) = \emptyset
  \tag{Par-L}
  \label{trans:par-l}
 \end{gather}
 \end{minipage}
\\
 \begin{minipage}{0.32\textwidth}
 \begin{gather}
  \dedrule{P \trans[\ov{x}y] P' \quad Q \trans[xy] Q'}{P \parpi Q \trans[\tau] P' \parpi Q'}%
  \tag{Comm-L}
  \label{trans:comm-l}
 \end{gather}
 \end{minipage}
\qquad
 \begin{minipage}{0.44\textwidth}
 \begin{gather}
  \dedrule{P \trans[\ov{x}(z)] P' \quad Q \trans[xz] Q'}{P \parpi Q \trans[\tau] \nu z(P' \parpi Q')} \ z \not \in \fn(Q)%
  \tag{Close-L}
  \label{trans:close-l}
 \end{gather}
 \end{minipage}
\\
 \begin{minipage}{0.37\textwidth}
 \begin{gather}
  \dedrule{P \trans[\alpha] P' }{ \nu z (P) \trans[\alpha] \nu z (P') } \ \ z \not \in n(\alpha)%
  \tag{Res}
  \label{trans:res}
 \end{gather}
 \end{minipage}
\begin{minipage}{0.31\textwidth}
 \begin{gather}
  \dedrule{P \trans[\ov{x}z] P' }{ \nu z (P) \trans[\ov{x}(z)] P' }
  \ z \neq x
  \tag{Open}
  \label{trans:open}
 \end{gather}
 \end{minipage}
 \begin{minipage}{0.28\textwidth}
 \begin{gather}
  \dedrule{P \trans[\alpha] P' }{!P \trans[\alpha] P' \parpi !P}%
  \tag{Rep-Act}
  \label{trans:rep-act}
 \end{gather}
 \end{minipage}
\\
 \begin{minipage}{0.4\textwidth}
 \begin{gather}
  \dedrule{P \trans[\ov{x}y] P' \quad P \trans[xy] P''}{!P \trans[\tau] (P' \parpi P'') \parpi !P}%
  \tag{Rep-Comm}
  \label{trans:rep-comm}
 \end{gather}
 \end{minipage}
 \begin{minipage}{0.55\textwidth}
 \begin{gather}
  \dedrule{P \trans[\ov{x}(z)] P' \quad P \trans[xz] P''}{!P \trans[\tau] (\nu z(P' \parpi P'')) \parpi !P} \ z \not \in \fn(P)%
  \tag{Rep-Close-L}
  \label{trans:rep-close-l}
 \end{gather}
 \end{minipage} 
 \end{center}
 \caption{Transition rules for the $\pi$-calculus}
 \label{tb:trans_rules}
\end{table}

An occurrence of a name $z \in \Var$ is \emph{bound} in a process $P$ 
if it is in the scope of a restriction $\nu z$ or of an input $a(z)$. 
A name $a \in \Var$ is \emph{free} in a process $P$ if there is at least one occurrence of $a$ that is not bound.
We write $\bn(P)$ and $\fn(P)$ to denote, respectively, the set of bound names and free names of a process~$P$. 
The \emph{set of names} of a process $P$ is defined by $n(P) = \bn(P) \cup \fn(P)$.
We employ the following convention of the $\pi$-calculus w.r.t. names.
\begin{convention}{\cite[P.47]{SW01}}
\label{conv:names2}
In any discussion, we assume that the bound names of any processes or actions under consideration are chosen
to be different from the names free in any other entities under consideration, such as processes, actions, substitutions,
and sets of names. This convention is subject to the limitation that in considering a transition $P \trans[\ov{x}(z)]Q$, 
the name $z$ that is bound in $\ov{x}(z)$ and in $P$ may occur free in $Q$.
This limitation is necessary for expressing scope extrusion.
\end{convention}

The transition relation associated to a term is defined by the rules in Table~\ref{tb:trans_rules},
where we have omitted the symmetric version of the rules 
(\ref{trans:sum-l}), (\ref{trans:par-l}), (\ref{trans:comm-l}) and (\ref{trans:close-l}).
We denote with $A$ the set of \emph{visible actions} that can be executed by a process $P\in \Pi$,
i.e. $A = \{xy \mid x,y \in \Var\} \cup \{\ov{x}y \mid x,y \in \Var\} \cup \{\ov{x}(z) \mid x,z \in \Var\}$.
The action $\tau$ is the \emph{internal action}. We define $A_\tau = A \cup \{\tau\}$. 
For $P, P' \in \Pi$ and $\alpha \in A_\tau$, we write $P \trans[\alpha] P'$ if there is a derivation of $P \trans[\alpha] P'$
with rules in Table~\ref{tb:trans_rules}.

\begin{definition}
\emph{Strong bisimilarity} is the largest symmetric relation over $\Pi$, notation~$\bisim$, 
such that whenever $P \bisim Q$, if $P \trans[\alpha] P'$ then there is $Q'$ s.t. $Q \trans[\alpha] Q'$ and $P' \bisim Q'$. 
\end{definition}

\noindent
The relation $\bisim$ is not compatible with input prefix: We have that
$$
\ov{z}x \parpi a(y) \bisim \ov{z}x.a(y) + a(y).\ov{z}x\enskip,
$$
whereas
$$
b(a).(\ov{z}x \parpi a(y)) \nbisim  b(a).(\ov{z}x.a(y) + a(y).\ov{z}),
$$
because when $z$ is received over the channel $a$, we have 
$$
(\ov{z}x \parpi a(y)) \{a/z\} \nbisim (\ov{z}x.a(y) +
a(y).\ov{z})\{a/z\}
\enskip.
$$
Hence, $\bisim$ is not a congruence for the full syntax of the
$\pi$-calculus. It is, however, a so-called \emph{non-input
  congruence} (see \cite[Theorem 2.2.8]{SW01}): it is compatible with
all the other constructs in the syntax. In the present paper
we shall only use the fact that $\bisim$ is compatible with parallel
composition, i.e., if $P_1 \bisim Q_1$ and $P_2 \bisim Q_2$, then
$P_1 \parpi P_2 \bisim Q_1 \parpi Q_2$.

\remarkBL{I propose that we remove the definitions of non-input
  congruence and of strong full bisimilarity and put a remark in Section~\ref{sec:final_remarks}. We
should say that we do not know how to extend the results to strong
full bisimilarity, except for in the case of the asynchronous
$\pi$-calculus where strong bisimilarity and strong full bisimilarity
coincide.}
\remarkMDL{Done!}
%
%
%
%
 
 We recall now the weak variant of bisimilarity.
 We write $P \transs P'$ if $P = P'$ or if there are $P_0, \ldots, P_n$ with $n>0$
 s.t. $P = P_0 \trans[\tau] \ldots \trans[\tau] P_n = P'$.
 We write $P \transs[\alpha] Q$ with $\alpha \in A_\tau$ if there are $P', Q'$ s.t.
 $P \transs P' \trans[\alpha] Q' \transs Q$. 
 Notice the difference between $P \transs P'$ and $P \transs[\tau] P'$, in the second case, 
 at least one $\tau$-transition is executed.

\begin{definition}
\label{def:wbisim}
\emph{Weak bisimilarity} is the largest symmetric relation over $\Pi$, notation~$\wbisim$, 
such that whenever $P \wbisim Q$, 
\begin{inparaenum}[(i)]
\item 
if $P \trans[\alpha] P'$ with $\alpha \in A$ then there is $Q'$ s.t. $Q \transs[\alpha] Q'$ and $P' \wbisim Q'$ and 
\item 
if $P \trans[\tau] P'$ then there is $Q'$ s.t. $Q \transs Q'$ and $P' \wbisim Q'$.
\end{inparaenum}
\end{definition}

Like strong bisimilarity, it is only possible to prove that $\wbisim$ is a congruence for
non-input contexts (see \cite[Theorem 2.4.22]{SW01}). 
%

\section{Unique decomposition with respect to strong bisimilarity}
\label{sec:upd_strong_case}

In this section, we shall use the result presented in Section~\ref{sec:unique_decomposition} to prove 
that every \emph{finite} $\pi$-calculus process has a unique parallel decomposition w.r.t.\ strong bisimilarity. 
In Section~\ref{subsec:depth} we introduce the definition of depth of a process and some of its properties. 
We also explain why we restrict our development to finite
processes.
%
In Section~\ref{subsec:upd} we present the unique decomposition result.

\subsection{The depth of a process}
\label{subsec:depth}

Given a set $X$, we denote with $X^*$ the set of finite sequences over $X$,
where $\varepsilon \in X^*$ is the empty sequence.
For $\omega = \alpha_1\alpha_2\cdots \alpha_n \in A_\tau^*$ with $n>0$, 
we write $P \trans[\omega] P'$ if there are processes
$P_0, P_1, \ldots, P_n$ s.t.  $P = P_0 \trans[\alpha_1] P_1 \trans[\alpha_2] \ldots \trans[\alpha_n] P_n = P'$. 
If $\omega = \varepsilon$, then $P \trans[\omega] P'$ implies $P'=P$.
If we are not interested in $P'$, we write $P \trans[\omega]$.
In addition, we write $P\stops$ if for all $\alpha \in A_\tau$, $P \ntrans[\alpha]$.

\begin{definition}
 Let $\length : A_\tau^* \to \Nat$ be the function defined by 
 \begin{equation*}
 \length(\omega) =
\begin{cases}
0 & \text{if } \omega = \varepsilon,\\
\length(\omega') + 1 & \text{if } \omega = \alpha\omega' \text{ and } \alpha \neq \tau\\
\length(\omega') + 2 & \text{if } \omega = \alpha\omega' \text{ and } \alpha = \tau
\end{cases}
\end{equation*}
\end{definition}

\begin{definition}
 A process $P \in \Pi$ is \emph{normed} if there is $\omega \in A_\tau^*$ such that 
 $P \trans[\omega] P' \stops$.  
 We denote with $\Pi_n$ the set of normed processes.
 The $\textit{depth} : \Pi_n \to \Nat \cup \{\infty\}$ and the 
 $\textit{norm} : \Pi_n \to \Nat$ of a normed process $P \in \Pi$ are defined, respectively, by 
 \begin{align*}
 \depth P  = & 
 \sup(\{\length(\omega) \mid P \trans[\omega] P' \text{ and } P' \stops\})
 \\
 \norm P = &
 \inf(\{\length(\omega) \mid P \trans[\omega] P' \text{ and }  P' \stops\})
\end{align*}
Where $\sup(X) = \infty$ whenever $X$ is an infinite set, and $\inf(\emptyset)=\infty$.
\end{definition}

We remark that we have assigned a higher weight to
occurrences of the label $\tau$ in the definition of the length of a
sequence $\omega\in A_\tau^*$. This is to ensure that depth is
additive w.r.t.\ parallel composition (i.e., the depth of a parallel
composition is the sum of the depths of its components), as we shall
prove in Lemma~\ref{lemma:sum_of_depths}) below.
%
%
As opposed to other process calculi for which unique decomposition
has been established (see, e.g. \cite{LO05}), due to scope extrusion,
norm is not additive for the $\pi$-calculus:
 Consider 
 $$P = P_0 \parpi P_1 = \nu z(\ov{a}z) \parpi a(x).!\ov{x}a$$
 $P$~is normed because $P \trans[\tau] \nu z(\nullproc \parpi !\ov{z}a) \stops$
 but $P_1$ is not because it only performs an execution of infinite length. 
 Then, to ensure this kind of properties, one approach could be just consider normed processes. 
 Unfortunately this is not enough. 
 Consider 
$$Q = Q_0 \parpi Q_1 = \nu z(\ov{a}z) \parpi a(x).\ov{x}a$$ 
Processes $Q$, $Q_0$ and $Q_1$ are normed and, moreover, they perform no infinite execution. Despite this, 
we have that $\norm Q = 2$, because $Q \trans[\tau] \nu z(\nullproc \parpi \ov{z}a) \stops$,
and $\norm{Q_0} + \norm{Q_1} = 1 + 2 = 3$.
Moreover, notice that the norm of the arguments of a parallel composition is not less than the norm of
the parallel composition, i.e. $\norm Q = 2$ and $\norm{Q_1} = 2$.
\newtext{In particular, these examples show that item 4 in Lemma 3 of \cite{DELL16} 
(norm is additive) is false, and, as a consequence, some proofs in \cite{DELL16} are flawed. 
The authors of \cite{DELL16} proposed a solution to this problem that we discuss in the conclusion of this paper.}
So, to facilitate inductive reasoning, we will consider \emph{finite processes} and depth. 
 \begin{definition}
  A process $P \in \Pi$ is \emph{finite} if there is $n \in \Nat$ s.t. there is no 
  $\omega = \alpha_1\alpha_2\cdots\alpha_{n+1} \in A_\tau^*$ such that $P \trans[\omega]$.
  We denote with $\finitePi$ the set of finite processes of $\Pi$.
  \end{definition}
Following the last example, we have that 
$Q$, $Q_0, Q_1 \in \finitePi$ and $\depth{Q} = 3 = \depth{Q_0} + \depth{Q_1}$.

\medskip
To conclude this section we present a collection of results including lemmas and theorems. 
Most of the lemmas are only needed to prove the theorems. 
The theorems and only few lemmas will be used in the next section.
Theorem~\ref{th:bisim_implies_same_depth} states that bisimilar processes have the same depth. 
Theorem~\ref{th:parameters_depths_lesser_than_par_depth_strong_case} states that 
the depth of a parallel composition of two processes not bisimilar to $\nullproc$ 
is greater than the depth of each process.
Thanks to these results, we will able to extend the notion of depth to equivalence classes
and apply inductive reasoning. 

\begin{lemma}
 \label{lemma:PnbisimNullproc}
 For all $P \in \finitePi$, $P \nbisim \nullproc$ implies $\depth P > 0$. 
\end{lemma}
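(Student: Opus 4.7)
The plan is to prove the contrapositive: assume $\depth P = 0$ and show that $P \bisim \nullproc$. This requires two steps.

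First, I would observe that for $P \in \finitePi$, $\depth P = 0$ implies $P \stops$. Suppose toward contradiction that $P \trans[\alpha] P_1$ for some $\alpha \in A_\tau$ and some $P_1$. Since $P$ is finite, $P_1$ is also finite (no derivative of $P$ can admit arbitrarily long executions), and hence any maximal execution path starting at $P_1$ is finite and terminates in some $P'$ with $P' \stops$. Writing such a path as $\omega' \in A_\tau^*$, we obtain $P \trans[\alpha \omega'] P' \stops$. By definition, $\length(\alpha\omega') \geq 1$, so $\depth P \geq 1$, contradicting $\depth P = 0$. Therefore $P \stops$.

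Second, I would note that whenever $P \stops$, the symmetric relation $R = \{(P,\nullproc),(\nullproc,P)\}$ is a strong bisimulation. Indeed, inspection of Table~\ref{tb:trans_rules} shows that $\nullproc$ has no outgoing transitions, and by assumption neither does $P$; hence the transfer conditions of strong bisimilarity are vacuously satisfied. It follows that $R \subseteq {\bisim}$, and in particular $P \bisim \nullproc$.

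Combining the two steps yields the contrapositive of the statement. There is no serious obstacle here; the only subtle point is making proper use of the finiteness hypothesis to guarantee that a stopped derivative exists and is reachable by a non-empty sequence once $P$ is not itself stopped, since without finiteness a process of the form $!\alpha.P$ could have $\depth = \infty$ while still being non-stopped, but the argument above only needs that the set over which the supremum is taken is non-empty and contains an element of length at least one.
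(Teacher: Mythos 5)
Your proof is correct and is the natural argument for this lemma: establishing the contrapositive by showing that $\depth P = 0$ forces $P \stops$ (using finiteness of $P$ to extract a stopped derivative reachable by a non-empty sequence whenever $P$ has any transition), and then observing that a stopped process is strongly bisimilar to $\nullproc$ via the two-element relation. The paper defers the proof of this lemma to its appendix, but this is essentially the only reasonable route and there is no gap in your version.
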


\begin{lemma}
\label{lemma:trans_reduces_depth}
If $P \in \finitePi$ and $P\trans[\alpha] P'$, $\alpha \in A_\tau$, then $P'\in \finitePi$ and $\depth P > \depth{P'}$. 
\end{lemma}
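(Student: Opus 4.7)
The plan is to verify first that $P' \in \finitePi$ (so that $\depth{P'}$ is even defined), and then to establish the strict inequality by observing that every transition sequence from $P'$ reaching a $\stops$-state extends, via the prefix $\alpha$, to a strictly longer such sequence from $P$.

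To show $P' \in \finitePi$, let $n \in \Nat$ bound the number of symbols in any transition sequence starting from $P$. Since $P \trans[\alpha] P'$ we have $n \geq 1$. If $P'$ admitted a transition sequence of $n$ actions, prefixing it with $\alpha$ would yield one of $n+1$ actions from $P$, contradicting the choice of $n$. Hence $P'$ is finite with bound $n-1$. Since $P' \in \finitePi$, any greedy extension of transitions from $P'$ must eventually terminate in a state with no outgoing transitions, so $P'$ is also normed and $\depth{P'}$ is a well-defined natural number.

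For the depth comparison, I would first record the easy fact, by a trivial induction on $\omega_1$, that $\length(\omega_1\omega_2) = \length(\omega_1) + \length(\omega_2)$, together with the observation that $\length(\alpha) \geq 1$ for every $\alpha \in A_\tau$. Then, for any $\omega' \in A_\tau^*$ with $P' \trans[\omega'] P''$ and $P'' \stops$, we have $P \trans[\alpha\omega'] P''$ with $P'' \stops$, so $\length(\omega') + \length(\alpha) = \length(\alpha\omega') \leq \depth P$, whence $\length(\omega') \leq \depth P - 1$. Taking the supremum over all such $\omega'$ yields $\depth{P'} \leq \depth P - 1$, and therefore $\depth P > \depth{P'}$.

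The argument is essentially bookkeeping around the definitions, and I do not anticipate any real obstacle. The only subtle points worth emphasising are that $\finitePi \subseteq \Pi_n$ (so that $\depth{P'}$ is defined at all) and that $\depth P$ is automatically a natural number (never $\infty$) for a finite process, so that the arithmetic step $\depth P - 1$ is legitimate.
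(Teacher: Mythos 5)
Your proof is correct and is the standard bookkeeping argument this lemma calls for: finiteness of $P'$ by contradiction on the action-count bound, and the depth inequality by extending every terminating trace of $P'$ with the prefix $\alpha$ and using additivity of $\length$ together with $\length(\alpha)\geq 1$. The paper relegates its proof to the appendix, but your argument, including the care taken that $P'$ is normed so that $\depth{P'}$ is a well-defined natural number, is exactly the intended route.
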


\begin{theorem}
\label{th:bisim_implies_same_depth}
If $P \bisim Q$ then $P \in \finitePi$ iff $Q \in \finitePi$;
moreover, $\depth{P} = \depth{Q}$. 
\end{theorem}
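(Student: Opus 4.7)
My plan is to derive both claims from a standard lifting of bisimilarity to transition sequences. Concretely, I would first prove the following auxiliary fact by induction on the number of transitions in $\omega \in A_\tau^*$: \emph{if $P \bisim Q$ and $P \trans[\omega] P'$, then there exists $Q'$ with $Q \trans[\omega] Q'$ and $P' \bisim Q'$}. The base case $\omega = \varepsilon$ is trivial. For $\omega = \alpha\omega'$, the single-step matching in the definition of $\bisim$ provides $Q \trans[\alpha] Q''$ with $P'' \bisim Q''$, after which the inductive hypothesis applied to the tail yields the required $Q'$.

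The first conjunct, $P \in \finitePi \iff Q \in \finitePi$, then follows immediately. If $P$ is finite with bound $n$ and $Q \trans[\omega]$ for some $\omega$ with more than $n$ transitions, applying the lifting to $Q \bisim P$ (using symmetry of $\bisim$) produces a matching $P \trans[\omega]$ of the same length, contradicting the bound. The converse direction is the same argument with $P$ and $Q$ swapped.

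For the depth equality I strengthen the lifting: when $P \bisim Q$ and $P \trans[\omega] P' \stops$, the matched $Q'$ with $P' \bisim Q'$ must itself satisfy $Q' \stops$, because any outgoing transition of $Q'$ would be matched by $P'$, contradicting $P' \stops$. Combining this with its symmetric counterpart gives
\[
\{\length(\omega) \mid P \trans[\omega] P' \stops\} = \{\length(\omega) \mid Q \trans[\omega] Q' \stops\},
\]
and taking suprema yields $\depth P = \depth Q$. I do not foresee a substantive obstacle; the only subtlety is that single-step bisimulation matching in the $\pi$-calculus must respect Convention~\ref{conv:names2}, so that bound names emitted during scope extrusion do not clash with free names of the matching process, but this is already built into the transition rules of Table~\ref{tb:trans_rules} and the definition of $\bisim$.
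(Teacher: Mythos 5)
Your proof is correct, but it takes a genuinely different route from the paper's for the depth claim. The paper proves $\depth{P}=\depth{Q}$ by induction on $\depth{P}$: it uses Lemma~\ref{lemma:PnbisimNullproc} for the base case and, in the inductive step, derives a contradiction from $\depth{Q}\neq\depth{P}$ via Lemma~\ref{lemma:trans_reduces_depth} (single transitions strictly decrease depth, which is what licenses the induction hypothesis). You instead lift single-step bisimulation matching to whole traces, observe that $\stops$ is preserved across the matching (since $P'\bisim Q'$ and $P'\stops$ force $Q'\stops$), conclude that
\[
\{\length(\omega) \mid P \trans[\omega] P' \stops\} = \{\length(\omega) \mid Q \trans[\omega] Q' \stops\},
\]
and read off $\depth{P}=\depth{Q}$ by taking suprema. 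Your argument is more direct and self-contained: it needs neither auxiliary lemma, handles the finite and infinite cases uniformly (equal sets have equal suprema under any convention for $\sup$), and would give preservation of $\norm{\cdot}$ for free by taking infima instead. What the paper's route buys is economy within the larger development: Lemmas~\ref{lemma:PnbisimNullproc} and~\ref{lemma:trans_reduces_depth} are needed elsewhere anyway, so the inductive proof reuses machinery already in place. Your treatment of the first conjunct coincides with the paper's ("clearly, $P\trans[\omega]$ iff $Q\trans[\omega]$" is exactly your lifting), and your remark about Convention~\ref{conv:names2} and bound-output actions is the right caveat for iterating the matching in the $\pi$-calculus.
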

\begin{proof}
 Suppose that $P\bisim Q$. Then, clearly, $P\trans[\omega]$ iff
 $Q\trans[\omega]$, and hence $P\in\finitePi$ iff $Q\in\finitePi$.

 To prove that $\depth{P} = \depth{Q}$, first note that if
 $P,Q\not\in\finitePi$, then $\depth{P}=\infty=\depth{Q}$.
 In the case that remains, both $\depth{P}$ and $\depth{Q}$ are
 natural numbers; we proceed by induction over $\depth{P}$.
 If $\depth{P} = 0$ then $P \bisim \nullproc$ by Lemma~\ref{lemma:PnbisimNullproc} , so $Q \bisim \nullproc$ and therefore $\depth{Q} = 0$.
 Suppose now $\depth{P} = n > 0$.
 Assume that the statement holds for processes with depth less than $n$. 
 Suppose $\depth{Q} = m > n$ then there is $Q'$ s.t. $Q \trans[\alpha] Q'$ and $m = \length(\alpha) + \depth{Q'}$. 
 By definition of ${\bisim}$, we get $P \trans[\alpha] P'$, $P' \bisim Q'$. 
 By Lemma~\ref{lemma:trans_reduces_depth}, $\depth{P'} < \depth{P}$, therefore $\depth{P'} = \depth{Q'}$ 
 and $\depth{P} \geq \depth{P'} + \length(\alpha) = m > n = \depth{P}$, i.e. we get a contradiction.
 Similarly, for the case $\depth{Q} = m < n$, 
 we can reach a contradiction by considering a transition  
 $P \trans[\alpha] P'$ with $n = \length(\alpha) + \depth{P'}$. 
 Then we can conclude $\depth{P} = \depth{Q}$.
%
\end{proof}

\begin{lemma}
\label{lemma:trans_and_max}
Let $P, P' \in \finitePi$, $\omega = \alpha\omega' \in A_\tau^*$ be such that 
$P \trans[\alpha] P' \trans[\omega']$ and $\depth{P} = \length(\omega)$. 
Then $\depth{P'} = \length(\omega')$ and therefore $\depth{P} = \depth{P'} + \length(\alpha)$.
\end{lemma}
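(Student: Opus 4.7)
My plan is to exploit the sup-characterization of depth together with finiteness of $P$ to argue two inequalities, $\length(\omega')\le\depth{P'}$ and $\length(\omega')\ge\depth{P'}$, and then conclude additivity from $\length(\alpha\omega') = \length(\alpha)+\length(\omega')$, which is immediate from the definition of $\length$.

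First I would unfold the hypothesis $P \trans[\alpha] P' \trans[\omega']$ to obtain some $P''$ with $P \trans[\alpha] P' \trans[\omega'] P''$, a run of total length $\length(\omega) = \depth{P}$. I claim $P'' \stops$. Otherwise $P'' \trans[\beta] P'''$ for some $\beta \in A_\tau$ and $P'''$; iterating (finitely often, because $P$ and hence $P''$ lie in $\finitePi$ by Lemma~\ref{lemma:trans_reduces_depth}) produces a stopped state reached from $P$ by a run of length strictly greater than $\length(\omega)$, contradicting $\length(\omega) = \depth{P}$. Hence $P' \trans[\omega'] P''$ with $P'' \stops$, and the definition of depth yields $\length(\omega') \le \depth{P'}$.

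For the converse direction I would argue by contradiction: if $\depth{P'} > \length(\omega')$, then since $P' \in \finitePi$ the sup is witnessed by some run $P' \trans[\omega''] Q \stops$ with $\length(\omega'') > \length(\omega')$. Prepending $P \trans[\alpha] P'$ gives $P \trans[\alpha\omega''] Q \stops$, whence
\[
\depth{P} \;\ge\; \length(\alpha\omega'') \;=\; \length(\alpha) + \length(\omega'') \;>\; \length(\alpha)+\length(\omega') \;=\; \length(\omega) \;=\; \depth{P},
\]
a contradiction. Thus $\depth{P'} = \length(\omega')$, and consequently $\depth{P} = \length(\omega) = \length(\alpha) + \length(\omega') = \length(\alpha) + \depth{P'}$, as required. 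The only delicate point is ensuring that the sup defining $\depth{P'}$ is actually attained by some maximum-length run; this is the only place where the assumption $P \in \finitePi$ (which forces a uniform bound on run lengths from every reachable state) is really used.
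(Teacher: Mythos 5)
Your argument is correct and is essentially the expected one: both inequalities $\length(\omega')\le\depth{P'}$ and $\length(\omega')\ge\depth{P'}$ follow from the sup-characterization of depth (using finiteness of $P$ to extend the given run to one ending in a $\stops$-state, and prepending $P\trans[\alpha]P'$ for the converse), and additivity of $\length$ gives the final identity. The only very minor quibble is your closing remark: finiteness is really needed to guarantee that the run can be completed to a stopped state (so that $\length(\omega')$ lies in the set defining $\depth{P'}$), whereas the second direction only needs that a supremum exceeding $\length(\omega')$ is witnessed by some element of the set, which holds without attainment.
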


\begin{lemma}
 \label{lemma:max_and_rest}
 For all $P \in \finitePi$, $\depth{P} \geq \depth{\nu z(P)}$ for all $z \in \Var$.
\end{lemma}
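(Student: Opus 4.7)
The plan is to proceed by induction on $\depth{P}$, exploiting the fact that every transition $\nu z(P) \trans[\alpha] R$ can be derived only by rule (\ref{trans:res}) or rule (\ref{trans:open}), and that in both cases the derivation is witnessed by a transition out of $P$ of the same length.

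First I would dispatch the base case. If $\depth{P} = 0$ then $P \stops$, so neither (\ref{trans:res}) nor (\ref{trans:open}) is applicable; hence $\nu z(P) \stops$ and $\depth{\nu z(P)} = 0$. For the inductive step, I would assume the statement holds for every $P'\in\finitePi$ with $\depth{P'} < \depth{P}$ and every $z' \in \Var$. One first has to observe that $\nu z(P) \in \finitePi$, which follows because each transition of $\nu z(P)$ projects onto a transition of $P$, so an execution of length $n+1$ out of $\nu z(P)$ would yield one out of $P$. Consequently $\depth{\nu z(P)}$ is a natural number, so we may pick an execution $\nu z(P) \trans[\alpha] R \trans[\omega']$ with $R \stops[\omega']$-maximal and $\length(\alpha\omega') = \depth{\nu z(P)}$. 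By Lemma~\ref{lemma:trans_and_max} we have $\depth{R} = \length(\omega')$, hence
\[
  \depth{\nu z(P)} = \length(\alpha) + \depth{R}.
\]

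Now I would analyse which rule justifies $\nu z(P) \trans[\alpha] R$. If it is (\ref{trans:res}), then $P \trans[\alpha] P'$ with $z \notin n(\alpha)$ and $R = \nu z(P')$. Lemma~\ref{lemma:trans_reduces_depth} gives $P' \in \finitePi$ and $\depth{P'} < \depth{P}$, so the induction hypothesis yields $\depth{P'} \geq \depth{\nu z(P')} = \depth{R}$. The transition $P \trans[\alpha] P'$ extended by a maximal execution out of $P'$ yields an execution of $P$ of length $\length(\alpha) + \depth{P'}$, so
\[
  \depth{P} \geq \length(\alpha) + \depth{P'} \geq \length(\alpha) + \depth{R} = \depth{\nu z(P)}.
\]
If the transition is justified by (\ref{trans:open}), then $\alpha = \ov{x}(z)$ with $z \neq x$, $P \trans[\ov{x}z] P'$ and $R = P'$. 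Since both $\ov{x}z$ and $\ov{x}(z)$ are visible actions, $\length(\ov{x}z) = 1 = \length(\alpha)$, and therefore
\[
  \depth{P} \geq \length(\ov{x}z) + \depth{P'} = \length(\alpha) + \depth{R} = \depth{\nu z(P)}.
\]

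The only subtlety I anticipate is the (\ref{trans:open}) case, where the resulting process $R = P'$ is no longer guarded by a restriction, so one cannot invoke the induction hypothesis on a residual of the form $\nu z(P')$; fortunately the argument here does not need the induction hypothesis at all, because the labels $\ov{x}z$ and $\ov{x}(z)$ carry the same weight under $\length$, making the bound immediate. Everything else is routine bookkeeping with the depth function.
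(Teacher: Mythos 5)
Your proof is correct and follows essentially the same route as the paper's: every transition of $\nu z(P)$ arises from (Res) or (Open), the corresponding transition of $P$ carries a label of the same $\length$ (in particular $\ov{x}z$ and $\ov{x}(z)$ are both visible, hence both of weight $1$), and the inequality follows by induction using Lemmas~\ref{lemma:trans_reduces_depth} and~\ref{lemma:trans_and_max}. The only detail left implicit is the degenerate subcase of the inductive step where $\nu z(P)\stops$ even though $\depth{P}>0$, but there the claim is trivial since $\depth{\nu z(P)}=0$.
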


\begin{lemma}
 \label{lemma:sum_of_depths_aux}
 Let $P, Q \in \finitePi$ and $\omega \in A_\tau^*$
 be such that $P  \parpi Q \trans[\omega]$
 and $\length(\omega) = \depth{P\parpi Q}$. 
 Then, there are $\omega_1, \omega_2 \in A_\tau^*$ such that $P \trans[\omega_1] $, $Q \trans[\omega_2] $ and 
 $\length(\omega_1) + \length(\omega_2) = \length(\omega)$.
\end{lemma}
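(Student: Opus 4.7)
The plan is to prove, by induction on $\length(\omega)$, the slightly stronger statement in which the maximality hypothesis $\length(\omega) = \depth{P \parpi Q}$ is dropped: for any $P, Q \in \finitePi$ and any trace $P \parpi Q \trans[\omega]$, there exist $\omega_1, \omega_2 \in A_\tau^*$ such that $P \trans[\omega_1]$, $Q \trans[\omega_2]$, and $\length(\omega_1) + \length(\omega_2) = \length(\omega)$. The stated lemma follows immediately. The base case $\omega = \varepsilon$ is trivial. For the inductive step, write $\omega = \alpha\omega'$ with $P \parpi Q \trans[\alpha] R \trans[\omega']$ and analyse the rule used to derive the first transition.

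The easy cases are (Par-L), (Par-R), (Comm-L), and (Comm-R). In the (Par-L) case, $R = P' \parpi Q$ with $P \trans[\alpha] P'$; applying the induction hypothesis to $P' \parpi Q \trans[\omega']$ yields $\omega'_1, \omega'_2$, and setting $\omega_1 = \alpha\omega'_1$, $\omega_2 = \omega'_2$ closes the case, with (Par-R) being symmetric. In the (Comm-L) case, $\alpha = \tau$ and $R = P' \parpi Q'$ with $P \trans[\ov{x}y] P'$ and $Q \trans[xy] Q'$; applying the induction hypothesis to $P' \parpi Q' \trans[\omega']$ yields $\omega'_1, \omega'_2$, and we prepend the matching visible actions to obtain $\omega_1 = \ov{x}y\omega'_1$, $\omega_2 = xy\omega'_2$. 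The length equation holds because $\length(\tau) = 2 = \length(\ov{x}y) + \length(xy)$, and the (Comm-R) case is symmetric.

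The crux of the argument is the (Close-L) case (and its symmetric counterpart), where $R = \nu z(P' \parpi Q')$ with $P \trans[\ov{x}(z)] P'$ and $Q \trans[xz] Q'$, and $\alpha = \tau$. Here one cannot invoke the induction hypothesis directly, because the residual carries an extra restriction. The key auxiliary observation is that every transition $\nu z(T) \trans[\beta] T''$ is derived either by (Res), yielding $T \trans[\beta] T'$ with $T'' = \nu z(T')$ and $z \not\in n(\beta)$, or by (Open), yielding $T \trans[\ov{x}z] T'$ with $\beta = \ov{x}(z)$ and $T'' = T'$; in both cases the corresponding transition of $T$ carries a label of the same length as $\beta$. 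Iterating this observation along the trace $\nu z(P' \parpi Q') \trans[\omega']$ produces a trace $P' \parpi Q' \trans[\omega'']$ with $\length(\omega'') = \length(\omega')$, to which the induction hypothesis applies, delivering $\omega'_1, \omega'_2$ with $\length(\omega'_1) + \length(\omega'_2) = \length(\omega')$. Setting $\omega_1 = \ov{x}(z)\omega'_1$ and $\omega_2 = xz\omega'_2$, the closing computation $\length(\omega_1) + \length(\omega_2) = 2 + \length(\omega') = \length(\tau) + \length(\omega') = \length(\omega)$ completes the proof. The principal obstacle, as indicated, lies in relating traces of $\nu z(P' \parpi Q')$ to traces of $P' \parpi Q'$ of the same length before the induction hypothesis becomes applicable.
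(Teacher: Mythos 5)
Your proof is correct, and it departs from the paper's precisely in the one case that matters. The paper keeps the maximality hypothesis $\length(\omega)=\depth{P\parpi Q}$ throughout its induction (on $\depth{P}+\depth{Q}$) and relies on it in the (Close-L) case: using Lemma~\ref{lemma:max_and_rest} and Lemma~\ref{lemma:trans_and_max} it shows $\depth{P'\parpi Q'}=\depth{\nu z(P'\parpi Q')}$, picks a maximal trace $\omega''$ of $P'\parpi Q'$, observes that $P\parpi Q\trans[\ov{x}(z)]P'\parpi Q\trans[xz\omega'']$ is again a trace of maximal length, and then falls back on the (Par-L) case. You instead prove the stronger, maximality-free statement by induction on $\length(\omega)$ and handle (Close-L) with the purely syntactic observation that every transition of $\nu z(T)$ arises by (Res) or (Open) from a transition of $T$ whose label has the same length, so any trace of $\nu z(P'\parpi Q')$ lifts to a trace of $P'\parpi Q'$ of equal length, to which the induction hypothesis applies directly. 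Your lifting claim is sound --- (Res) and (Open) are the only rules whose conclusion has a restriction as source, (Res) preserves the label, and (Open) trades $\ov{x}z$ for $\ov{x}(z)$, both of length $1$; once (Open) fires the outer restriction is gone and the remainder of the trace carries over verbatim. What your route buys is a cleaner induction measure, a more general statement, and independence from Lemmas~\ref{lemma:trans_and_max} and~\ref{lemma:max_and_rest}; what the paper's route buys is that it never needs to make the lifting observation explicit, at the cost of the depth bookkeeping and the two-inequality argument for $\depth{P'\parpi Q'}=\depth{\nu z(P'\parpi Q')}$. Either version supports Lemma~\ref{lemma:sum_of_depths} equally well.
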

\begin{proof}
We proceed by complete induction on $n = \depth{P} + \depth{Q}$.
Suppose that the property holds for parallel compositions of finite
processes such that the sum of the depths is smaller than $n > 0$.
Let $\omega = \alpha\omega'$ and $R$ be such that 
$\length(\omega)=n$ and 
$P\parpi Q \trans[\alpha] R \trans[\omega']$. 
We analyse the different ways of deriving the first transition (we omit the symmetric cases).
\begin{itemize}
 \item Case (\ref{trans:par-l}). Then $P \trans[\alpha] P'$ and $R = P' \parpi Q$.
 By Lemma~\ref{lemma:trans_and_max} and induction $\depth{P' \parpi Q} = \length(\omega')< n$
 and then there are $\omega_1$ and $\omega_2$ s.t. $P' \trans[\omega_1]$,
 $Q \trans[\omega_2]$ and $\length(\omega_1) + \length(\omega_2) = \length(\omega')$.
 Then $P \trans[\alpha\omega_1]$ and 
 $\length(\alpha\omega_1) + \length(\omega_2) = \length(\omega)$.
 
 \item 
 Case (\ref{trans:comm-l}). Then $P \trans[\ov{x}y] P'$, $Q \trans[xy] Q'$ and $R = P' \parpi Q'$ 
 and $\alpha = \tau$.
 By Lemma~\ref{lemma:trans_and_max} and induction 
 $\depth{P' \parpi Q'} = \length(\omega') < \length(\tau) + \length(\omega') = n$
 and then there are $\omega_1$ and $\omega_2$ s.t. $P' \trans[\omega_1]$,
 $Q' \trans[\omega_2]$ and $\length(\omega_1) + \length(\omega_2) = \length(\omega')$.
 Then $P \trans[\ov{x}y\omega_1]$ and $Q \trans[xy\omega_2]$ and 
 $\length(\ov{x}y\omega_1) + \length(xy\omega_2) = 
 \length(\tau) + \length(\omega') = \length(\omega)$.
  
 \item
 Case (\ref{trans:close-l}). Then $P  \trans[\ov{x}(z)] P'$,$Q   \trans[xz]Q'$, $\alpha=\tau$ and $R = \nu z(P'\parpi Q')$.
 The side condition of (\ref{trans:close-l}) allows us to use the rules (\ref{trans:par-l}) and its symmetric version, 
 then $P \parpi Q \trans[\ov{x}(z)xz] P' \parpi Q'$.
  On one hand, by Lemma~\ref{lemma:max_and_rest}, $\depth{P' \parpi Q'} \geq \depth{\nu z(P'\parpi Q')}$.
  On the other hand, $\depth{P' \parpi Q'} \leq \depth{\nu z(P'\parpi Q')}$
  because $\omega = \tau\omega'$ is a maximal execution, 
  $\length(\tau) = \length(\ov{x}zxz)$, 
  $\nu z(P'\parpi Q') \trans[\omega']$ and by Lemma~\ref{lemma:trans_and_max}.
  Then $\depth{P' \parpi Q'} = \depth{\nu z(P'\parpi Q')} < n$.
  Moreover, there is $\omega''$ such that $P' \parpi Q' \trans[\omega'']$ and $\length(\ov{x}zxz\omega'') = \depth{P \parpi Q}$.
  Then $P  \parpi Q  \trans[\ov{x}(z)] P' \parpi Q  \trans[xz\omega'']$ with 
  $\length(xz\omega'') < n$. From this point we can repeat the proof of the first case.
\end{itemize}
\end{proof}

\begin{lemma}
 \label{lemma:sum_of_depths}
 For all processes $P, Q \in \finitePi$, $\depth{P\parpi Q} = \depth P + \depth Q$.
\end{lemma}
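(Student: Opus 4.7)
The plan is to prove the equality by establishing both inequalities $\depth{P \parpi Q} \leq \depth{P} + \depth{Q}$ and $\depth{P \parpi Q} \geq \depth{P} + \depth{Q}$, after first noting that $P \parpi Q \in \finitePi$, so that all depths involved are natural numbers and the relevant suprema are attained by actual maximal executions. Finiteness of $P \parpi Q$ follows because every transition of $P \parpi Q$ is derived using at least one premise of the form $P \trans[\alpha']$ or $Q \trans[\alpha']$, so the length of any execution of $P \parpi Q$ is bounded by the sum of the uniform bounds witnessing $P, Q \in \finitePi$.

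For the upper bound, I would select an execution $P \parpi Q \trans[\omega]$ with $\length(\omega) = \depth{P \parpi Q}$ and apply Lemma~\ref{lemma:sum_of_depths_aux} to obtain sequences $\omega_1, \omega_2$ with $P \trans[\omega_1]$, $Q \trans[\omega_2]$, and $\length(\omega_1) + \length(\omega_2) = \length(\omega)$. Since $P \in \finitePi$, the execution $\omega_1$ of $P$ can be extended to a maximal one, so $\length(\omega_1) \leq \depth{P}$; similarly $\length(\omega_2) \leq \depth{Q}$. Summing yields the desired inequality.

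For the lower bound, I would pick maximal executions $P \trans[\omega_1] P' \stops$ and $Q \trans[\omega_2] Q' \stops$ realising $\length(\omega_1) = \depth{P}$ and $\length(\omega_2) = \depth{Q}$. By Convention~\ref{conv:names2}, the bound names used in $\omega_1$ and $\omega_2$ may be chosen fresh with respect to all other entities, so the side conditions of (\ref{trans:par-l}) and its symmetric counterpart are satisfied at every step, and repeated application of these two rules yields the interleaved execution $P \parpi Q \trans[\omega_1] P' \parpi Q \trans[\omega_2] P' \parpi Q'$. Since $P' \stops$ and $Q' \stops$, inspection of Table~\ref{tb:trans_rules} shows that $P' \parpi Q' \stops$ too: the parallel rules would require a transition of one component, while the communication and close rules would require transitions of both. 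Hence the combined execution is maximal, and $\depth{P \parpi Q} \geq \length(\omega_1) + \length(\omega_2) = \depth{P} + \depth{Q}$. I do not anticipate a serious obstacle, since Lemma~\ref{lemma:sum_of_depths_aux} already carries the main technical load; the only delicate point is the bound-name bookkeeping, which is handled uniformly by the name convention, and the assignment of weight $2$ to $\tau$-labels in $\length$ is what makes the arithmetic align with the ``one $\tau$ absorbs two visible actions'' pattern implicit in Lemma~\ref{lemma:sum_of_depths_aux}.
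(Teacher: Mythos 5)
Your proof is correct and follows essentially the same route as the paper: the upper bound via Lemma~\ref{lemma:sum_of_depths_aux}, and the lower bound by interleaving maximal executions of $P$ and $Q$ using (\ref{trans:par-l}) and Convention~\ref{conv:names2}. You merely spell out in more detail the points the paper leaves implicit (that $P' \parpi Q'$ is stuck and that executions of a finite process extend to maximal ones), which is fine.
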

\begin{proof}
By Lemma~\ref{lemma:sum_of_depths_aux} we can ensure $\depth{P\parpi Q} \leq \depth P + \depth Q$.
On the other hand, by Convention~\ref{conv:names2}, we have that  
$P  \trans[\omega]$ or $Q \trans[\omega]$ implies $P  \parpi Q \trans[\omega]$.
This allows us to conclude $\depth{P\parpi Q} \geq \depth P + \depth Q$ and therefore 
$\depth{P\parpi Q} = \depth P + \depth Q$.
\end{proof}

 \begin{lemma}
  \label{lemma:finiteness_is_conservative}
  For all $P, Q \in \Pi$,  $P, Q \in \finitePi$ iff $P \parpi Q \in \finitePi$.
 \end{lemma}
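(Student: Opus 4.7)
The plan is to prove the two implications separately.

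For the backward direction, by Convention~\ref{conv:names2} we may choose the bound names occurring in any execution of $P$ to avoid the free names of $Q$, so iterated application of rule (\ref{trans:par-l}) gives $P \trans[\omega] P_0 \Rightarrow P \parpi Q \trans[\omega] P_0 \parpi Q$ for every $\omega \in A_\tau^*$ (and symmetrically for $Q$ via the mirror of (\ref{trans:par-l})). Hence an execution of $P$ (or of $Q$) of unbounded length would yield an execution of $P \parpi Q$ of the same length, contradicting $P\parpi Q \in \finitePi$.

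For the forward direction, assume $P, Q \in \finitePi$. I would prove by induction on $\length(\omega)$ the strengthened statement that for every finite sequence of names $\vec z$ and every $\omega$ with $\nu\vec z\,(P \parpi Q) \trans[\omega]$, one has $\length(\omega) \leq \depth{P} + \depth{Q}$. Because $\depth{P}$ and $\depth{Q}$ are natural numbers, this bound places $\nu\vec z\,(P \parpi Q)$, and in particular $P \parpi Q$, in $\finitePi$. For the inductive step, every transition $\nu\vec z\,(P \parpi Q) \trans[\alpha] R$ is traced back via outer applications of (\ref{trans:res}) and (\ref{trans:open}) to a transition of $P \parpi Q$, which in turn is derived by exactly one of (\ref{trans:par-l}), its mirror, (\ref{trans:comm-l}), (\ref{trans:close-l}), or one of the two symmetric variants. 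In each case the residual $R$ again has the form $\nu\vec z'\,(P' \parpi Q')$, and $P$ reaches $P'$ while $Q$ reaches $Q'$ through projected sub-executions whose weights sum to exactly $\length(\alpha)$ (using $\length(\ov{x}y) + \length(xy) = 2 = \length(\tau)$ in the communication cases). The definition of $\depth{\cdot}$ then yields $\length(\alpha) + \depth{P'} + \depth{Q'} \leq \depth{P} + \depth{Q}$, and applying the inductive hypothesis to $R$ closes the argument.

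The main obstacle is the extra restriction introduced by (\ref{trans:close-l}), which forces the residual to sit inside an additional $\nu$-binder; I handle it by inducting over the slightly more general form $\nu\vec z\,(P\parpi Q)$ rather than over a bare parallel composition, so that the auxiliary (\ref{trans:res}) and (\ref{trans:open}) steps needed to reach the inner parallel composition are transparent to the bookkeeping. With this setup the depth inequality required for the inductive step follows directly from the definition of $\depth{\cdot}$, since each projected sub-execution of $P$ (respectively $Q$) contributes exactly its weight to $\depth{P}$ (respectively $\depth{Q}$), and those weights sum to $\length(\alpha)$.
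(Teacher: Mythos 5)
Your proof is correct and follows essentially the route the paper takes: the backward direction is exactly the observation, justified by Convention~\ref{conv:names2} and rule (\ref{trans:par-l}), that the paper itself invokes in the proof of Lemma~\ref{lemma:sum_of_depths}, and the forward direction's case analysis (Par/Comm/Close, with the generalisation to $\nu\vec z\,(P\parpi Q)$ to absorb the restriction created by (\ref{trans:close-l})) mirrors the structure of the proof of Lemma~\ref{lemma:sum_of_depths_aux}. The only point worth making explicit is that $\depth{P'}$ is attained by some terminating execution of $P'$ --- the set of weighted lengths of such executions is a non-empty bounded subset of $\Nat$ because $P'$ is finite --- since that is what licenses the inequality $\depth{P}\geq\length(\alpha)+\depth{P'}$ used in your inductive step.
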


\begin{theorem}
 \label{th:parameters_depths_lesser_than_par_depth_strong_case} 
 If $P, Q, R \in \finitePi$, $P \nbisim \nullproc$, $Q \nbisim \nullproc$ and 
 $P \parpi Q \bisim R$ then $\depth P < \depth R$ and $\depth Q < \depth R$.
\end{theorem}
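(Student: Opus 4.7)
The plan is to chain together the preceding lemmas and the additivity theorem: the core observation is that under the stated hypotheses, $\depth{R}$ equals $\depth{P} + \depth{Q}$, and both summands are strictly positive.

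First, I would note that $P, Q \in \finitePi$ implies $P \parpi Q \in \finitePi$ by Lemma~\ref{lemma:finiteness_is_conservative}, so all depths involved are natural numbers (in particular, $R \in \finitePi$ is given). Then, from $P \parpi Q \bisim R$ and Theorem~\ref{th:bisim_implies_same_depth} I obtain $\depth{P \parpi Q} = \depth{R}$. Applying Lemma~\ref{lemma:sum_of_depths} to the left-hand side yields
\[
\depth{R} \;=\; \depth{P} + \depth{Q}.
\]

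Second, I would invoke Lemma~\ref{lemma:PnbisimNullproc} twice: since $P \nbisim \nullproc$ and $P \in \finitePi$, we have $\depth{P} > 0$, and similarly $\depth{Q} > 0$. Substituting these strict positivity facts into the displayed equation gives at once
\[
\depth{P} \;<\; \depth{P} + \depth{Q} \;=\; \depth{R},
\qquad
\depth{Q} \;<\; \depth{P} + \depth{Q} \;=\; \depth{R},
\]
which is the desired conclusion.

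I do not anticipate any real obstacle here: the theorem is essentially a corollary packaging three earlier results (the bisimilarity invariance of depth, the additivity of depth under parallel composition, and the fact that non-trivial finite processes have positive depth). The substantive work has already been done in Lemmas~\ref{lemma:PnbisimNullproc}, \ref{lemma:sum_of_depths}, and Theorem~\ref{th:bisim_implies_same_depth}; the proof itself should be a few lines of citation and arithmetic.
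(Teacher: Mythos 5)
Your proof is correct and follows exactly the paper's own argument: it chains Theorem~\ref{th:bisim_implies_same_depth}, Lemma~\ref{lemma:sum_of_depths}, and Lemma~\ref{lemma:PnbisimNullproc} in the same way to obtain $\depth{R} = \depth{P} + \depth{Q}$ with both summands positive. Nothing further is needed.
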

\begin{proof}
 By Lemma~\ref{lemma:PnbisimNullproc}, $\depth P > 0$, $\depth Q > 0$.
 By Theorem~\ref{th:bisim_implies_same_depth}, $\depth R = \depth{P \parpi Q}$.
 By Lemma~\ref{lemma:sum_of_depths},
 $\depth R = \depth{P} + \depth Q$ and
 we conclude $\depth P < \depth R$ and $\depth Q < \depth R$.
\end{proof}

\subsection{Unique decomposition}
\label{subsec:upd}

The commutative monoid associated with $\finitePi$ modulo $\bisim$ is defined by 
\begin{itemize}
 \item $\spp = \{[P]_{\bisim}: P \in \finitePi\}$ where $[P]_{\bisim} = \{P' : P' \bisim P\}$
 \item $e = \sec{\nullproc} \in \spp$ .
 \item ${\parallel} : \spp \times \spp \to \spp$ is such that $\sec{P} \parallel \sec{Q} = \sec{P \parpi Q}$
\end{itemize}
By Lemma~\ref{lemma:finiteness_is_conservative} we have that the definition of $\parallel$ is sound.
By Theorem~\ref{th:bisim_implies_same_depth} we have that all $P' \in \sec{P}$
have the same depth. Then we can lift function \textit{depth} to $\spp$, i.e. $\depth{\sec{P}} = \depth{P}$.

\begin{lemma}
$\spp$ with neutral element $\sec{\nullproc}$ and binary operation {$\parallel$}
is a commutative monoid.
I.e., ${\parallel} \subseteq \spp \times \spp$ satisfies the associativity, commutativity and identity properties.
\end{lemma}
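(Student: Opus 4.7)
The plan is to lift the three monoid laws from the level of processes (modulo $\bisim$) to the level of equivalence classes. The verification is essentially routine once two preliminary observations are in place, so I would structure the argument accordingly.

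First, I would verify that the operation $\parallel$ on $\spp$ is well-defined, independently of the choice of representatives. Concretely, I need to show that if $P \bisim P'$ and $Q \bisim Q'$ then $P \parpi Q \bisim P' \parpi Q'$. This is precisely the compatibility of strong bisimilarity with parallel composition that is explicitly recalled in Section~\ref{sec:pi}. Combined with Lemma~\ref{lemma:finiteness_is_conservative}, which guarantees that $P \parpi Q \in \finitePi$ whenever $P, Q \in \finitePi$, this shows that $\sec{P} \parallel \sec{Q} = \sec{P \parpi Q}$ is a well-defined total operation on $\spp$.

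Next, I would establish the three laws at the level of processes, namely
\begin{itemize}
\item $P \parpi (Q \parpi R) \bisim (P \parpi Q) \parpi R$ (associativity),
\item $P \parpi Q \bisim Q \parpi P$ (commutativity),
\item $P \parpi \nullproc \bisim P \bisim \nullproc \parpi P$ (identity).
\end{itemize}
Each of these is a standard property of parallel composition in the $\pi$-calculus with respect to strong bisimilarity, and can be verified by exhibiting an appropriate bisimulation relation; for example, for associativity one takes $R = \{(P \parpi (Q \parpi R),(P \parpi Q) \parpi R) \mid P, Q, R \in \Pi\}$ and checks by case analysis on the rules in Table~\ref{tb:trans_rules} that $R$ (closed under symmetry) is a strong bisimulation. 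The identity law relies on the observation that $\nullproc$ has no transitions, so every transition of $P \parpi \nullproc$ must come from (\ref{trans:par-l}) applied to a transition of $P$, and conversely.

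Finally, I would conclude by lifting each of these laws to $\spp$: using well-definedness, the equalities
\[
\sec{P} \parallel (\sec{Q} \parallel \sec{R}) = \sec{P \parpi (Q \parpi R)} = \sec{(P \parpi Q) \parpi R} = (\sec{P} \parallel \sec{Q}) \parallel \sec{R},
\]
and similarly for commutativity and identity, follow immediately from the corresponding bisimilarities. I do not expect any real obstacle here: the only non-trivial ingredient is the compatibility of $\bisim$ with $\parpi$, which has already been cited from \cite{SW01}, and the monoid laws themselves are standard facts about $\pi$-calculus parallel composition that are routinely proved by exhibiting bisimulations.
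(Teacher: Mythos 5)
Your proof is correct and follows exactly the route the paper leaves implicit: the lemma is stated without proof, the two ingredients you use (compatibility of $\bisim$ with $\parpi$, and Lemma~\ref{lemma:finiteness_is_conservative} for soundness of the operation on $\spp$) are exactly the ones the paper supplies just before the statement, and the monoid laws for $\parpi$ up to $\bisim$ are standard. One small caveat: the relation you exhibit for associativity is not literally a bisimulation, because a $\tau$-step derived with (\ref{trans:close-l}) leads to residuals of the form $\nu z(P' \parpi Q') \parpi R$ versus $\nu z(P' \parpi (Q' \parpi R))$, which are not in your relation; you need to close it under restriction contexts, or additionally invoke the scope-extension law $\nu z(A) \parpi B \bisim \nu z(A \parpi B)$ for $z \notin \fn(B)$ (available under Convention~\ref{conv:names2}), after which the case analysis goes through.
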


In order to use the Theorem~\ref{th:unique_decomposition_strong_case}
we need to define on $\spp$ a decomposition order.
In \cite{LO05,Luttik16}, it is shown that the transition relation
directly induces a decomposition order on a commutative monoid of processes.
In the case of the $\pi$-calculus, however, the order induced by the
transition relation cannot be directly used, as is illustrated by the
following example: Define a binary relation ${\rightsquigarrow} \subseteq \spp \times \spp$ by 
$\sec{R} \rightsquigarrow \sec{S}$ if there is $R' \in \sec{R}$ and $S' \in \sec{S}$ such that 
$R' \trans[\alpha] S'$.	We denote the inverse of the reflexive-transitive closure of ${\rightsquigarrow}$ 
by ${\porder_{\rightsquigarrow}}$, i.e., ${\porder_{\rightsquigarrow}} = ({\rightsquigarrow}^*)^{-1}$.
The order ${\porder_{\rightsquigarrow}}$ is not precompositional. 
Consider the processes $P = \nu z.(\ov{a}z.\ov{z}c.\ov{c}a)$ and $Q = a(x).x(y).\ov{y}b$. 
Then 
\begin{itemize}
 \item 
  $P \parpi Q = \nu z.(\ov{a}z.\ov{z}c.\ov{c}a) \parpi a(x).x(y).\ov{y}b 
  \trans[\tau] \nu z.(\ov{z}c.\ov{c}a \parpi z(y).\ov{y}b) = R$ and therefore 
 \item $\sec{P} \parallel \sec{Q} = \sec{P \parallel Q} \rightsquigarrow 
        \sec{R} = \sec{\nu z.(\ov{z}c.\ov{c}a \parpi z(y).\ov{y}b)}$.
 \item \remarkBL{Note that we should actually argue that $\nu
     z(\ov{c}a \parpi \ov{c}b)$ is indecomposable.}
  Note that $R$ executes only one transition, i.e.  
  $\nu z.(\ov{z}c.\ov{c}a \parpi z(y).\ov{y}b) \trans[\tau] \nu z(\ov{c}a \parpi \ov{c}b)$,  then
  it is clear that there are no processes $P'$ and $Q'$ s.t. 
  $$\sec{\nu z.(\ov{a}z.\ov{z}c.\ov{c}a)} \rightsquigarrow^* \sec{P'} \qquad 	
  \sec{a(x).x(y).\ov{y}b} \rightsquigarrow^* \sec{Q'} \qquad 
  \sec{P'} \parallel \sec{Q'} = \sec{\nu z(\ov{z}c.\ov{c}a \parpi z(y).\ov{y}b)}$$ 
\end{itemize}

The particularity of this example is the \emph{scope extrusion}.
We need to define an order based on a fragment of the transition relation that avoids this phenomenon.
We shall define the partial order ${\porder}$ over $\spp$ as the
reflexive-transitive closure of the relation ${\trans} \subseteq \spp
\times \spp$, which is, in turn, defined as follows: %
\begin{align*}
 {\trans_0} & = 
 \{(\sec{P},\sec{Q}):  P \trans[\alpha] Q, \alpha \in A_\tau \text{ and } \not \exists 
 P_0, P_1 \in \finitePi \text{ s.t. } P_0 \nbisim \nullproc,  P_1 \nbisim \nullproc, P_0 \parpi P_1 \bisim P \} \\[1ex]
{\trans_{k+1}}  
 & = \{(\sec{P_0 \parpi P_1}, \sec{Q_0 \parpi P_1} ): \sec{P_0} \trans_k \sec{Q_0}, P_1 \in \finitePi\} \\
 &\cup \{(\sec{P_0 \parpi P_1},\sec{P_0 \parpi Q_1}) : \sec{P_1} \trans_k \sec{Q_1}, P_0 \in \finitePi\} \\[1ex]
{\trans} & = \bigcup_{k=0}^{\infty} \trans_{n}\enskip.
\end{align*}

The partial order ${\porder}$ is defined as the inverse of the reflexive-transitive closure of ${\trans}$  
i.e., ${\porder} = ({\trans}^*)^{-1}$. We write $\sec{P} \prec \sec{Q}$ if
$\sec{P} \porder \sec{Q}$ and $\sec{P} \neq \sec{Q}$.
Notice that the definition of $\trans$  avoids any kind of communications between the arguments of the parallel operator,
this ensures that the scope extrusion is also avoided.

\begin{lemma}
 \label{lemma:trans_reduces_depth_sec}
 If $\sec{P} \trans \sec{Q}$ then $\depth{\sec Q} < \depth{\sec P}$.
\end{lemma}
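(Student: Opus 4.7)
The natural approach is to unfold the inductive definition of $\trans$: since $\sec{P} \trans \sec{Q}$ means $\sec{P} \trans_k \sec{Q}$ for some $k \in \Nat$, I would proceed by induction on $k$. Before starting I would note that, thanks to Theorem~\ref{th:bisim_implies_same_depth}, the lifted function $\depth{\sec{\cdot}}$ is well-defined on $\spp$, so it suffices to exhibit, for each step of $\trans$, concrete representatives whose depths strictly decrease.

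For the base case $k = 0$, the definition of $\trans_0$ gives representatives $P' \in \sec{P}$ and $Q' \in \sec{Q}$ (in fact the $P$ and $Q$ from the definition themselves) together with a single transition $P' \trans[\alpha] Q'$ for some $\alpha \in A_\tau$. Lemma~\ref{lemma:trans_reduces_depth} then yields $\depth{P'} > \depth{Q'}$, and invoking Theorem~\ref{th:bisim_implies_same_depth} I conclude $\depth{\sec{Q}} = \depth{Q'} < \depth{P'} = \depth{\sec{P}}$.

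For the inductive step, assume the result holds for $\trans_k$ and suppose $\sec{P} \trans_{k+1} \sec{Q}$. By definition, up to symmetry, there exist finite processes $P_0, Q_0, P_1$ with $P \bisim P_0 \parpi P_1$, $Q \bisim Q_0 \parpi P_1$ and $\sec{P_0} \trans_k \sec{Q_0}$. The induction hypothesis gives $\depth{\sec{Q_0}} < \depth{\sec{P_0}}$. Using Lemma~\ref{lemma:sum_of_depths} on both sides, together with Theorem~\ref{th:bisim_implies_same_depth} to pass between the representatives and the classes, I obtain
\[
\depth{\sec{Q}} \;=\; \depth{Q_0} + \depth{P_1} \;<\; \depth{P_0} + \depth{P_1} \;=\; \depth{\sec{P}},
\]
which closes the induction. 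The symmetric clause of the definition of $\trans_{k+1}$ is handled identically.

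I do not anticipate a substantive obstacle: the hard work has already been done in establishing additivity of depth under parallel composition (Lemma~\ref{lemma:sum_of_depths}), bisimilarity-invariance of depth (Theorem~\ref{th:bisim_implies_same_depth}) and the fact that a single transition strictly lowers depth (Lemma~\ref{lemma:trans_reduces_depth}). The only point requiring minor care is to keep a clean separation between processes and their equivalence classes when invoking these lemmas, which is why the induction is stated on the relations $\trans_k$ on $\spp$ rather than on the underlying transition relation on $\finitePi$.
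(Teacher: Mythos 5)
Your proof is correct and follows the route the paper intends: induction on the index $k$ of $\trans_k$, with Lemma~\ref{lemma:trans_reduces_depth} (a single transition strictly decreases depth) settling the base case and Lemma~\ref{lemma:sum_of_depths} (additivity of depth under parallel composition) together with Theorem~\ref{th:bisim_implies_same_depth} (invariance of depth under $\bisim$) carrying the inductive step. No gaps; the side condition on indecomposability in $\trans_0$ plays no role here, as you implicitly observe.
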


\begin{lemma}
\label{lemma:partial_order_strong_case}
 ${\porder}$ is a partial order. 
\end{lemma}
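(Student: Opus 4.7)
The plan is to verify the three defining properties of a partial order (reflexivity, transitivity, antisymmetry) for the relation ${\porder} = ({\trans}^*)^{-1}$ on $\spp$.

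Reflexivity and transitivity are immediate from the definition: since ${\trans}^*$ is by definition the reflexive-transitive closure of ${\trans}$, its inverse is likewise reflexive and transitive. So the substantive part of the proof is antisymmetry, i.e., showing that $\sec{P} \porder \sec{Q}$ and $\sec{Q} \porder \sec{P}$ imply $\sec{P} = \sec{Q}$.

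For antisymmetry I would lift Lemma~\ref{lemma:trans_reduces_depth_sec} from single ${\trans}$-steps to sequences. Specifically, I would first observe that if $\sec{Q} \trans^* \sec{P}$ via a non-empty chain $\sec{Q} = \sec{R_0} \trans \sec{R_1} \trans \cdots \trans \sec{R_n} = \sec{P}$ with $n \geq 1$, then repeated application of Lemma~\ref{lemma:trans_reduces_depth_sec} yields the strict inequality $\depth{\sec{P}} < \depth{\sec{Q}}$ (note that $\depth{\sec{\cdot}}$ is well defined for elements of $\finitePi$ by Theorem~\ref{th:bisim_implies_same_depth}). Equivalently, whenever $\sec{P} \porder \sec{Q}$ and $\sec{P} \neq \sec{Q}$, we have $\depth{\sec{P}} < \depth{\sec{Q}}$.

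From this it follows immediately that $\porder$ is antisymmetric: if both $\sec{P} \porder \sec{Q}$ and $\sec{Q} \porder \sec{P}$ held with $\sec{P} \neq \sec{Q}$, then the strict inequality above would give $\depth{\sec{P}} < \depth{\sec{Q}} < \depth{\sec{P}}$, a contradiction. I do not anticipate any real obstacle; the only point requiring care is to ensure that ${\trans}$ is well-defined as a relation on equivalence classes (so that talking about $\depth{\sec{\cdot}}$ along a ${\trans}$-chain is justified), which is guaranteed by Theorem~\ref{th:bisim_implies_same_depth} together with the fact that $\bisim$ is a congruence for parallel composition.
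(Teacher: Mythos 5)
Your proof is correct and follows essentially the same route as the paper's: reflexivity and transitivity are immediate from the definition of ${\porder}$ as the inverse of a reflexive-transitive closure, and antisymmetry is obtained by iterating Lemma~\ref{lemma:trans_reduces_depth_sec} along a non-empty ${\trans}$-chain to derive a strict depth decrease and hence a contradiction. No gaps.
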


\begin{proof}
 We have to prove that ${\porder}$ is reflexive, antisymmetric, and transitive. 
 ${\porder}$ is reflexive and transitive because it is the reflexive-transitive closure of ${\trans}$.
 To prove that ${\porder}$ is antisymmetric notice that $\sec{P} \prec \sec{Q}$ implies
 $\sec{Q} = \sec{P_n} \trans \ldots \trans \sec{P_1} \trans \sec{P_0} = \sec{P}$ for $n > 0$
 and then, by Lemma~\ref{lemma:trans_reduces_depth_sec}, $\depth{\sec{P}} < \depth{\sec{Q}}$.
 Therefore $\sec{P} \porder \sec{Q}$ and $\sec{Q} \porder \sec{P}$ implies $\sec{P} = \sec{Q}$.
\end{proof}

In Lemma~\ref{lemma:dec_order_strong_case}, we prove that $\porder$ is a decomposition order. 
To prove this result we need to add a last auxiliary result, Lemma~\ref{lemma:there_is_always_a_path}. 

\begin{lemma}
 \label{lemma:there_is_always_a_path}
 If $P \in \finitePi$ and $\depth P > 0$ then there is $Q$ s.t. $\sec{P} \trans \sec{Q}$.
\end{lemma}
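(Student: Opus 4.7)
My plan is to proceed by strong induction on $\depth{P}$, which is a well-defined natural number since $P \in \finitePi$. The key dichotomy is whether $P$ is bisimilar to a non-trivial parallel composition or not; the two cases feed different clauses of the definition of $\trans$.

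First I would handle the \emph{indecomposable} case: suppose there do not exist $P_0, P_1 \in \finitePi$ with $P_0, P_1 \nbisim \nullproc$ and $P_0 \parpi P_1 \bisim P$. Because $\depth P > 0$, the definition of depth guarantees at least one transition $P \trans[\alpha] P'$. Hence the pair $(P, P')$ witnesses $\sec{P} \trans_0 \sec{P'}$, so $\sec{P} \trans \sec{P'}$ and we may take $Q = P'$.

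Next I would handle the \emph{decomposable} case: $P \bisim P_0 \parpi P_1$ with $P_0, P_1 \nbisim \nullproc$. By Theorem~\ref{th:bisim_implies_same_depth}, $P_0 \parpi P_1 \in \finitePi$, and Lemma~\ref{lemma:finiteness_is_conservative} then gives $P_0, P_1 \in \finitePi$. Lemma~\ref{lemma:PnbisimNullproc} ensures $\depth{P_0} > 0$, and Theorem~\ref{th:parameters_depths_lesser_than_par_depth_strong_case} yields $\depth{P_0} < \depth{P_0 \parpi P_1} = \depth{P}$. So the induction hypothesis applies to $P_0$ and delivers some $Q_0$ with $\sec{P_0} \trans \sec{Q_0}$, i.e., $\sec{P_0} \trans_k \sec{Q_0}$ for some $k \geq 0$. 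The left-component clause of the definition of $\trans_{k+1}$ then gives $\sec{P_0 \parpi P_1} \trans_{k+1} \sec{Q_0 \parpi P_1}$, and since $\sec{P} = \sec{P_0 \parpi P_1}$ we may take $Q = Q_0 \parpi P_1$.

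The main obstacle, I expect, is not the case split itself but checking that everything lives in $\finitePi$ as required by the definition of $\trans_{k+1}$ (which demands the spectator component be a finite process) and that the decomposable case really admits the hypotheses of Theorem~\ref{th:parameters_depths_lesser_than_par_depth_strong_case}; once the closure properties of $\finitePi$ under bisimilarity and under the components of a parallel composition are in hand, the induction goes through cleanly and does not require any analysis of how $P$ transitions, only of how $P_0$ does.
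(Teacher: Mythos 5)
Your proof is correct and follows essentially the same route as the paper's: complete induction on $\depth{P}$, with the indecomposable case handled by $\trans_0$ via an arbitrary transition of $P$, and the decomposable case handled by applying the induction hypothesis to a component $P_0$ (using Lemma~\ref{lemma:PnbisimNullproc} and Theorem~\ref{th:parameters_depths_lesser_than_par_depth_strong_case}) and lifting through $\trans_{k+1}$. The extra finiteness bookkeeping you flag is harmless but largely moot, since membership of $P_0, P_1$ in $\finitePi$ is already part of the case hypothesis as the definition of $\trans_0$ is phrased.
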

\begin{proof}
We proceed by complete induction over $n = \depth P$. 
Assume that the hypothesis holds for values less than $n \geq 1$.    
Suppose there are no $P_0, P_1 \in \finitePi$ such that $P \bisim P_0 \parpi P_1$, $P_0 \nbisim \nullproc$
and $P_1 \nbisim \nullproc$. 
Given that $n \geq 1$ then there is $\alpha \in A_\tau$ s.t. $P \trans[\alpha] P'$.
Then $\sec P \trans_0 \sec{P'}$ and therefore $\sec P \trans \sec{P'}$. Finally, we can define $\sec{Q} = \sec{P'}$.

Suppose there are $P_0, P_1 \in \finitePi$ such that $P \bisim P_0 \parpi P_1$, $P_0 \nbisim \nullproc$
and $P_1 \nbisim \nullproc$, then $\sec{P} = \sec{P_0 \parpi P_1}$. 
By Theorem~\ref{th:parameters_depths_lesser_than_par_depth_strong_case}
$\depth{P_0} < \depth{P}$ and by Lemma~\ref{lemma:PnbisimNullproc} $\depth{P_0} > 0$.
By induction there is $Q_0$ s.t. $\sec{P_0} \trans \sec{Q_0}$ and therefore there is $k$ s.t.  
$\sec{P_0} \trans_k \sec{Q_0}$
By definitions of ${\trans_{k+1}}$ and $\trans$, 
$\sec{P_0 \parpi P_1} \trans_{k+1} \sec{Q_0 \parpi P_1}$ and 
$\sec{P_0 \parpi P_1} \trans \sec{Q_0 \parpi P_1}$. 
Therefore if we define $\sec{Q} = \sec{Q_0 \parallel P_1}$ the proof is complete.
\end{proof}

\begin{lemma}
 \label{lemma:dec_order_strong_case}
 ${\porder} \subseteq \spp \times \spp$ is a decomposition order.
\end{lemma}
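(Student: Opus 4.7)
The plan is to verify in turn the five clauses of Definition~\ref{def:wdo}. Three of them follow quickly from the depth machinery built up in Section~\ref{subsec:depth}. Well-foundedness is immediate from Lemma~\ref{lemma:trans_reduces_depth_sec}: each $\trans$-step strictly decreases the natural-number-valued depth, so there can be no infinite descending $\porder$-chain. That $\sec{\nullproc}$ is the least element I would prove by induction on $\depth{P}$, with the base case using the contrapositive of Lemma~\ref{lemma:PnbisimNullproc} and the inductive step invoking Lemma~\ref{lemma:there_is_always_a_path} to produce a strictly smaller successor. The Archimedean clause follows from additivity of depth (Lemma~\ref{lemma:sum_of_depths}) combined with Lemma~\ref{lemma:trans_reduces_depth_sec}: if $\sec{P}^n \porder \sec{Q}$ for every $n$, then $n \cdot \depth{P} = \depth{P^n} \leq \depth{Q}$ for all $n$, which forces $\depth{P} = 0$ and hence $\sec{P} = e$.

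Strict compatibility is essentially built into the definition of $\trans_{k+1}$. Given $\sec{X} \prec \sec{Y}$, there is a chain $\sec{Y} = \sec{Y_n} \trans \cdots \trans \sec{Y_0} = \sec{X}$ of positive length $n$. Each individual step $\sec{Y_{i+1}} \trans_{k_i} \sec{Y_i}$ lifts by the first clause of the definition of $\trans_{k_i+1}$ to a step $\sec{Y_{i+1} \parpi Z} \trans_{k_i+1} \sec{Y_i \parpi Z}$. Concatenating these gives $\sec{Y \parpi Z} \trans^n \sec{X \parpi Z}$ with $n>0$, and Lemma~\ref{lemma:trans_reduces_depth_sec} ensures that the two endpoints differ in depth, hence $\sec{X \parpi Z} \prec \sec{Y \parpi Z}$.

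The main obstacle is precompositionality. I would prove, by induction on the length $n$ of a chain $\sec{A \parpi B} \trans^n \sec{X}$ witnessing $\sec{X} \porder \sec{A} \parallel \sec{B}$, that there exist $A',B' \in \finitePi$ with $\sec{X} = \sec{A' \parpi B'}$, $\sec{A'} \porder \sec{A}$, and $\sec{B'} \porder \sec{B}$. The base case is trivial. For the inductive step, consider the first transition $\sec{A \parpi B} \trans_j \sec{Y}$. If $A \bisim \nullproc$ (the case $B \bisim \nullproc$ being symmetric), then $\sec{A \parpi B} = \sec{B}$ and the choice $A' := \nullproc$, $B' := X$ works: $\sec{A'} \porder \sec{A}$ is trivial, and $\sec{B'} = \sec{X} \porder \sec{A \parpi B} = \sec{B}$. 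Otherwise both components are non-trivial, so the defining side-condition of $\trans_0$ forces $j \geq 1$; the step is supplied by witnesses $A \parpi B \bisim P_0 \parpi P_1$, $\sec{P_0} \trans_{j-1} \sec{Q_0}$, $\sec{Y} = \sec{Q_0 \parpi P_1}$ (the symmetric variant is analogous), and applying the induction hypothesis to the remaining chain $\sec{Q_0 \parpi P_1} \trans^{n-1} \sec{X}$ yields $\sec{X} = \sec{Q_0' \parpi P_1'}$ with $\sec{Q_0'} \porder \sec{P_0}$ and $\sec{P_1'} \porder \sec{P_1}$.

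The delicate final step is to transfer these bounds from $P_0,P_1$ back to $A,B$, which amounts to a cancellation claim for the witnessing bisimulation $A \parpi B \bisim P_0 \parpi P_1$. I expect this to reduce to a case analysis of the $\pi$-calculus rules that can underlie the transition witnessing $\sec{P_0} \trans_{j-1} \sec{Q_0}$: because the definition of $\trans$ deliberately restricts to transitions that do not fuse the two parallel components---in particular, scope extrusion via (\ref{trans:close-l}) is excluded---the witness $P_0,P_1$ can be chosen so that $\sec{P_0} = \sec{A}$ and $\sec{P_1} = \sec{B}$ (or symmetric), which closes the argument.
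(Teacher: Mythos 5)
Your overall plan coincides with the paper's proof of this lemma: the same five-clause verification against Definition~\ref{def:wdo}, with well-foundedness, the least-element property and the Archimedean property obtained from Lemmas~\ref{lemma:trans_reduces_depth_sec}, \ref{lemma:PnbisimNullproc}, \ref{lemma:there_is_always_a_path} and~\ref{lemma:sum_of_depths} exactly as you describe; strict compatibility by lifting each step of the chain through the $\trans_{k+1}$ clauses; and precompositionality by induction on the length of the chain witnessing $\sec{P} \porder \sec{Q} \parallel \sec{R}$, after the same preliminary case split on which components are bisimilar to $\nullproc$.

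The divergence is in the last step of precompositionality, and there your argument has a gap. The paper reads the definition of ${\trans}$ as saying directly that a step out of $\sec{Q \parpi R}$ with $Q, R \nbisim \nullproc$ has the form $\sec{Q \parpi R} \trans \sec{T \parpi R}$ with $\sec{Q} \trans \sec{T}$ (or symmetrically), and then closes using the induction hypothesis together with transitivity of $\porder$ (from $\sec{T'} \porder \sec{T} \porder \sec{Q}$). You instead allow the witnessing decomposition to be some $P_0 \parpi P_1$ with $A \parpi B \bisim P_0 \parpi P_1$ but $\sec{P_0}, \sec{P_1}$ possibly different from $\sec{A}, \sec{B}$, apply the induction hypothesis relative to $P_0, P_1$, and are then left having to transfer the resulting bounds back to $A$ and $B$. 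The justification you sketch --- that because (\ref{trans:close-l}) is excluded the witness ``can be chosen'' with $\sec{P_0} = \sec{A}$ --- does not address the actual difficulty: the definition of $\trans_{k+1}$ quantifies existentially over \emph{all} ways of writing the class $\sec{A \parpi B}$ as a parallel composition, and at this stage nothing rules out a second, unrelated decomposition of that class (excluding such ambiguity is precisely what the unique-decomposition theorem being proved is for), so no case analysis of the SOS rules underlying the transition can yield the alignment you need. As written, that transfer step is an unproven cancellation claim, not a routine verification. If you instead follow the paper and take the $\trans_{k+1}$-step out of $\sec{Q} \parallel \sec{R}$ to be, by definition, a step in one of the two given components leaving the other fixed, the difficulty disappears and your induction closes exactly as in the paper.
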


\begin{proof}
 \begin{enumerate}
  \item ${\porder}$ is well-founded. 
   We have to prove that every non-empty subset of $\spp$ has a $\porder$-minimal element. 
  Let $X \subseteq \spp$ with $X \neq \emptyset$. 
  Let $\sec{P}$ be s.t. $\sec{P} \in X$ and $\depth{\sec{P}} = \min\{\depth{\sec{Q}} \mid \sec{Q} \in X\}$,
  then $\sec{P}$ is a minimal element of $X$
  by Lemma~\ref{lemma:trans_reduces_depth_sec} and definition of $\porder$.

  \item $\sec{\nullproc}$ is the least element of $\spp$ w.r.t. ${\porder}$.
  We consider $\sec{P}$ and proceed by induction on $\depth P$.
  If $\depth P = 0$, then $P \bisim \nullproc$ and therefore $\sec{P} = \sec{\nullproc}$.
  Suppose that $\depth P = n > 0$. 
  By Lemma~\ref{lemma:there_is_always_a_path} there is $Q$ s.t. $\sec{P} \trans \sec{Q}$.
  By Lemma~\ref{lemma:trans_reduces_depth_sec}, $\depth{Q} < \depth{P}$.
  By induction and definition of~$\porder$, $\sec{\nullproc} \porder \sec{Q}  \porder \sec{P}$.

  \item ${\porder}$ is strictly compatible. 
  Suppose $\sec{Q} \prec \sec{P}$ and consider $\sec{P} \parallel \sec{S}$. 
  By definition of ${\prec}$ there are $P_0, \ldots, P_n \in \finitePi$, with $n>0$,
  s.t. $\sec{P} = \sec{P_0} \trans \sec{P_1} \trans \ldots \trans \sec{P_n} = \sec{Q}$.
  By definition of $\trans$, for each $i = 0, \ldots n-1$ there is $k_i$ s.t. 
  $\sec{P_i} \trans_{k_i} \sec{Q_i}$. Define $k = \max\{k_i : i = 0, \ldots n-1\}$. Then
  $$
   \sec{P} \parallel \sec{S} = \sec{P_0} \parallel \sec{S} = \sec{P_0 \parpi S}
   \trans_{k+1} \ldots \trans_{k+1}
   \sec{P_n \parpi S} = \sec{P_n} \parallel \sec{S} = \sec{Q} \parallel \sec{S}
  $$
  By definition of $\trans$,  
  $$
   \sec{P} \parallel \sec{S} = \sec{P_0} \parallel \sec{S} = \sec{P_0 \parpi S}
   \trans \ldots \trans
   \sec{P_n \parpi S} = \sec{P_n} \parallel \sec{S} = \sec{Q} \parallel \sec{S}
  $$
  By Lemma~\ref{lemma:trans_reduces_depth_sec} and $n>0$, 
  $\depth{\sec{P} \parallel \sec{S}} > \depth{\sec{Q} \parallel \sec{S}}$. 
  Then ${\sec{Q} \parallel \sec{S}} \prec {\sec{P} \parallel \sec{S}} $.
  \item ${\porder}$ is precompositional. 
  Suppose $\sec{P} \porder \sec{Q} \parallel \sec{R}$, we have to prove there are
  $\sec{Q'} \porder \sec{Q}$ and $\sec{R'} \porder \sec{R}$ s.t. 
  $\sec{P} = \sec{Q'} \parallel \sec{R'}$.
  If $Q \bisim R \bisim \nullproc$ then $Q' \bisim R' \bisim \nullproc$ and the conditions are satisfied.
  Suppose that only one of both processes is bisimilar to $\nullproc$. 
  W.l.o.g. suppose $Q \nbisim \nullproc$ and $R \bisim \nullproc$. 
  In this case, $\sec{P} \porder \sec{Q} \parallel \sec{R} = \sec{Q}$, then 
  if we define  $\sec{Q'} = \sec{P}$ and $\sec{R'} = \sec{\nullproc}$, the conditions are also satisfied.
  Suppose now that $Q \nbisim \nullproc$ and $R\nbisim \nullproc$.  
  By definition of ${\porder}$ there are $n \geq 0$ and processes $S_n, \ldots, S_0$ s.t.
  $\sec{Q} \parallel \sec{R} = \sec{Q \parpi R} = \sec{S_n} \trans \ldots 
  \trans \sec{S_0} = \sec{P}$. 
  The proof proceed by induction on $n$. Suppose that the hypothesis holds for $n$, we prove the case $n+1$.
  Given that 
  $\sec{S_{n+1}} = \sec{Q \parpi R} = \sec{Q} \parallel \sec{R} \trans \sec{S_{n}}$,
  by definition of $\trans$, there is $T$ s.t. either
  $\sec{Q} \trans \sec{T}$ and $\sec{S_n} = \sec{T \parpi R}$ or,
  $\sec{R} \trans \sec{T}$ and $\sec{S_n} = \sec{Q \parpi T}$.
  (We have omitted the sub-index of {$\trans$} because it does not play any role.)
  W.l.o.g. suppose that $\sec{Q} \trans \sec{T}$ and $\sec{S_n} = \sec{T \parpi R}$.
  Then $\sec{P} \porder \sec{T \parpi R} = \sec{T} \parallel \sec{R}$.
  By induction there are $\sec{T'}$ and $\sec{R'}$ s.t. 
  $\sec{T'} \porder \sec{T}$, $\sec{R'} \porder \sec{R}$ and
  $\sec{P} = \sec{T'} \parallel \sec{R'}$.
  Because $\sec{T} \porder \sec{Q}$ and $\porder$ is a partial order, we have that 
  $\sec{T'} \porder \sec{Q}$ and we can conclude the proof.

  \item ${\porder}$ is Archimedean. 
  Suppose that $\sec{P}, \sec{Q} \in \spp$ are s.t.
  $\sec{P}^n \porder \sec{Q}$ for all $n \in \Nat$. 
  By Lemma~\ref{lemma:sum_of_depths}, $\depth{P^n} = n \cdot \depth P$. 
  Given that $\depth Q \in \Nat$ we can conclude that $\depth P = 0$ and therefore $\sec{P} = \sec{\nullproc}$.
 \end{enumerate}
\end{proof}

By Theorem~\ref{th:unique_decomposition_strong_case}, it follows that $\spp$ has unique decomposition.

\begin{corollary}
 The commutative monoid $\spp$ has unique decomposition.
\end{corollary}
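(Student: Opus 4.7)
The plan is essentially to invoke Theorem~\ref{th:unique_decomposition_strong_case}, since all the nontrivial work has already been done. Concretely, I would first observe that $\spp$ is a commutative monoid (with identity $\sec{\nullproc}$ and operation $\parallel$ well-defined on $\bisim$-equivalence classes by compatibility of $\bisim$ with parallel composition and by Lemma~\ref{lemma:finiteness_is_conservative}). Next, I would recall that Lemma~\ref{lemma:dec_order_strong_case} establishes that ${\porder}$ is a decomposition order on $\spp$, i.e.\ it is well-founded, has $\sec{\nullproc}$ as least element, is strictly compatible, precompositional, and Archimedean.

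With these two ingredients in place, a direct application of Theorem~\ref{th:unique_decomposition_strong_case} yields that $\spp$ has unique decomposition. Since the corollary amounts to a one-line citation of the theorem, there is no real obstacle: the entire difficulty of the argument was concentrated in proving the five clauses of Definition~\ref{def:wdo}, most notably precompositionality (where one has to inductively peel off single $\trans$-steps from a composition $\sec{Q}\parallel\sec{R}$ without triggering scope extrusion, which is precisely why the relation ${\trans}$ was defined to exclude communication between parallel components) and the Archimedean property (which relies on additivity of depth, i.e.\ Lemma~\ref{lemma:sum_of_depths}). Hence the proof of the corollary itself is a straightforward appeal to the abstract framework of Section~\ref{sec:unique_decomposition}.
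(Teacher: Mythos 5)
Your proposal is correct and matches the paper exactly: the corollary is obtained by combining the lemma that $\spp$ is a commutative monoid with Lemma~\ref{lemma:dec_order_strong_case} (that ${\porder}$ is a decomposition order) and then applying Theorem~\ref{th:unique_decomposition_strong_case}. Your additional remarks on where the real work lies (precompositionality and the Archimedean property) are accurate but not part of the corollary's proof itself.
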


\section{Unique parallel decomposition with respect to weak bisimilarity}
\label{sec:upd_weak_case}

To prove the result of unique parallel decomposition w.r.t. strong bisimilarity,
we relied on the definition of depth and on the properties that are satisfied when we take into account 
strong bisimilarity. 
In particular, we proved that all strongly bisimilar processes have the same depth.
For the weak bisimilarity we do not have the same property.
Consider the following processes 
$$
P = \ov{x}y.\nullproc \qquad P' = \tau.\ov{x}y.\nullproc \qquad P'' = \tau.\tau.\ov{x}y.\nullproc
$$
Notice that $P \wbisim P' \wbisim P''$, despite this, $\depth P < \depth {P'} < \depth {P''}$.
%
%
To avoid this problem and to adapt the ideas behind results in the previous section,
we will consider processes without \emph{stuttering transitions}.
A transition $P \trans[\alpha] P'$ is a stuttering transition if $\alpha = \tau$ and $P \wbisim P'$. 

We could not establish UPD for normed processes in the strong setting,
because the norm of the arguments of a parallel composition is not necessarily less 
than the norm of the parallel composition.
%
In the weak setting, it is known that normed processes of $\Pi$ do not satisfy UPD w.r.t.\
bisimilarity. Consider the following counter example \cite{DELL16}:
define $P = \nu z(\ov{z}c \parpi z(x).!\ov{a}b \parpi z(y))$.
$P$ is normed because $P \trans[\tau] \nu z(\nullproc \parpi z(x).!\ov{a}b \parpi \nullproc) \stops$
but there is no a unique parallel decomposition of $P$ because $P \wbisim P \parpi P$.

We study processes  without stuttering transitions in Section \ref{subsec:pwss}.
Using the results developed in that section and Theorem~\ref{th:unique_decomposition_strong_case}, 
in Section~\ref{subsec:upd_for_wb} we prove that for finite processes
there is a unique parallel decomposition w.r.t. weak bisimilarity.

\subsection{Processes without stuttering steps}
\label{subsec:pwss}

For $\omega = a_1a_2\cdots a_n \in A_\tau^*$ with $n>0$, 
we write $P \transs[\omega] P'$ if there are processes
$P_0, P_1, \ldots, P_n$ s.t.  $P = P_0 \transs[a_1] P_1 \transs[a_2] \ldots \transs[a_n] P_n = P'$. 
If $\omega = \varepsilon$, then $P \transs[\omega] P'$ implies $P \transs P'$.

\begin{definition}
A process $P \in \finitePi$ is a process \emph{without stuttering transitions} if 
there are no $\omega \in A^*$ and $P', P'' \in \finitePi$ s.t. 
$P \transs[\omega] P' \trans[\tau] P''$ and $P' \wbisim P''$.
We denote with $\wossPi$ the set of processes of $\finitePi$ without stuttering transitions.
\end{definition}
In Section~\ref{subsec:depth} we discussed why we do not consider infinite processes, this discussion
also applies for weak bisimilarity.
By definition, $\wossPi \subseteq \finitePi$. 
This fact and Lemma~\ref{lemma:wossPi_model_finitePi} ensure that we can use 
processes in $\wossPi$ to define properties over equivalence classes of processes in $\finitePi$
w.r.t. weak bisimilarity.

To prove Lemma~\ref{lemma:wossPi_model_finitePi} we need to introduce some notation and Lemma~\ref{lemma:hnf}.
We write $\ov{x}(z).P$ to denote $\nu z.\ov{x}z.P$. We call $\ov{x}(z)$ a \emph{bound-output prefix}.
We use $\lambda$ to range over prefixes, including bound-outputs. 

\begin{lemma}
 \label{lemma:hnf}
 For all $P \in \finitePi$ there are prefixes $\lambda_1, \ldots, \lambda_n$ and 
 processes $P_1, \ldots, P_n$ such that $P \bisim \sum_{i=1}^n \lambda_i.P_i$.
\end{lemma}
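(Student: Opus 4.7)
The plan is to proceed by structural induction on $P \in \finitePi$. The cases split naturally along the grammar: $\nullproc$, prefix $\pi.P'$, summation $M + M'$, parallel composition $P_1 \parpi P_2$, restriction $\nu z.P'$, and replication $!P'$.

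First I would dispatch the easy cases. For $P = \nullproc$ the empty sum ($n = 0$) suffices. For $P = \pi.P'$ the process is already in the required form with $n = 1$, $\lambda_1 = \pi$, and $P_1 = P'$; when the outermost prefix is a match $[x=y]\pi'$ we further simplify, reducing to $\pi'.P'$ if $x = y$ and to the empty sum if $x \neq y$ (since no transition is enabled). For $P = M + M'$, I would apply the induction hypothesis to $M$ and $M'$ and concatenate the resulting sums, using the compatibility of $\bisim$ with summation.

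The parallel composition case $P = P_1 \parpi P_2$ is handled via the expansion law. By Lemma~\ref{lemma:finiteness_is_conservative} both components are in $\finitePi$, so by induction we obtain head normal forms $P_1 \bisim \sum_i \lambda_i.P_1^i$ and $P_2 \bisim \sum_j \mu_j.P_2^j$. An inspection of the operational rules Par-L/R, Comm-L/R, and Close-L/R shows that $P_1 \parpi P_2$ is bisimilar to the sum of the ``left-alone'' summands $\lambda_i.(P_1^i \parpi P_2)$ and $\mu_j.(P_1 \parpi P_2^j)$ together with a $\tau$-summand for each matching input/output pair, including the bound-output/input pairs handled by the Close rule. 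For the restriction case $P = \nu z.P'$, I would apply the induction hypothesis to $P'$ and distribute $\nu z$ over the resulting summands according to Res and Open: summands $\lambda_i.P_i$ with $z$ not occurring in $\lambda_i$ become $\lambda_i.\nu z.P_i$; summands $\ov{x}z.P_i$ with $z \neq x$ become bound-output summands $\ov{x}(z).P_i$ (this is precisely why bound-output prefixes are admitted in the statement of the lemma); and summands whose subject is $z$ are discarded because their transitions are blocked by the restriction. Careful use of the naming convention of Convention~\ref{conv:names2} keeps variable capture under control.

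The main obstacle is the replication case $P = !P'$. The plan is to observe that the premise $!P' \in \finitePi$ forces $P'$ itself to have no transitions. Indeed, if $P' \trans[\alpha] P''$ for some $\alpha$, then by Rep-Act we have $!P' \trans[\alpha] P'' \parpi !P'$, and iterating Par on the right-hand occurrence of $!P'$ yields arbitrarily long executions, contradicting finiteness. Hence $P'$ has no transitions, so $P' \bisim \nullproc$ and therefore $!P' \bisim \nullproc$, which is again the empty sum. Combining all cases establishes the lemma.
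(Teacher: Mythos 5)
Your proof is correct and follows the same overall strategy as the paper's: structural induction on $P$, with the parallel case discharged by the expansion law (the paper cites \cite[Lemma 2.2.14]{SW01} for this). Two points of genuine difference are worth noting. First, for restriction the paper runs a \emph{nested} structural induction on $P'$ inside the case $P = \nu z.P'$, pushing $\nu z$ through each constructor; you instead apply the outer induction hypothesis to $P'$ once and then distribute $\nu z$ over the summands of its head normal form, which is a mild streamlining (you do need, as you implicitly use, that $\bisim$ is compatible with restriction and that $\nu z$ distributes over $+$, and you must also account for summands whose prefix is already a bound output $\ov{x}(w)$ coming from the inner head normal form — there $\nu z$ either blocks the summand when $z = x$ or commutes past $\nu w$ and into the continuation). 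Second, and more substantively, you treat the replication case $P = {!P'}$, which the paper's proof silently omits: your observation that $!P' \in \finitePi$ forces $P'$ (and hence $!P'$) to be transition-free, so that $!P' \bisim \nullproc$ is the empty sum, is exactly the argument needed to close that case, and it actually repairs a small gap in the published proof. The match-prefix simplification in your prefix case is harmless but unnecessary, since $\lambda$ ranges over all prefixes of the calculus.
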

\begin{proof}
 The proof proceeds by structural induction on $P$.
 Cases $P=\nullproc$, $P = \pi.P'$ are straightforward by definition.
 For the case $P = P_1 + P_2$, by induction hypothesis there are processes 
 $Q_1 = \sum_{i \in I} \lambda_i.P_i$ and $Q_2 = \sum_{j \in J} \lambda_j.P_j$ and bisimulations ${\relR_1}$
 and ${\relR_2}$ s.t. $P_k \relR_k Q_k$ for $k=1,2$.
 It is easy to see that $\{(P, Q_1+Q_2)\} \cup \relR_1 \cup \relR_2$ is a bisimulation.
 Case $P = P_1 \parpi P_2$ is straightforward by induction and 
 the Expansion Lemma for~${\bisim}$ in the $\pi$-calculus \cite[Lemma 2.2.14]{SW01}.
 Thanks this lemma we can state that for all $P = \sum_{i \in I} \lambda_i P_i$ and  
 $Q = \sum_{j \in J} \lambda_j Q_j$ there is $R = \sum_{k \in K} \lambda_k R_k$ s.t. 
 $P \parpi Q \bisim R$.
 Case $P = \nu z.P'$ proceeds by structural induction on $P'$. 
 If $P' = \nullproc$ then $\nu z.P' \wbisim \nullproc = \sum_{i=1}^0 \lambda_i.P_i$.
 If $P' = \pi.P''$ then there are three cases to analyse: 
 \begin{inparaenum}[(i)]
  \item $z \not \in n(\pi)$ then $P \bisim \pi.(\nu z.P'')$,
  \item $\pi = \ov{x}z$ then $\nu z.P'$ can be denoted by $\ov{x}(z)P''$,
  \item $z \in n(\pi)$ and $\pi \neq \ov{x}z$ then $\nu z.\pi.P'' \bisim \nullproc$.
 \end{inparaenum}
 If $P' = P_0 + P_1$ then notice that $\nu z(P_0 + P_1) \bisim \nu z.P_0 + \nu z.P_1$;
 by induction there are processes $\sum_{i \in I} \lambda_i P_i$ and
 $\sum_{j \in J} \lambda_j P_j$ s.t. $\nu z.P_0 \bisim \sum_{i \in I} \lambda_i P_i$ and 
 $\nu z.P_1 \bisim \sum_{j \in J} \lambda_j P_j$.
 From this point, the proof follows as in the case $P=P_1+P_2$. 
%
 Finally, case $P' = P_0 \parpi P_1$ can be reduced to the previous case using
 the Expansion Lemma for~${\bisim}$ (\cite[Lemma 2.2.14]{SW01}).
\end{proof}

\begin{lemma}
 \label{lemma:wossPi_model_finitePi}
 For every process $P \in \finitePi$ there is $Q \in \wossPi$ s.t. $P \wbisim Q$.
\end{lemma}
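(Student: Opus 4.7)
The plan is to proceed by well-founded induction on $\depth{P}$, strengthening the inductive claim so that the constructed $Q$ additionally satisfies $\depth{Q} \leq \depth{P}$. The base case is $\depth{P} = 0$: then $P$ has no transitions at all, so it lies in $\wossPi$ trivially and we take $Q = P$.

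For the inductive step, assume the claim for all finite processes of strictly smaller depth. If $P$ already exhibits an initial stuttering transition, i.e.\ $P \trans[\tau] P'$ with $P \wbisim P'$, then $\depth{P'} < \depth{P}$ by Lemma~\ref{lemma:trans_reduces_depth}, and applying the induction hypothesis to $P'$ directly yields the required $Q$. Otherwise, invoke Lemma~\ref{lemma:hnf} to write $P \bisim \sum_{i=1}^{n} \lambda_i.P_i$. Each $P_i$ has strictly smaller depth, so the induction hypothesis supplies $Q_i \in \wossPi$ with $P_i \wbisim Q_i$ and $\depth{Q_i} \leq \depth{P_i}$. Form the candidate $Q' = \sum_{i=1}^{n} \lambda_i.Q_i$, so that $\depth{Q'} \leq \depth{P}$; if $Q'$ still admits an initial stuttering transition --- some $\lambda_i = \tau$ with $Q' \wbisim Q_i$ --- apply the induction hypothesis once more to $Q_i$ (of strictly smaller depth), and take the result as $Q$; otherwise take $Q = Q'$.

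Verifying that the resulting $Q$ lies in $\wossPi$ is then routine: any putative reachable stuttering transition either sits at the root of $Q$, which is excluded by the case assumption together with the absorption step, or lies within some $Q_i$, which is already in $\wossPi$ by construction.

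The principal obstacle is establishing $P \wbisim Q'$, because $\wbisim$ is only a non-input congruence. For an input summand $\lambda_i = x(z)$, establishing $x(z).P_i \wbisim x(z).Q_i$ requires $P_i\{y/z\} \wbisim Q_i\{y/z\}$ for every name $y$, which does not follow from $P_i \wbisim Q_i$ alone. I would circumvent this by organising the construction $P_i \mapsto Q_i$ so that it commutes with substitution of free names: then the required equivalence under every instantiation is supplied by applying the induction hypothesis to the substituted process $P_i\{y/z\}$, while alpha-invariance of the syntax reduces the substitutions that must be considered to finitely many distinct cases.
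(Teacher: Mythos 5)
Your overall strategy --- normalise to a head normal form via Lemma~\ref{lemma:hnf}, recurse on the continuations by induction on depth, and remove the stuttering $\tau$-summands --- is essentially the paper's own proof (the paper prunes the summands $\lambda_i.P_i$ with $P \wbisim P_i$ in one step rather than via your separate ``absorption'' case, which is cosmetic). The substantive difference is that you explicitly flag the obstacle that the paper's one-line conclusion passes over in silence: since $\wbisim$ is only a non-input congruence, $P_i \wbisim Q_i$ does not yield $x(z).P_i \wbisim x(z).Q_i$. Note that the same substitution problem also infects your claim that $Q \in \wossPi$, which you dismiss as routine: the states reachable from $Q$ through an input prefix are the instances $Q_i\{y/z\}$, and $Q_i \in \wossPi$ does not imply $Q_i\{y/z\} \in \wossPi$.

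However, your proposed repair --- arranging the construction $P_i \mapsto Q_i$ so that it commutes with substitution of free names --- does not go through, and this is a genuine gap rather than a verification to be deferred. The pruning decision is semantic and substitution-dependent: for $P_1 = \tau.\ov{a}b + \ov{z}b$ with $z$ the input parameter, $P_1$ has no stuttering transition (its $\tau$ leads to $\ov{a}b \nwbisim P_1$), so the construction leaves $P_1$ untouched, whereas $P_1\{a/z\} = \tau.\ov{a}b + \ov{a}b \wbisim \ov{a}b$ does have a stuttering $\tau$ that the construction applied to $P_1\{a/z\}$ must delete; hence no commutation is possible. Worse, the example $P = x(z).(\tau.\ov{a}b + \ov{z}b)$ appears to resist any construction whatsoever: a process weakly bisimilar to $\tau.\ov{a}b + \ov{y}b$ for a fresh received name $y$ provably needs a genuine $\tau$-transition into the class of $\ov{a}b$, yet the same syntactic continuation instantiated at $y = a$ is weakly bisimilar to $\ov{a}b$, so that very transition becomes stuttering; without mismatch the calculus cannot make the continuation depend on whether the received name equals $a$. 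So the difficulty you identified is real, is not resolved by the paper's own proof either, and cannot be discharged by the commutation argument you sketch.
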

\begin{proof}
 The proof of the result follows by complete induction on $n = \depth P$. 
 By Lemma~\ref{lemma:hnf} for $P \in \finitePi$ there are prefixes $\lambda_1, \ldots, \lambda_n$ and 
 processes $P_1, \ldots, P_n$ such that $P \bisim \sum_{i=1}^n \lambda_i.P_i$.
 By induction and Lemma~\ref{lemma:trans_reduces_depth}, for each $i$ there is $Q_i \in \wossPi$ s.t. $P_i \wbisim Q_i$.  
 Then if we define 
 $$
 \textstyle
 Q = \sum_{i \in \{1, \ldots n\} \text{ and } P \nwbisim P_i} \lambda_i.Q_i$$
 we get $Q$ s.t. $Q \in \wossPi$ and $P\wbisim Q$.
\end{proof}

We cannot restrict our attention only to processes in $\wossPi$ because the property
of not executing stuttering transitions is not preserved by parallel composition.
Consider the processes $P_0 = \nu z (\ov{a}z)$ and $P_1 = a(x).(\ov{x}b+\tau.\ov{c}b)$.
Both $P_0, P_1 \in \wossPi$ but $P_0 \parallel P_1 \not \in \wossPi$ because
$$P_0 \parallel P_1 =  \nu z (\ov{a}z) \parallel a(x).(\ov{x}b+\tau.\ov{c}b)
\trans[\tau]
\nu z ( \nullproc \parallel (\ov{z}b+\tau.\ov{c}b)) \bisim \tau.\ov{c}b \wbisim \ov{c}b
$$
If we compare this fact with the strong setting, we can say that it is not possible to prove a 
lemma similar to Lemma~\ref{lemma:finiteness_is_conservative} for processes in $\wossPi$.

As in Section~\ref{subsec:depth}, we conclude with a collection of theorems and lemmas.
Theorems~\ref{th:wbisim_implies_same_depth} and~\ref{th:parameters_depths_lesser_than_par_depth_weak_case} 
are equivalent, respectively, to  
Theorems~\ref{th:bisim_implies_same_depth} and~\ref{th:parameters_depths_lesser_than_par_depth_strong_case} 
but w.r.t. processes in $\wossPi$ and weak bisimilarity. 
Most of the lemmas are needed to prove these results and only a few of them are used in the next section.

\begin{lemma}
 \label{lemma:trans_over_wosspi}
 If $P \in \wossPi$ and $P \transs[\omega] Q$ for $\omega \in A_\tau^*$ then $Q \in \wossPi$.
\end{lemma}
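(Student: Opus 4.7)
The plan is to reduce the assertion to the definition of $\wossPi$ by showing two things: that $Q \in \finitePi$, and that no forbidden stuttering configuration is reachable from $Q$. The key preliminary observation is that if $P \transs[\omega] Q$ for some $\omega \in A_\tau^*$, then writing $\omega_v \in A^*$ for the subsequence obtained by deleting all $\tau$-labels from $\omega$, we also have $P \transs[\omega_v] Q$. This is because every $\tau$ occurring in $\omega$ corresponds to a $\trans[\tau]$-step that can be absorbed into the leading $\Longrightarrow$ of the next $\transs[\alpha]$ move (or, if no such next move exists, into the trailing $\Longrightarrow$ of the previous one), exploiting the very definition of $\transs[\alpha]$. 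So without loss of generality we may assume $\omega \in A^*$.

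Next, I would note that $Q \in \finitePi$. Indeed, $P \transs[\omega] Q$ unfolds into a finite sequence of strong transitions $P \trans[\alpha_1] \cdots \trans[\alpha_m] Q$ with each $\alpha_i \in A_\tau$, and iterating Lemma~\ref{lemma:trans_reduces_depth} yields that all intermediate processes, and in particular $Q$, lie in $\finitePi$.

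Now suppose, towards a contradiction, that $Q \notin \wossPi$. By definition there exist $\omega' \in A^*$ and $Q', Q'' \in \finitePi$ with $Q \transs[\omega'] Q' \trans[\tau] Q''$ and $Q' \wbisim Q''$. Concatenating the two weak computations gives $P \transs[\omega \omega'] Q' \trans[\tau] Q''$, and since $\omega, \omega' \in A^*$ we have $\omega \omega' \in A^*$. This witnesses a stuttering transition reachable from $P$ via a visible-action sequence, contradicting $P \in \wossPi$. Hence $Q \in \wossPi$.

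The only substantive step is the $\tau$-absorption remark in the first paragraph; the rest is a straightforward unfolding of the definitions and reuse of Lemma~\ref{lemma:trans_reduces_depth}. I do not anticipate any real obstacle beyond being careful that concatenation of $\transs[\cdot]$-moves is indeed a $\transs[\cdot]$-move over the concatenated label sequence, which is immediate from the way $\transs[\omega]$ was defined at the beginning of Section~\ref{subsec:pwss}.
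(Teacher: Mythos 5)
Your proof is correct and follows essentially the same route as the paper's: assume $Q \notin \wossPi$, erase the $\tau$-labels from $\omega$ to get a visible-action sequence, concatenate with the offending computation from $Q$, and contradict $P \in \wossPi$. The extra remarks on $\tau$-absorption and on $Q \in \finitePi$ are just explicit versions of steps the paper leaves implicit.
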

\begin{proof}
 Suppose $Q \not \in \wossPi$, then there are $\omega' \in A^*$ and $Q', Q'' \in \finitePi$ s.t. 
 $Q \transs[\omega'] Q' \trans[\tau] Q''$ with $Q' \wbisim Q''$.
 Let $\tilde \omega$ obtained from $\omega$ by removing $\tau$'s actions. 
 Then $P \transs[\tilde\omega\omega'] Q' \trans[\tau] Q''$ and therefore $P \not \in \wossPi$, which is a contradiction.
\end{proof}

\begin{lemma}
 \label{lemma:mimick_is_forced_to_trans}
 If $P,Q \in \wossPi$ are s.t. $P \wbisim Q$ and $P \trans[\alpha] P'$ with $\alpha \in A_\tau$,
 then $Q \transs[\alpha] Q'$, i.e. $Q$ executes at least a transition, and $P' \wbisim Q'$ 
\end{lemma}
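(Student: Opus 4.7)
My plan is to split the proof into two cases based on whether the action $\alpha$ is visible or internal, since the definition of weak bisimilarity (Definition~\ref{def:wbisim}) treats these two cases differently.

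For the case $\alpha \in A$ (visible action), the argument is immediate. By the definition of $\wbisim$, since $P \wbisim Q$ and $P \trans[\alpha] P'$, there exists $Q'$ with $Q \transs[\alpha] Q'$ and $P' \wbisim Q'$. Because $\alpha$ is visible, the derivation $Q \transs[\alpha] Q'$ necessarily contains the transition labelled $\alpha$ itself, so $Q$ executes at least one transition. Nothing about $\wossPi$ is even needed here.

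The substantive case is $\alpha = \tau$. By the definition of $\wbisim$, there exists $Q'$ with $Q \transs Q'$ and $P' \wbisim Q'$. I need to rule out the possibility that $Q \transs Q'$ is the empty derivation (i.e.\ $Q = Q'$), which is where the hypothesis $P \in \wossPi$ enters. Assume for contradiction that $Q = Q'$, so $P' \wbisim Q$. Combining with $P \wbisim Q$ by symmetry and transitivity of $\wbisim$, I obtain $P \wbisim P'$. But then the transition $P \trans[\tau] P'$ together with $P \wbisim P'$ witnesses that $P$ has a stuttering transition, instantiating the definition of $\wossPi$ with $\omega = \varepsilon$, $P' := P$, and $P'' := P'$ (using that $P \transs[\varepsilon] P$). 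This contradicts $P \in \wossPi$. Therefore $Q \neq Q'$, so the derivation $Q \transs Q'$ contains at least one $\tau$-transition; equivalently, $Q \transs[\tau] Q'$.

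I do not anticipate any significant obstacle: the proof is essentially a direct unfolding of the definitions of $\wbisim$ and $\wossPi$. The only subtle point is recognising that, in the internal case, the very possibility that weak bisimilarity allows $Q$ to match a $\tau$-transition by doing nothing is precisely what is excluded by demanding $P \in \wossPi$; the hypothesis $Q \in \wossPi$ is not actually used.
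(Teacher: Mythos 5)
Your proof is correct and follows essentially the same route as the paper's: the visible case is immediate from Definition~\ref{def:wbisim}, and in the $\tau$ case the empty matching $Q \transs Q$ is ruled out because $P' \wbisim Q$ together with $P \wbisim Q$ would make $P \trans[\tau] P'$ a stuttering transition, contradicting $P \in \wossPi$. Your observation that only $P \in \wossPi$ (and not $Q \in \wossPi$) is needed also matches the paper's argument.
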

\begin{proof}
 If $\alpha \neq \tau$ the result is straightforward by Def.~\ref{def:wbisim}.
 If $\alpha = \tau$ and there is no transition $Q_0 \trans[\tau] Q_1$ s.t. 
 $Q \transs Q_0 \trans[\tau] Q_1 \transs Q'$ and $P' \wbisim Q'$ then 
 $P' \wbisim Q$ since $P \wbisim Q$.
 This implies that $P \wbisim Q \wbisim P'$, i.e. $P \trans[\alpha] P'$ is a stuttering
 transition. This contradicts $P\in \wossPi$.
\end{proof}

\begin{theorem}
 \label{th:wbisim_implies_same_depth}
 If $P, Q \in \wossPi$ and $P \wbisim Q$ then $\depth P = \depth Q$.
\end{theorem}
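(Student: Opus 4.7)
The plan is to mimic the inductive argument of Theorem~\ref{th:bisim_implies_same_depth}, replacing the single-step matching used for strong bisimilarity by the weak matching of Lemma~\ref{lemma:mimick_is_forced_to_trans}, and using Lemma~\ref{lemma:trans_over_wosspi} to keep successor states inside $\wossPi$. Arguing by induction on $\depth P$, I would prove $\depth Q \leq \depth P$; the reverse inequality then follows by symmetry.

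For the base case $\depth P = 0$, the process $P$ performs no transitions, so $P \wbisim \nullproc$, and hence $Q \wbisim \nullproc$. A visible transition from $Q$ could not be weakly matched by $\nullproc$, while any $\tau$-transition $Q \trans[\tau] Q'$ would be matched by $\nullproc \transs \nullproc$, yielding $Q' \wbisim \nullproc \wbisim Q$ and thus a stuttering step, contradicting $Q \in \wossPi$. Hence $\depth Q = 0$.

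For the inductive step, with $\depth P = n > 0$, I would pick via Lemma~\ref{lemma:trans_and_max} a transition $Q \trans[\beta] Q'$ realizing $\depth Q = \length(\beta) + \depth{Q'}$. Then $Q' \in \wossPi$ by Lemma~\ref{lemma:trans_over_wosspi}, and Lemma~\ref{lemma:mimick_is_forced_to_trans} (with the roles of $P$ and $Q$ swapped) produces $P \transs[\beta] P'$ with $P' \wbisim Q'$, in which $P$ is forced to perform at least one transition; Lemma~\ref{lemma:trans_over_wosspi} then places $P'$ in $\wossPi$ as well. The key observation is that the weak transition $P \transs[\beta] P'$ already consumes at least $\length(\beta)$ units of $P$'s depth budget: one unit when $\beta \neq \tau$ (the visible step) and two units when $\beta = \tau$ (the forced internal step, since $\length(\tau) = 2$). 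Extending any such weak match to a maximal execution from $P'$ therefore yields $\length(\beta) + \depth{P'} \leq \depth P = n$, so $\depth{P'} < n$, and the induction hypothesis, applied to $P', Q' \in \wossPi$, gives $\depth{P'} = \depth{Q'}$. Hence $\depth Q = \length(\beta) + \depth{Q'} = \length(\beta) + \depth{P'} \leq n$, as required.

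The main obstacle is precisely this length accounting in the inductive step. Without the guarantee from Lemma~\ref{lemma:mimick_is_forced_to_trans} that $P$ must actually move to match a $\tau$-transition of $Q$, the weak match could leave $P$ at itself; the two units associated with $\length(\tau) = 2$ would then be charged to $Q$ without any corresponding decrease in $P$'s depth, and the inductive bound would collapse. The restriction to $\wossPi$ is what rules this out and allows the same proof template used for strong bisimilarity to go through.
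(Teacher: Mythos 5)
Your proof is correct and follows essentially the same route as the paper: induction on $\depth{P}$, the stuttering-transition contradiction in the base case, and Lemmas~\ref{lemma:trans_over_wosspi} and~\ref{lemma:mimick_is_forced_to_trans} together with the $\length(\tau)=2$ accounting in the inductive step. The only cosmetic difference is that you establish $\depth{Q}\leq\depth{P}$ directly and invoke symmetry, whereas the paper first derives $\depth{Q}\geq\depth{P}$ and then refutes $\depth{Q}>\depth{P}$ by contradiction; both variants rest on the same lemmas and the same depth bookkeeping.
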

\begin{proof}
We proceed by complete induction over $n = \depth P$. 
If $\depth P = 0$, then $P \bisim \nullproc$ and moreover $P \wbisim \nullproc$. 
Because $P \wbisim Q$ and $P \bisim \nullproc$, there is no $\alpha \in A$ s.t. $Q \transs[\alpha]$.
Taking into account this fact, if there is $Q'$ s.t. $Q \trans[\tau] Q'$,  $Q'$ is such that $Q' \wbisim \nullproc$.  
This creates a contradiction because $Q \trans[\tau] Q'$ is a stuttering transition and 
${Q \in \wossPi}$. Then $Q \bisim \nullproc$ and therefore $\depth Q = 0 =\depth P$. 
Suppose $\depth P = n + 1$. 
Let ${\omega = \alpha\omega' \in A_\tau^*}$ and $P'$ be s.t. 
$\length(\omega) = n + 1$ and $P \trans[\alpha] P' \trans[\omega']$.
Because $P \wbisim Q$ and Lemma~\ref{lemma:mimick_is_forced_to_trans} there are $Q_0$, $Q_1$, $Q'$ s.t. 
$Q \transs Q_0 \trans[\alpha] Q_1 \transs Q'$ and $P' \wbisim Q'$.
By induction $\depth{P'} = \depth{Q'}$ and therefore 
$\depth{Q} \geq \depth{P} = n+1$. 
We prove now that when we assume $\depth{Q} > n+1$ we reach a contradiction;
it then follows that $\depth{Q} = n+1$. 
Assume $\depth{Q} > n+1$ and 
let $\omega = \alpha\omega' \in A_\tau^*$ be such that $Q \trans[\alpha] \tilde Q \trans[\omega'] \tilde Q' \stops$
and $\length(\omega) = \depth{Q}$.
Because $P \wbisim Q$ and Lemma~\ref{lemma:mimick_is_forced_to_trans} there is $\tilde P$ s.t. 
$P \transs[\alpha] \tilde P$, $\tilde P \wbisim \tilde Q$ and $\depth{P} = n+1 > \depth{\tilde P}$.
By the complete induction $\depth{\tilde P} = \depth{\tilde Q}$.
Then, we reach a contradiction, $n+1 > \depth{\tilde P} = \depth{\tilde Q} \geq n + 1$. 
\end{proof}

\begin{lemma}
 \label{lemma:visible_act_change_the_eq_classes}
 If $P \in \finitePi$ and $P \trans[\alpha] P'$ with $\alpha \neq \tau $ then $P \nwbisim P'$.
\end{lemma}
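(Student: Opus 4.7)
The plan is to assume for contradiction that $P \wbisim P'$ and to build executions of $P$ of arbitrary length, which contradicts $P \in \finitePi$. The crucial leverage is that $\alpha$ is visible: by Definition~\ref{def:wbisim}, the matching of a visible action requires a genuine $\alpha$-transition in the response, which will allow me to strictly extend the execution at each iteration. For $\alpha = \tau$ this would not work, and indeed a $\tau$-transition can perfectly well be a stuttering step between weakly bisimilar processes.

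First I would apply the defining clause of $\wbisim$ to $P \wbisim P'$ together with $P \trans[\alpha] P'$, obtaining $P' \transs[\alpha] R_1$ with $P' \wbisim R_1$, hence $P \wbisim R_1$ by transitivity of $\wbisim$. Concatenating yields an execution $P \trans[\alpha] P' \transs[\alpha] R_1$ of length at least two. I would then iterate: given an execution $P \trans[\omega_n] R_n$ of length at least $n$ with $P \wbisim R_n$, the same clause applied to $P \wbisim R_n$ and $P \trans[\alpha] P'$ produces a matching $R_n \transs[\alpha] R_{n+1}$ with $P \wbisim R_{n+1}$, and concatenation gives an execution of length at least $n+1$.

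By induction, $P$ thus admits executions of arbitrary length, contradicting the uniform bound on execution length guaranteed by $P \in \finitePi$. The step I expect to be most delicate is verifying that each $\transs[\alpha]$ in the construction contributes at least one transition to the execution; this holds precisely because $\alpha \neq \tau$, so the decomposition $R_n \transs \hat R \trans[\alpha] \hat R' \transs R_{n+1}$ contains the genuine $\alpha$-labelled transition $\hat R \trans[\alpha] \hat R'$. Beyond this observation no real obstacle is anticipated.
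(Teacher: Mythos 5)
Your proof is correct, but it takes a genuinely different route from the paper. The paper argues via weak traces: it picks a visible sequence $\omega \in A^*$ of maximal length such that $P' \transs[\omega] P'' \stops$ (such a maximum exists since $P'$ is finite), observes that $P'$ cannot weakly perform the strictly longer sequence $\alpha\omega$ whereas $P$ can (via $P \trans[\alpha] P' \transs[\omega]$), and concludes that $P$ and $P'$ are not even weak-trace equivalent, hence not weakly bisimilar. You instead play the bisimulation game against the hypothesis $P \wbisim P'$: repeatedly matching the fixed visible move $P \trans[\alpha] P'$ forces, at each round, a response $R_n \transs[\alpha] R_{n+1}$ containing at least one genuine transition (precisely because $\alpha \neq \tau$), and transitivity of $\wbisim$ keeps the invariant $P \wbisim R_{n+1}$ alive, yielding executions of $P$ of unbounded length and contradicting the uniform bound in the definition of $\finitePi$. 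The paper's argument is slightly more informative (it exhibits a concrete distinguishing weak trace and so separates $P$ from $P'$ already modulo weak trace equivalence), while yours is self-contained at the level of the bisimulation game and needs only the boundedness of $P$'s executions plus transitivity of $\wbisim$; both are sound and of comparable length.
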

\begin{proof}
 Let $\omega \in A^*$ be the largest sequence s.t. $P' \transs[\omega] P'' \stops$. 
 Then there is no $Q$ s.t. $P' \transs[\alpha\omega] Q$.
 On the other hand $P \transs[\alpha\omega]$, therefore $P \nwbisim P'$.
\end{proof}

\begin{lemma}
 \label{lemma:depth_one_implies_one_visible_action}
 If $P \in \wossPi$ is s.t. $P \nbisim \nullproc$ and 
 for all $P' \in \finitePi, \alpha \in A_\tau$, $P \trans[\alpha] P' \stops$, then 
 there is $\alpha' \neq \tau$ s.t. $P \trans[\alpha']$.
\end{lemma}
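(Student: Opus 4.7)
The plan is to argue by contradiction: assume every transition $P \trans[\alpha] P'$ has $\alpha = \tau$ (no visible action is enabled), and derive that $P \trans[\tau] P'$ must be a stuttering transition, contradicting $P \in \wossPi$.

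First I would use the hypothesis $P \nbisim \nullproc$ to observe that $P$ has at least one outgoing transition, so under the contradiction assumption there exists some $P'$ with $P \trans[\tau] P'$, and by the standing assumption of the lemma $P' \stops$. Since $P'$ performs no further transitions, $P' \bisim \nullproc$ and hence $P' \wbisim \nullproc$. More generally, the same argument shows that \emph{every} one-step successor of $P$ is weakly bisimilar to $\nullproc$.

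The main step is then to establish that $P \wbisim \nullproc$. For this I would exhibit the symmetric relation
\[
\mathcal{R} = \{(P,\nullproc),(\nullproc,P)\} \cup \{(Q,\nullproc),(\nullproc,Q) : Q \bisim \nullproc\}
\]
and verify it is a weak bisimulation: from $\nullproc$ there is nothing to match; from $P$ the only moves are $\tau$-moves to terminated states $P'$, which are matched trivially by $\nullproc \transs \nullproc$ because $(P',\nullproc) \in \mathcal{R}$. All pairs with a $\bisim \nullproc$ component are handled by the fact that strong bisimilarity is included in weak bisimilarity.

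Finally, combining $P \wbisim \nullproc$ with $P' \wbisim \nullproc$ yields $P \wbisim P'$, so the transition $P \trans[\tau] P'$ is stuttering, contradicting $P \in \wossPi$. The potentially delicate point is the verification that $\mathcal{R}$ is indeed a weak bisimulation, but this is routine given that every $\tau$-successor of $P$ is terminated; no subtle case analysis on visible actions is needed because the contradiction assumption has already ruled them out.
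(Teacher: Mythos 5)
Your proof is correct and follows essentially the same route as the paper: assume all enabled actions are $\tau$, note every $\tau$-successor is terminated hence weakly bisimilar to $\nullproc$, deduce $P \wbisim \nullproc$, and conclude that the $\tau$-transition is stuttering, contradicting $P \in \wossPi$. The only difference is that you spell out the weak bisimulation witnessing $P \wbisim \nullproc$ explicitly, which the paper leaves implicit.
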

\begin{proof}
 Because $ P \nbisim \nullproc$ there is $\alpha$ s.t. $P \trans[\alpha]$.
 If for all $P' \in \finitePi, \alpha \in A_\tau$, $P \trans[\alpha] P' \stops$ 
 and $\alpha = \tau$ then $P \wbisim \nullproc$ and therefore all transitions that can be executed 
 by $P$ are stuttering transitions. This contradicts $P \in \wossPi$.
\end{proof}

\begin{theorem}
 \label{th:parameters_depths_lesser_than_par_depth_weak_case}
 If $P, Q, R \in \wossPi$, $P \nwbisim \nullproc$, $Q \nwbisim \nullproc$ and 
 $P \parallel Q \wbisim R$ then $\depth P < \depth R$ and $\depth Q < \depth R$.
\end{theorem}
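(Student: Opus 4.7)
The plan is to mirror the structure of the strong-case proof (Theorem~\ref{th:parameters_depths_lesser_than_par_depth_strong_case}) adapted to the weak setting via Theorem~\ref{th:wbisim_implies_same_depth}. Since $P, Q \in \wossPi$ and $P, Q \nwbisim \nullproc$ imply $P, Q \nbisim \nullproc$, Lemma~\ref{lemma:PnbisimNullproc} yields $\depth P \geq 1$ and $\depth Q \geq 1$. The essential obstacle to copying the strong-case argument verbatim is that $P \parpi Q$ need not lie in $\wossPi$, so Theorem~\ref{th:wbisim_implies_same_depth} does not apply directly to the pair $(P \parpi Q, R)$ and one cannot conclude $\depth R = \depth P + \depth Q$ in a single step from Lemma~\ref{lemma:sum_of_depths}.

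I would proceed by induction on $\depth P$ to establish $\depth R > \depth P$; the symmetric inequality $\depth R > \depth Q$ will then follow by interchanging $P$ and $Q$. For the base case $\depth P = 1$, Lemma~\ref{lemma:depth_one_implies_one_visible_action} supplies a visible transition $P \trans[\beta] P''$ with $P'' \stops$, so $P \parpi Q \trans[\beta] P'' \parpi Q \wbisim Q$. The weak bisimulation $R \wbisim P \parpi Q$ then gives $R \transs[\beta] R'$ with $R' \wbisim Q$; since $R' \in \wossPi$ by Lemma~\ref{lemma:trans_over_wosspi}, Theorem~\ref{th:wbisim_implies_same_depth} yields $\depth{R'} = \depth Q$. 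The strong execution $R \transs[\beta] R'$ contributes at least one step for the visible $\beta$, and extending $R'$ by its own maximal execution adds $\depth Q$ more, giving $\depth R \geq 1 + \depth Q \geq 2 > \depth P$.

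For the inductive step $\depth P = n \geq 2$, I would choose the first step $P \trans[\alpha] P'$ of a maximal strong execution of $P$, so that $\length(\alpha) + \depth{P'} = n$, and apply the induction hypothesis to $(P', Q, R')$, where $R \transs[\alpha] R'$ matches $P \parpi Q \trans[\alpha] P' \parpi Q$. Whenever $\depth{P'} \geq 1$ (which holds automatically when $n \geq 3$, or when $n = 2$ with a visible-initial maximal execution of $P$), we have $P' \nwbisim \nullproc$ and the IH yields $\depth{R'} > \depth{P'}$; together with the strong length of $R \transs[\alpha] R'$ being at least $\length(\alpha)$, this produces $\depth R \geq \length(\alpha) + \depth{R'} > \length(\alpha) + \depth{P'} = n$. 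The single corner case is $n = 2$ with every maximal execution of $P$ beginning with $\tau$: since $P \nwbisim \nullproc$ forces $P$ to have a visible transition to a stopped state, the argument above yields $\depth R \geq 1 + \depth Q$, which suffices when $\depth Q \geq 2$; when $\depth Q = 1$ we also invoke the $\tau$-transition of $P$, showing by direct visible-trace comparison that it is non-stuttering inside $P \parpi Q$ so $R$ must match it with at least one $\tau$-transition, contributing strong length at least $2$.

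The main obstacle is ensuring that $\tau$-steps of $P$ on a maximal execution remain non-stuttering in $P \parpi Q$, i.e., that $P \parpi Q \nwbisim P' \parpi Q$ whenever $P \trans[\tau] P'$ with $P \nwbisim P'$ in $\wossPi$: this is a cancellation-style property for weak bisimilarity whose general form would essentially subsume the theorem. In the corner case above it can be verified by a direct visible-trace argument, and in the general inductive step it is avoided by preferring visible first steps on maximal executions whenever they are available; the delicate $\tau$-first configurations are reduced either to the corner case or to smaller instances covered by the induction hypothesis.
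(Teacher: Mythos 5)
There is a genuine gap in your inductive step, and it sits exactly at the obstacle you yourself identify at the end. Your base case ($\depth P = 1$) is fine, and the inductive step works whenever the first action $\alpha$ of a \emph{maximal} execution of $P$ can be taken visible: then $R \transs[\alpha] R'$ really contributes $\length(\alpha)$ to $\depth R$. But when $\alpha = \tau$, the weak matching of $P \parpi Q \trans[\tau] P' \parpi Q$ by $R$ may consist of \emph{zero} transitions, so the claim that ``the strong length of $R \transs[\alpha] R'$ is at least $\length(\alpha)$'' fails; all you then get from the induction hypothesis applied to $(P',Q,R)$ is $\depth R > \depth{P'} = \depth P - 2$, which is not enough. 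Your proposed escape --- ``prefer visible first steps on maximal executions whenever they are available'' and otherwise reduce to the $n=2$ corner case or to smaller instances --- does not cover the general configuration. Take $P = \tau.a.b.\nullproc + c.\nullproc$: it is in $\wossPi$, $\depth P = 4$, its unique maximal execution begins with $\tau$, and its only visible-initial execution has length $1$, so the identity $\length(\alpha)+\depth{P'} = \depth P$ is unavailable on any visible first step. Here you genuinely need $P \parpi Q \nwbisim (a.b.\nullproc) \parpi Q$ to force $R$ to spend a $\tau$, and that is precisely the cancellation-style property you concede ``would essentially subsume the theorem''. It is neither proved nor reducible to your corner case, so the induction does not close.

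The paper's proof avoids this entirely by never peeling transitions off $P$. To show $\depth P < \depth R$ it works on the \emph{other} component: since $Q \nwbisim \nullproc$, one can drive $Q$ (via $\transs[\omega]$ with $\omega \in A^*$) to a residue $Q'$ of depth $1$, which by Lemma~\ref{lemma:depth_one_implies_one_visible_action} owns a visible action $\alpha$; then $P \parpi Q' \trans[\alpha] {} \wbisim P$, and by Lemma~\ref{lemma:visible_act_change_the_eq_classes} together with Theorem~\ref{th:wbisim_implies_same_depth} every $\wossPi$-process weakly bisimilar to $P \parpi Q'$ has depth at least $1 + \depth P$. Since $R \transs[\omega] R'$ with $R' \wbisim P \parpi Q'$ and $R' \in \wossPi$ by Lemma~\ref{lemma:trans_over_wosspi}, this gives $\depth R \geq 1 + \depth P$ in one stroke. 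The visible action contributed by $Q'$ cannot be absorbed by stuttering, which is exactly the protection your $\tau$-first case lacks. Your base case is in fact a mirror image of this idea (you use $P$'s visible action to land on something bisimilar to $Q$); if you apply that same trick with the roles of $P$ and $Q$ exchanged and with $Q$ first driven down to depth $1$, the induction becomes unnecessary.
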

\begin{proof}
 We prove $\depth P < \depth R$, the proof that $\depth{Q}<\depth{R}$ is analogous. 
 Note that, since $Q \nwbisim \nullproc$, there is $Q'$ with $\depth{Q'}=1$ that is reachable
 from $Q$, that is, there exists $\omega \in A^*$ s.t.
 $Q \transs[\omega] Q'$ and $Q' \nbisim \nullproc$ (we remark the symbol $\nbisim$) and 
 for all $Q'' \in \finitePi, \alpha \in A_\tau$, $Q' \trans[\alpha] Q'' \stops$.
 Then, by Lemma~\ref{lemma:trans_over_wosspi}, $Q' \in \wossPi$
 and, by Lemma~\ref{lemma:depth_one_implies_one_visible_action}, $Q' \trans[\alpha]$ with $\alpha \neq \tau$. 
 Furthermore, by Convention~\ref{conv:names2} and the symmetric version of rule (\ref{trans:par-l})
 we have that $P \parallel Q \transs[\omega] P \parallel Q'$.  
 By Lemma~\ref{lemma:visible_act_change_the_eq_classes}, $P \parallel Q' \trans[\alpha] P \parallel \nullproc$
 and $\alpha \neq \tau$
 imply $P \parallel Q' \nwbisim P \parallel \nullproc \wbisim P$.
 Because $P \in \wossPi$,
 whenever $S \in \wossPi$ and $S \wbisim P \parallel Q'$, $\depth S \geq 1 + \depth P$ ($\star$).
 Given that $R \wbisim P \parallel Q$, $P \parallel Q \transs[\omega] P \parallel Q'$ 
 implies there is $R'$ s.t. $R\transs[\omega]R'$ and $R' \wbisim P \parallel Q'$.
 By Lemma~\ref{lemma:trans_over_wosspi},  $R' \in \wossPi$.
 In addition, by ($\star$), $\depth{R'} \geq 1 + \depth P$.
 Finally $\depth{R} \geq \depth{R'} \geq 1 + \depth{P} > \depth{P}$. 
\end{proof}

\begin{lemma}
 \label{lemma:wbisim_weak_challenge}
 If $P, Q \in \wossPi$, $P \wbisim Q$ and $P \transs[\alpha] P'$, with $\alpha \in A_\tau$,
 then there is $Q'$ s.t. $Q \transs[\alpha] Q'$.
\end{lemma}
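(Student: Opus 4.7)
The plan is to decompose $P \transs[\alpha] P'$ into $P \transs P_0 \trans[\alpha] P_1 \transs P'$ and mimic each segment using $P \wbisim Q$. We only need to exhibit some $Q'$ with $Q \transs[\alpha] Q'$, so the final $\tau$-closure segment $P_1 \transs P'$ can be ignored.

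First, repeated application of the $\tau$-clause of Def.~\ref{def:wbisim} to the initial $\tau$-sequence $P \transs P_0$ yields $Q_0$ with $Q \transs Q_0$ and $P_0 \wbisim Q_0$. Next, we match the single step $P_0 \trans[\alpha] P_1$. If $\alpha \in A$, Def.~\ref{def:wbisim} gives $Q_1$ with $Q_0 \transs[\alpha] Q_1$, and composing yields $Q \transs Q_0 \transs[\alpha] Q_1$, which is $Q \transs[\alpha] Q_1$; taking $Q' = Q_1$ concludes this case.

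The interesting case is $\alpha = \tau$, where I expect the main obstacle: Def.~\ref{def:wbisim} only guarantees $Q_0 \transs Q_1$ with $P_1 \wbisim Q_1$, and this matching sequence may in principle contain no $\tau$-transition at all, which would not suffice for $Q \transs[\tau] Q'$. To rule this out, I would argue by contradiction: if $Q_0 \transs Q_1$ is empty, then $Q_0 = Q_1$ and $P_0 \wbisim Q_0 = Q_1 \wbisim P_1$, so $P_0 \wbisim P_1$. But then $P \transs[\varepsilon] P_0 \trans[\tau] P_1$ with $P_0 \wbisim P_1$ witnesses a stuttering transition reachable from $P$, contradicting $P \in \wossPi$. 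Hence $Q_0 \transs Q_1$ must contain at least one $\tau$-step, i.e., $Q_0 \transs Q_0'' \trans[\tau] Q_0''' \transs Q_1$, whence $Q \transs[\tau] Q_1$, and $Q' = Q_1$ works.

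Note that the hypothesis $Q \in \wossPi$ is not strictly needed for the argument as stated; the crucial use is of $P \in \wossPi$ to block the empty matching in the $\tau$-case. The proof does not rely on induction on depth and treats $\alpha \in A$ and $\alpha = \tau$ uniformly in structure, differing only in the step where non-triviality of the matching sequence is established.
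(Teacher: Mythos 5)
Your proof is correct and follows essentially the route the paper sets up: decompose $P \transs P_0 \trans[\alpha] P_1 \transs P'$, match the $\tau$-prefix and the $\alpha$-step via $\wbisim$, and use the absence of stuttering transitions (with $\omega=\varepsilon\in A^*$ in the definition of $\wossPi$) to force the matching of a $\tau$-step to contain at least one transition. The only cosmetic difference is that you re-derive inline what the paper isolates as Lemma~\ref{lemma:mimick_is_forced_to_trans} (applicable here since $P_0\in\wossPi$ by Lemma~\ref{lemma:trans_over_wosspi}), and your observation that the hypothesis $Q \in \wossPi$ is not actually used is accurate.
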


\subsection{Unique parallel decomposition}
\label{subsec:upd_for_wb}

The development in this section is similar to the development in Section~\ref{subsec:upd}, 
for this reason in some cases we use the same notation. 
This will not be a problem because both developments are independent. 
In order to use Theorem~\ref{th:unique_decomposition_strong_case} we need to define a commutative monoid 
with a decomposition order. The commutative monoid is defined by 
\begin{itemize}
 \item $\wpp = \{[P]_{\wbisim}: P \in \finitePi\}$ where $[P]_{\wbisim} = \{P' : P' \wbisim P\}$
 \item $e = \wec{\nullproc} \in \wpp$ .
 \item ${\parallel} \subseteq \wpp \times \wpp$ is s.t. $\wec{P} \parallel \wec{Q} = \wec{P \parpi Q}$
\end{itemize}
Notice that we cannot ensure that for all $P', P'' \in \wec{P}$, $\depth{P'} = \depth{P''}$. 
Then, we extend the notion of depth in the following way.
Define $\wecRest{P} = \wec{P} \cap \wossPi$. 
For $\wec{P} \in \wpp$, $\depth{\wec{P}} = \depth{P'}$ with $P' \in \wecRest{P}$.
This definition is sound because of 
Lemma~\ref{lemma:wossPi_model_finitePi} and Theorem~\ref{th:wbisim_implies_same_depth}.

\begin{lemma}
$\wpp$ with neutral element $\wec{\nullproc}$ and binary operation ${\parallel}$
is a commutative monoid.
I.e., ${\parallel} \subseteq \wpp \times \wpp$ satisfies the associativity, commutativity and identity properties.
\end{lemma}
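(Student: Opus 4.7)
The proof is a routine verification, entirely analogous to the corresponding lemma for $\spp$ in Section~\ref{subsec:upd}. The plan is to discharge, in order: well-definedness of ${\parallel}$ on $\wpp$, totality of ${\parallel}$ on $\wpp$, and then the three monoid axioms (associativity, commutativity, identity).

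First I would check that the definition $\wec{P} \parallel \wec{Q} = \wec{P \parpi Q}$ is independent of the choice of representatives. Suppose $P \wbisim P'$ and $Q \wbisim Q'$; I need $P \parpi Q \wbisim P' \parpi Q'$. This is precisely compatibility of $\wbisim$ with parallel composition, which holds because $\wbisim$ is a non-input congruence in the $\pi$-calculus (see the remark immediately after Definition~\ref{def:wbisim}, and \cite[Theorem 2.4.22]{SW01}). Next, I would check totality: I need $P \parpi Q \in \finitePi$ whenever $P, Q \in \finitePi$, so that the equivalence class $\wec{P \parpi Q}$ indeed lies in $\wpp$. This is one direction of Lemma~\ref{lemma:finiteness_is_conservative}.

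For the three monoid properties, I would appeal to the corresponding structural identities for $\parpi$ modulo strong bisimilarity (which transfer to weak bisimilarity because $\bisim \subseteq \wbisim$):
\begin{itemize}
\item Associativity: $P \parpi (Q \parpi R) \bisim (P \parpi Q) \parpi R$, hence $\wec{P} \parallel (\wec{Q} \parallel \wec{R}) = \wec{P \parpi (Q \parpi R)} = \wec{(P \parpi Q) \parpi R} = (\wec{P} \parallel \wec{Q}) \parallel \wec{R}$.
\item Commutativity: $P \parpi Q \bisim Q \parpi P$, hence $\wec{P} \parallel \wec{Q} = \wec{Q} \parallel \wec{P}$.
\item Identity: $P \parpi \nullproc \bisim P$ and $\nullproc \parpi P \bisim P$, hence $\wec{P} \parallel \wec{\nullproc} = \wec{\nullproc} \parallel \wec{P} = \wec{P}$.
\end{itemize}

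There is no genuine obstacle here; the only point demanding care is well-definedness, which relies on the compatibility of $\wbisim$ with parallel composition. Everything else is an immediate transfer of standard $\pi$-calculus identities from the level of processes to the level of $\wbisim$-equivalence classes, mirroring the proof of the analogous statement for $\spp$.
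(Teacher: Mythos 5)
Your proof is correct and follows exactly the routine verification that the paper leaves implicit (the lemma is stated without proof in the body, as is its counterpart for $\spp$): well-definedness via compatibility of $\wbisim$ with parallel composition, totality via Lemma~\ref{lemma:finiteness_is_conservative}, and the monoid axioms via the standard structural identities for $\parpi$ lifted through $\bisim \subseteq \wbisim$. Nothing is missing.
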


We shall define the partial order ${\porder}$ over $\wpp$ using the relation
${\trans} \subseteq \wpp \times \wpp$ defined as follows:
\begin{align*}
 {\trans_0} 
 & = 
 \{(\wec{P}, \wec{Q}) :\exists P' \in \wecRest{P}, Q' \in \wecRest{Q} : 
 P' \transs[\alpha] Q', \alpha \in A_\tau 
 \\  &\qquad \qquad \qquad \quad
 \text{ and} \not \exists P_0, P_1 \in \wossPi \text{ s.t. } P_0 \nwbisim \nullproc,  P_1 \nwbisim \nullproc, P_0 \parpi P_1 \wbisim P \} \\[1ex]
 {\trans_{k+1}} 
 & = \{(\wec{P_0 \parpi P_1},\wec{Q_0 \parpi P_1} ): \wec{P_0} \trans_k \wec{Q_0}, P_1 \in \finitePi\} \\
 & \cup \{(\wec{P_0 \parpi P_1},\wec{P_0 \parpi Q_1}) : \wec{P_1} \trans_k \wec{Q_1}, P_0 \in \finitePi\} \\[1ex]
 {\trans} & = \bigcup_{k = 0}^{\infty} \trans_{k}
\end{align*}
The partial order ${\porder}$ is defined as the inverse of the reflexive-transitive closure of ${\trans}$  
i.e., ${\porder} = ({\trans}^*)^{-1}$. We write $\wec{P} \prec \wec{Q}$ if
$\wec{P} \porder \wec{Q}$ and $\wec{P} \neq \wec{Q}$.

Notice that in this case ${\trans}$ takes into account processes in $\wossPi$ and 
weak transitions that execute at least one transition.
Also notice that we are avoiding communications between the arguments of the parallel composition
in order to avoid scope extrusion.

Similarly to the strong setting, we need two lemmas, 
Lemmas~\ref{lemma:trans_are_real_trans_wec} and~\ref{lemma:trans_reduces_depth_wec},
to prove that ${\porder}$ is a partial order (Lemma~\ref{lemma:partial_order_weak_case}).
In addition, to prove that $\porder$ is a decomposition order, we need the Lemma~\ref{lemma:there_is_always_a_path_weak_case}
that is equivalent to Lemma \ref{lemma:there_is_always_a_path}.
The proofs of these results follow similarly to their respective counterpart in the strong setting. 
(The complete proofs are in the appendix.)

\begin{lemma}
 \label{lemma:trans_are_real_trans_wec}
 If $\wec{P} \trans \wec{Q}$ then for all $\tilde P \in \wec P$ there are $\alpha \in A_\tau$ and $\tilde Q \in \wec Q$
 s.t. $\tilde P \transs[\alpha] \tilde Q$.
\end{lemma}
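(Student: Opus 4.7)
The plan is to prove the lemma by induction on the least $k$ with $\wec{P} \trans_k \wec{Q}$, since $\trans = \bigcup_k \trans_k$.

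For the base case $k = 0$, the definition supplies $P' \in \wecRest{P}$ and $Q' \in \wecRest{Q}$ with $P' \transs[\alpha] Q'$ for some $\alpha \in A_\tau$; any $\tilde P \in \wec{P}$ satisfies $\tilde P \wbisim P'$ by transitivity. If $\alpha \in A$, weak bisimilarity immediately yields $\tilde P \transs[\alpha] \tilde Q$ with $\tilde Q \wbisim Q' \in \wec{Q}$. For $\alpha = \tau$, I unfold $P' \transs[\tau] Q'$ as $P' \transs P_a \trans[\tau] P_b \transs Q'$ and use $P' \in \wossPi$: because $P_a$ is reachable from $P'$ via $\transs[\varepsilon]$ with $\varepsilon \in A^*$, the step $P_a \trans[\tau] P_b$ is not stuttering, so $P_a \nwbisim P_b$. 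Matching $\tilde P \wbisim P'$ against this sequence produces $\tilde P \transs \tilde P_a \transs \tilde P_b \transs \tilde Q$ with $\tilde P_a \wbisim P_a$, $\tilde P_b \wbisim P_b$, and $\tilde Q \wbisim Q'$; if the middle segment contributed no $\tau$-step then $\tilde P_a = \tilde P_b$, whence $P_a \wbisim \tilde P_a = \tilde P_b \wbisim P_b$ would contradict $P_a \nwbisim P_b$. Consequently $\tilde P \transs[\tau] \tilde Q$ with $\tilde Q \in \wec{Q}$.

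For the inductive step $k{+}1$, write WLOG $\wec{P} = \wec{P_0 \parpi P_1}$, $\wec{Q} = \wec{Q_0 \parpi P_1}$ with $\wec{P_0} \trans_k \wec{Q_0}$ (the symmetric case being analogous). Pick $P_0^* \in \wec{P_0} \cap \wossPi$ via Lemma~\ref{lemma:wossPi_model_finitePi} and apply the induction hypothesis to obtain $P_0^* \transs[\alpha] \hat Q_0$ with $\hat Q_0 \in \wec{Q_0}$. Lifting through the Par-L rule (the bound-name side-condition is handled by Convention~\ref{conv:names2}) gives $P_0^* \parpi P_1 \transs[\alpha] \hat Q_0 \parpi P_1$, which lies in $\wec{Q}$ by compatibility of $\wbisim$ with parallel composition. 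Any $\tilde P \in \wec{P}$ satisfies $\tilde P \wbisim P_0^* \parpi P_1$; for $\alpha \in A$ weak bisimilarity directly yields the required $\tilde P \transs[\alpha] \tilde Q$ with $\tilde Q \in \wec{Q}$. For $\alpha = \tau$, I replay the base-case stuttering argument at the parallel-composition level: unfold $P_0^* \transs[\tau] \hat Q_0$ as $P_0^* \transs R_a \trans[\tau] R_b \transs \hat Q_0$ (so $R_a \nwbisim R_b$ by $P_0^* \in \wossPi$), lift via Par-L to $R_a \parpi P_1 \trans[\tau] R_b \parpi P_1$, and argue that the matching from $\tilde P$ cannot collapse to zero steps: if it did, then $R_a \parpi P_1 \wbisim R_b \parpi P_1$, which, after passing to a $\wossPi$-representative $\tilde P^* \in \wec P$ and applying Lemma~\ref{lemma:mimick_is_forced_to_trans} to the isolated left-component step, would yield $R_a \wbisim R_b$, contradicting $R_a \nwbisim R_b$.

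The hard part is the cancellation-style reasoning at the end of the inductive step: reducing $R_a \parpi P_1 \wbisim R_b \parpi P_1$ to $R_a \wbisim R_b$ is not immediate in the $\pi$-calculus modulo $\wbisim$. In the strong-bisimilarity counterpart this issue does not arise because strong matchings are forced to be single steps; here I would either invoke a separate cancellation result for finite $\pi$-processes modulo $\wbisim$, or first establish that $\wec P \trans \wec Q$ implies $\depth{\wec P} > \depth{\wec Q}$ and therefore $\wec P \neq \wec Q$, which rules out the zero-step matching case directly since a $\wossPi$-representative of $\tilde P$ has no $\tau$-transition to its own weak equivalence class.
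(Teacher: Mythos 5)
Your overall architecture (induction on the level $k$ of $\trans_k$, reduction of an arbitrary $\tilde P\in\wec{P}$ to the situation of a $\wossPi$-representative, and the use of non-stuttering to force the weak match of an internal $\tau$-step to be non-empty) is sound, and the base case and the visible-action cases go through exactly as you describe. The genuine gap is the point you flag yourself: the inductive step for $\alpha=\tau$, and neither of your two proposed repairs is available. A cancellation law for parallel composition modulo $\wbisim$ is nowhere established in this paper and cannot be assumed: for finite processes it is essentially a corollary of the unique-decomposition theorem that this whole development is building towards, so invoking it here is circular at the level of the overall argument. Your second repair is circular more locally: Lemma~\ref{lemma:trans_reduces_depth_wec} cannot be proved before the present lemma. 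In the strong setting the depth-reduction for $\trans_{k+1}$ is obtained from additivity of depth (Lemma~\ref{lemma:sum_of_depths}), but the paper explicitly notes that no additive analogue exists in the weak setting; the only route to $\depth{\wec{P}}>\depth{\wec{Q}}$ for $\wec{P}\trans_{k+1}\wec{Q}$ is to first exhibit a genuine weak transition $\tilde P\transs[\alpha]\tilde Q$ between $\wossPi$-representatives and then apply Lemmas~\ref{lemma:trans_over_wosspi} and~\ref{lemma:trans_reduces_depth}. That is precisely why Lemma~\ref{lemma:trans_are_real_trans_wec} precedes Lemma~\ref{lemma:trans_reduces_depth_wec}: the latter depends on the former, not conversely.

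Concretely, what is missing is a proof that $\wec{P_0}\trans_k\wec{Q_0}$ implies $P_0\parpi P_1\nwbisim Q_0\parpi P_1$, i.e.\ $\wec{P}\neq\wec{Q}$ at level $k+1$; once that is known, a zero-step match would place $\tilde P$ itself in $\wec{Q}$, a contradiction, and your argument closes. You correctly observe that this does not follow from $R_a\nwbisim R_b$ alone, but it is not a detail one can defer: the lemma is in fact \emph{equivalent} to this non-bisimilarity claim, because if $\wec{P}=\wec{Q}$ were possible then the statement would be false for a $\wossPi$-representative $\tilde P$ of $\wec{P}$ --- any non-empty sequence of real transitions from $\tilde P$ strictly decreases depth (Lemma~\ref{lemma:trans_reduces_depth}), stays inside $\wossPi$ (Lemma~\ref{lemma:trans_over_wosspi}), and hence, by Theorem~\ref{th:wbisim_implies_same_depth}, leaves the class $\wec{\tilde P}$, so $\tilde P$ admits no move $\transs[\alpha]$ back into its own class for any $\alpha\in A_\tau$. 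A proof that does not address this non-bisimilarity head-on is therefore incomplete.
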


\begin{lemma}
 \label{lemma:trans_reduces_depth_wec}
 If $\wec{P} \trans \wec{Q}$ then $\depth{\wec{P}} > \depth{\wec{Q}}$.
\end{lemma}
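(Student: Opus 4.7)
The plan is to leverage Lemma~\ref{lemma:trans_are_real_trans_wec} to realise the abstract step $\wec{P}\trans\wec{Q}$ as a concrete weak transition between representatives inside $\wossPi$, and then to count underlying atomic steps using Lemma~\ref{lemma:trans_reduces_depth}. Crucially, this reduces the claim to a purely transition-level counting argument and avoids induction on the structure of $\trans_k$, since Lemma~\ref{lemma:trans_are_real_trans_wec} already packages that information.

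First I would pick a representative $\tilde P\in\wecRest{P}$, which is guaranteed to exist by Lemma~\ref{lemma:wossPi_model_finitePi}. Applying Lemma~\ref{lemma:trans_are_real_trans_wec} to $\wec{P}\trans\wec{Q}$ together with this $\tilde P$ yields some $\alpha\in A_\tau$ and some $\tilde Q\in\wec{Q}$ with $\tilde P\transs[\alpha]\tilde Q$. By the definition of $\transs[\alpha]$ for $\alpha\in A_\tau$, which as remarked in Section~\ref{sec:pi} requires the execution of at least one actual $\trans$-step, we obtain $n\geq 1$, labels $\beta_1,\dots,\beta_n\in A_\tau$ and intermediate processes $R_0,\dots,R_n$ such that $\tilde P=R_0\trans[\beta_1]R_1\trans[\beta_2]\cdots\trans[\beta_n]R_n=\tilde Q$.

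Next I would propagate membership in $\wossPi$ along this chain: by repeated application of Lemma~\ref{lemma:trans_over_wosspi} (viewing each prefix of the chain as a $\transs[\omega]$-sequence from $\tilde P\in\wossPi$), every $R_i$ lies in $\wossPi$; in particular $\tilde Q\in\wossPi$, so $\tilde Q\in\wecRest{Q}$. Applying Lemma~\ref{lemma:trans_reduces_depth} at each of the $n\geq 1$ atomic steps yields the strict chain $\depth{\tilde P}>\depth{R_1}>\cdots>\depth{\tilde Q}$, and in particular $\depth{\tilde P}>\depth{\tilde Q}$.

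Finally, using the definition $\depth{\wec{R}}=\depth{R'}$ for any $R'\in\wecRest{R}$, which is well-defined by Theorem~\ref{th:wbisim_implies_same_depth}, I would conclude $\depth{\wec{P}}=\depth{\tilde P}>\depth{\tilde Q}=\depth{\wec{Q}}$. I do not foresee a substantive obstacle: the preceding lemmas do all the heavy lifting, and the only subtle bookkeeping point is to choose the representative $\tilde P$ inside $\wecRest{P}$ from the outset, so that Lemma~\ref{lemma:trans_over_wosspi} pushes $\tilde Q$ into $\wecRest{Q}$ and makes the depth comparison legitimate.
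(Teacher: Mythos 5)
Your proof is correct and follows essentially the same route the paper intends: Lemma~\ref{lemma:trans_are_real_trans_wec} realises the abstract step as a concrete weak transition from a representative in $\wecRest{P}$, Lemma~\ref{lemma:trans_over_wosspi} keeps the target inside $\wossPi$ so that the class-level depths are the depths of these representatives, and Lemma~\ref{lemma:trans_reduces_depth} applied to the at least one underlying atomic step yields the strict inequality. I see no gaps.
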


\begin{lemma}
 \label{lemma:partial_order_weak_case}
 ${\porder}$ is a partial order. 
\end{lemma}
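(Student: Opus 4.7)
The proof is essentially identical in structure to the one given for Lemma~\ref{lemma:partial_order_strong_case} in the strong setting, replacing the use of Lemma~\ref{lemma:trans_reduces_depth_sec} by its weak counterpart Lemma~\ref{lemma:trans_reduces_depth_wec}. Since $\porder$ is defined as $({\trans}^*)^{-1}$, reflexivity and transitivity are automatic from the fact that it is the inverse of a reflexive-transitive closure; concretely, $\wec{P} \porder \wec{P}$ for all $\wec{P} \in \wpp$ because $\wec{P} \trans^* \wec{P}$ vacuously, and $\wec{P} \porder \wec{Q} \porder \wec{R}$ yields $\wec{R} \trans^* \wec{Q} \trans^* \wec{P}$, so $\wec{R} \trans^* \wec{P}$.

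For antisymmetry, the idea is to use depth as a strictly decreasing measure along nontrivial $\trans$-chains. I would argue: suppose $\wec{P} \porder \wec{Q}$ and $\wec{P} \neq \wec{Q}$, i.e., $\wec{P} \prec \wec{Q}$. Then by the definition of $\porder$ there exist $n > 0$ and classes $\wec{P_0}, \ldots, \wec{P_n}$ such that
\[
\wec{Q} = \wec{P_n} \trans \wec{P_{n-1}} \trans \cdots \trans \wec{P_1} \trans \wec{P_0} = \wec{P}\enskip.
\]
By Lemma~\ref{lemma:trans_reduces_depth_wec}, every individual $\trans$-step strictly decreases depth, so
\[
\depth{\wec{P}} = \depth{\wec{P_0}} < \depth{\wec{P_1}} < \cdots < \depth{\wec{P_n}} = \depth{\wec{Q}}\enskip.
\]
In particular, $\depth{\wec{P}} \neq \depth{\wec{Q}}$.

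From this the antisymmetry is immediate: if we had both $\wec{P} \porder \wec{Q}$ and $\wec{Q} \porder \wec{P}$ with $\wec{P} \neq \wec{Q}$, the argument above applied to each direction would give $\depth{\wec{P}} < \depth{\wec{Q}}$ and $\depth{\wec{Q}} < \depth{\wec{P}}$, a contradiction. Hence $\wec{P} \porder \wec{Q}$ and $\wec{Q} \porder \wec{P}$ force $\wec{P} = \wec{Q}$.

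There is no real obstacle in this proof; all the technical work has been absorbed into Lemma~\ref{lemma:trans_reduces_depth_wec}. The only subtle point worth double-checking is that $\depth{\wec{P}}$ is a well-defined natural number for every $\wec{P} \in \wpp$, which follows from Lemma~\ref{lemma:wossPi_model_finitePi} (existence of a representative in $\wossPi$) together with Theorem~\ref{th:wbisim_implies_same_depth} (independence of the chosen representative in $\wecRest{P}$).
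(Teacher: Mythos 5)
Your proof is correct and matches the paper's argument for this lemma, which is exactly the proof of Lemma~\ref{lemma:partial_order_strong_case} transported to the weak setting with Lemma~\ref{lemma:trans_reduces_depth_wec} in place of Lemma~\ref{lemma:trans_reduces_depth_sec}: reflexivity and transitivity from the reflexive-transitive closure, antisymmetry from strict decrease of depth along nontrivial $\trans$-chains. Your closing remark on the well-definedness of $\depth{\wec{P}}$ via Lemma~\ref{lemma:wossPi_model_finitePi} and Theorem~\ref{th:wbisim_implies_same_depth} is a correct and worthwhile observation.
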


%

\begin{lemma}
 \label{lemma:there_is_always_a_path_weak_case}
 If $P \in \wossPi$ and $\depth P > 0$ then there is $Q$ s.t. $\wec{P} \trans \wec{Q}$.
\end{lemma}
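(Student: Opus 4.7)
The plan is to mirror the proof of Lemma~\ref{lemma:there_is_always_a_path} from the strong setting, proceeding by complete induction on $\depth P$ and performing a case split on whether $\wec P$ admits the kind of non-trivial parallel decomposition that is forbidden by $\trans_0$: namely, whether there exist $P_0, P_1 \in \wossPi$ with $P_0 \nwbisim \nullproc$, $P_1 \nwbisim \nullproc$, and $P_0 \parpi P_1 \wbisim P$.

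In the atomic case (no such decomposition exists), since $\depth P > 0$ there must exist $\alpha \in A_\tau$ and $P'$ with $P \trans[\alpha] P'$, and by Lemma~\ref{lemma:trans_reduces_depth} we have $P' \in \finitePi$. The key auxiliary claim I would establish first is that $\wossPi$ is closed under single-step transitions, so that $P' \in \wossPi$. If on the contrary $P' \transs[\omega] Q \trans[\tau] Q'$ witnessed a stuttering transition reachable from $P'$, with $\omega \in A^*$ and $Q \wbisim Q'$, then one may prepend $P \trans[\alpha] P'$: when $\alpha \neq \tau$ this extends the visible trace to $\alpha\omega \in A^*$, and when $\alpha = \tau$ the initial $\tau$-step is absorbed into the weak transition $\transs[\omega]$. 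Either way $P$ would exhibit the same stuttering transition, contradicting $P \in \wossPi$. Once $P' \in \wossPi$ is in hand, the single step $P \trans[\alpha] P'$ is also a weak step $P \transs[\alpha] P'$, so $P$ and $P'$ serve as the required witnesses for $\wec P \trans_0 \wec{P'}$, whence $\wec P \trans \wec{P'}$.

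In the composite case, let $P_0, P_1 \in \wossPi$ with $P_0, P_1 \nwbisim \nullproc$ and $P_0 \parpi P_1 \wbisim P$. Since strong bisimilarity is contained in weak bisimilarity, $\nwbisim$ implies $\nbisim$, so Lemma~\ref{lemma:PnbisimNullproc} gives $\depth{P_0} > 0$; and Theorem~\ref{th:parameters_depths_lesser_than_par_depth_weak_case} applied to $P_0, P_1, P$ yields $\depth{P_0} < \depth P$. The induction hypothesis therefore applies to $P_0$, producing $Q_0$ with $\wec{P_0} \trans_k \wec{Q_0}$ for some $k$. The defining clause for $\trans_{k+1}$ then lifts this to $\wec{P_0 \parpi P_1} \trans_{k+1} \wec{Q_0 \parpi P_1}$, and since $\wec{P} = \wec{P_0 \parpi P_1}$ we conclude $\wec{P} \trans \wec{Q_0 \parpi P_1}$.

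The main obstacle is the atomic case, specifically the preservation sub-claim that $P \in \wossPi$ and $P \trans[\alpha] P'$ imply $P' \in \wossPi$. Unlike the strong setting, where the atomic case could be dispatched simply by picking any enabled transition, here we must rule out the possibility that stepping to $P'$ uncovers stuttering transitions not already reachable from $P$. Because the definition of $\wossPi$ quantifies over weak traces in $A^*$ rather than $A_\tau^*$, the dichotomy between a visible and an internal first step has to be handled explicitly; once this sub-claim is in place, the remainder of the proof is a routine adaptation of the strong-case argument.
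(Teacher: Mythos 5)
Your proof is correct and follows essentially the same route as the paper's: complete induction on $\depth{P}$ with the same case split matching the side condition of $\trans_0$, using Theorem~\ref{th:parameters_depths_lesser_than_par_depth_weak_case} and Lemma~\ref{lemma:PnbisimNullproc} to enable the inductive step in the composite case. The only cosmetic difference is that your preservation sub-claim ($P \in \wossPi$ and $P \trans[\alpha] P'$ imply $P' \in \wossPi$) need not be re-proved from scratch, since it is exactly the special case $\omega = \alpha$ of Lemma~\ref{lemma:trans_over_wosspi}.
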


We are ready to prove that $\porder \subseteq \wpp \times \wpp$ is a decomposition order. 
This proof does not present changes w.r.t. proof of Lemma~\ref{lemma:dec_order_strong_case}
except that for proving $\porder$ is Archimedean, we use Theorem~\ref{th:parameters_depths_lesser_than_par_depth_weak_case}.
Notice that there is no lemma equivalent to Lemma~\ref{lemma:sum_of_depths} in the weak setting.

\begin{lemma}
 \label{lemma:dec_order_weak_case}
 ${\porder} \subseteq \wpp \times \wpp$ is a decomposition order.
\end{lemma}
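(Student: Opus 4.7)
The plan is to verify the five clauses of Definition~\ref{def:wdo} for $\porder$ on $\wpp$, following almost the same structure as the proof of Lemma~\ref{lemma:dec_order_strong_case}. The main ingredients already available are: Lemma~\ref{lemma:trans_reduces_depth_wec} (each $\trans$-step strictly reduces depth on representatives from $\wossPi$), Lemma~\ref{lemma:there_is_always_a_path_weak_case} (every non-trivial class has an outgoing $\trans$-step), and Theorem~\ref{th:parameters_depths_lesser_than_par_depth_weak_case} (strict additivity-type inequality for parallel decompositions).

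For well-foundedness, I would pick an element of minimal depth in any non-empty $X\subseteq\wpp$; by Lemma~\ref{lemma:trans_reduces_depth_wec} it is $\porder$-minimal in $X$. For $\wec{\nullproc}$ being the least element, I would induct on $\depth{\wec{P}}$: when $\depth{\wec{P}}=0$ one has $P\wbisim\nullproc$; otherwise Lemma~\ref{lemma:there_is_always_a_path_weak_case} provides $\wec{P}\trans\wec{Q}$ with strictly smaller depth, and the induction hypothesis gives $\wec{\nullproc}\porder\wec{Q}\porder\wec{P}$. For strict compatibility, from $\wec{Q}\prec\wec{P}$ I would obtain a chain $\wec{P}=\wec{P_0}\trans\cdots\trans\wec{P_n}=\wec{Q}$ with $n>0$; each step $\wec{P_i}\trans_{k_i}\wec{P_{i+1}}$ lifts to $\wec{P_i\parpi S}\trans_{k_i+1}\wec{P_{i+1}\parpi S}$ by the inductive clause in the definition of $\trans_{k+1}$, yielding $\wec{P}\parallel\wec{S}\trans^{+}\wec{Q}\parallel\wec{S}$, whence Lemma~\ref{lemma:trans_reduces_depth_wec} gives strict inequality of depths and therefore $\prec$. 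For precompositionality, I would induct on the length of a derivation $\wec{Q\parpi R}=\wec{S_n}\trans\cdots\trans\wec{S_0}=\wec{P}$: the base $n=0$ is trivial, and for $n+1$ the first step, by the definition of $\trans_{k+1}$, either acts on the left or the right component, producing $\wec{T}$ with $\wec{T}\porder\wec{Q}$ (or symmetrically $\wec{T}\porder\wec{R}$); applying the induction hypothesis to $\wec{P}\porder\wec{T}\parallel\wec{R}$ and using transitivity of $\porder$ closes the case. The cases in which $Q$ or $R$ is $\wbisim$-trivial are handled separately as in the strong proof.

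The main obstacle is the Archimedean property, since the weak setting lacks an analogue of Lemma~\ref{lemma:sum_of_depths}; we cannot simply conclude $\depth{\wec{P}^n}=n\cdot\depth{\wec{P}}$. Instead, I would argue: suppose $\wec{P}^n\porder\wec{Q}$ for all $n\in\Nat$ but $\wec{P}\neq\wec{\nullproc}$. Picking a representative $P'\in\wecRest{P}$, one has $P'\nwbisim\nullproc$, and hence $P^n\nwbisim\nullproc$ for every $n\geq 1$ (a component that is not weakly bisimilar to $\nullproc$ offers a transition, so its parallel composition with itself does too). Then Theorem~\ref{th:parameters_depths_lesser_than_par_depth_weak_case} applied to $P\parallel P^{n}\wbisim P^{n+1}$ gives $\depth{\wec{P}^n}<\depth{\wec{P}^{n+1}}$. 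This yields a strictly increasing sequence of natural numbers bounded above by $\depth{\wec{Q}}$ (via Lemma~\ref{lemma:trans_reduces_depth_wec} applied along any $\trans^*$-chain witnessing $\wec{P}^n\porder\wec{Q}$), which is impossible. Hence $\wec{P}=\wec{\nullproc}$.
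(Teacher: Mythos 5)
Your proposal is correct and follows essentially the same route as the paper: the first four clauses are verified exactly as in the strong-bisimilarity proof of Lemma~\ref{lemma:dec_order_strong_case}, and the Archimedean property is derived from Theorem~\ref{th:parameters_depths_lesser_than_par_depth_weak_case} in place of the unavailable additivity of depth. The only detail worth making explicit is that Theorem~\ref{th:parameters_depths_lesser_than_par_depth_weak_case} requires representatives in $\wossPi$, so one should pass to elements of $\wecRest{P^n}$ (which exist by Lemma~\ref{lemma:wossPi_model_finitePi}) before comparing depths --- which is consistent with how $\depth{\wec{\cdot}}$ is defined on $\wpp$ in the first place.
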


By Theorem~\ref{th:unique_decomposition_strong_case}, it follows that $\wpp$ has unique decomposition.
\begin{corollary}
 The commutative monoid $\wpp$ has unique decomposition.
\end{corollary}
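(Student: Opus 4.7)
The plan is immediate: apply Theorem~\ref{th:unique_decomposition_strong_case} to the commutative monoid $\wpp$ using the partial order $\porder$ on $\wpp$. By Lemma~\ref{lemma:dec_order_weak_case}, this $\porder$ is a decomposition order, so the theorem yields at once that every element of $\wpp$ has a unique decomposition. There is essentially nothing left to do at the level of the corollary itself; the whole content of the proof is already packaged in Lemma~\ref{lemma:dec_order_weak_case}.

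It is still worth reflecting on where the weight of the argument sits inside that lemma, since this determines where the weak case genuinely differs from the strong case. Well-foundedness of $\porder$ reduces to Lemma~\ref{lemma:trans_reduces_depth_wec} together with the fact that $\depth{\wec{P}}$ is a natural number, so any non-empty subset has an element of minimum depth. That $\wec{\nullproc}$ is the least element follows by induction on depth using Lemma~\ref{lemma:there_is_always_a_path_weak_case} to produce a $\trans$-step as long as the depth is positive. Strict compatibility follows directly from the inductive clauses defining $\trans_{k+1}$, which add a passive parallel component on one side. Precompositionality is proved by induction on the length of the $\trans$-chain witnessing $\wec{P} \porder \wec{Q} \parallel \wec{R}$: each single $\trans$-step modifies only one parallel component by construction, so the desired factorization $\wec{P} = \wec{Q'} \parallel \wec{R'}$ with $\wec{Q'} \porder \wec{Q}$ and $\wec{R'} \porder \wec{R}$ can be assembled step by step exactly as in the proof of Lemma~\ref{lemma:dec_order_strong_case}.

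The part I expect to be the main obstacle is the Archimedean clause, and it is the only axiom where the weak case genuinely departs from the strong case. In the strong setting one simply exploits additivity $\depth{P \parpi Q} = \depth P + \depth Q$ (Lemma~\ref{lemma:sum_of_depths}) to get $\depth{\wec{P^n}} = n \cdot \depth{\wec{P}}$, which forces $\depth{\wec{P}}=0$ once $\wec{P^n} \porder \wec{Q}$ for all $n$. In the weak setting this additivity is unavailable. Instead the plan is to assume $\wec{P} \neq \wec{\nullproc}$ and apply Theorem~\ref{th:parameters_depths_lesser_than_par_depth_weak_case} repeatedly: choose $P' \in \wecRest{P}$ with $P' \nwbisim \nullproc$ and any representative of $\wec{P^n}$ in $\wossPi$ via Lemma~\ref{lemma:wossPi_model_finitePi}, so that Theorem~\ref{th:parameters_depths_lesser_than_par_depth_weak_case} gives $\depth{\wec{P^{n+1}}} > \depth{\wec{P^n}}$. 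Induction then yields $\depth{\wec{P^n}} \geq n$, which contradicts $\wec{P^n} \porder \wec{Q}$ once $n > \depth{\wec{Q}}$ since $\porder$ respects depth by Lemma~\ref{lemma:trans_reduces_depth_wec}. Hence $\wec{P} = \wec{\nullproc}$, completing the Archimedean clause and, through Lemma~\ref{lemma:dec_order_weak_case} and Theorem~\ref{th:unique_decomposition_strong_case}, the corollary.
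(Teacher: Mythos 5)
Your proposal is correct and takes exactly the paper's route: the corollary is an immediate application of Theorem~\ref{th:unique_decomposition_strong_case} once Lemma~\ref{lemma:dec_order_weak_case} establishes that $\porder$ is a decomposition order on $\wpp$. Your account of where the weak case diverges internally --- in particular that the Archimedean clause must replace depth additivity by repeated use of Theorem~\ref{th:parameters_depths_lesser_than_par_depth_weak_case} --- matches the paper's own remarks preceding Lemma~\ref{lemma:dec_order_weak_case}.
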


\section{Final Remarks}
\label{sec:final_remarks}

In this paper we have proved that finite processes of the $\pi$-calculus satisfy
UPD w.r.t.\ both strong bisimilarity and
weak bisimilarity. 
We have obtained these results using the technique presented in
\cite{LO05} (see Theorem~\ref{th:unique_decomposition_strong_case}) and different properties that are satisfied in each setting.
For the strong setting, we had to prove properties related to the depth of processes.
For the weak setting, we had to prove properties related to processes that execute no stuttering transitions. 
Our results show that the abstract framework of \cite{LO05}
can be used in the context of the $\pi$-calculus, dealing with the
complications that arise from scope extrusion.
In addition, the same framework can be used to deal with the weak setting if one considers 
processes without stuttering transitions. 
In this way, we have avoided the abstract technique introduced in \cite{Luttik16}
which is considerably more involved than the technique that we have
used in this paper.

\newtext{
In Section~\ref{sec:upd_strong_case} we showed with two examples that norm is not additive for the $\pi$-calculus 
and therefore some proofs in \cite{DELL16} are flawed. 
After pointing out this problem to Dreier et al., 
they proposed us an alternative definition of norm to solve it. 
Call this variant $\normt$.
Roughly, $\normt$ should not consider traces where there is a scope extrusion of processes.
We think this solution may work for the applied $\pi$-calculus, but is
not suitable for the variant of the $\pi$-calculus
considered in the present paper.
We first explain what is the problem in our context, and then why this problem is not present in applied $\pi$. 
In the first example in Section~\ref{sec:upd_strong_case},
we had $P = P_0 \parpi P_1 = \nu z(\ov{a}z) \parpi a(x).!\ov{x}a$.
Process $P$ is not normed, i.e. $\normt(P) = \infty$, 
because the only finite trace that the process executes, $P \trans[\tau] \nu z(\nullproc \parpi !\ov{z}a) \stops$,
goes through a scope extrusion. 
Now, consider the process 
$$P' = \nu z(\ov{a}z).(\nullproc \parpi a(x).!\ov{x}a) \ + 
     \ a(x).(\nu z(\ov{a}z) \parpi !\ov{x}a) \ + 
     \ \tau.(\nu z(\nullproc \parpi !\ov{z}a))\enskip;$$
it would be normed according to the alternative definition suggested
above (the $\tau$-transition from $P'$ is not the result of a scope extrusion).
Now, since $P'$ is just the expansion of $P$, it is clear that $P
\bisim P'$.  Thus, we find that the property of being normed is not
compatible with bisimilarity.  
Since the applied $\pi$-calculus does not include the construct for
non-deterministic choice needed for the expansion, this problem is not
present there.
}


An open question that leaves this paper is related with UPD of the $\pi$-calculus w.r.t. \emph{strong full bisimilarity}\cite{SW01}.
Strong full bisimilarity is a stronger notion of bisimulation that is a congruence for all constructs 
of the $\pi$-calculus.
%
%
We have tried to apply the abstract technique in this setting so far without success. 
%
When we tried to repeat the result in Section~\ref{sec:upd_strong_case}, taking into account 
the universal quantification in the definition of strong full bisimilarity,
a problem arose when we wanted to prove that the order is a decomposition order.
Particularly, we were not able to prove that the order is strict compatible. 
Notice that this problem is not present in the \emph{asynchronous $\pi$-calculus}\cite{SW01},
a well-known fragment of the $\pi$-calculus, 
because (strong) bisimilarity and (strong) full bisimilarity coincide.


 \medskip
\noindent
\emph{Acknowledgement.} The authors thank Daniel Hirschkoff for discussions, comments and 
suggestions on various drafts of this paper, and anonymous reviewers for their thorough reviews and good suggestions.

\bibliographystyle{eptcs}
\bibliography{decomposition}

\begin{thebibliography}{10}
\providecommand{\bibitemdeclare}[2]{}
\providecommand{\surnamestart}{}
\providecommand{\surnameend}{}
\providecommand{\urlprefix}{Available at }
\providecommand{\url}[1]{\texttt{#1}}
\providecommand{\href}[2]{\texttt{#2}}
\providecommand{\urlalt}[2]{\href{#1}{#2}}
\providecommand{\doi}[1]{doi:\urlalt{http://dx.doi.org/#1}{#1}}
\providecommand{\bibinfo}[2]{#2}

\bibitemdeclare{article}{AFIL05}
\bibitem{AFIL05}
\bibinfo{author}{L.~\surnamestart Aceto\surnameend},
  \bibinfo{author}{W.~\surnamestart Fokkink\surnameend},
  \bibinfo{author}{A.~\surnamestart Ing{\'{o}}lfsd{\'{o}}ttir\surnameend} \&
  \bibinfo{author}{B.~\surnamestart Luttik\surnameend} (\bibinfo{year}{2005}):
  \emph{\bibinfo{title}{{CCS} with {H}ennessy's merge has no finite-equational
  axiomatization}}.
\newblock {\sl \bibinfo{journal}{Theor. Comput. Sci.}}
  \bibinfo{volume}{330}(\bibinfo{number}{3}), pp. \bibinfo{pages}{377--405},
  \doi{10.1016/j.tcs.2004.10.003}.

\bibitemdeclare{article}{AFIL09}
\bibitem{AFIL09}
\bibinfo{author}{L.~\surnamestart Aceto\surnameend},
  \bibinfo{author}{W.~\surnamestart Fokkink\surnameend},
  \bibinfo{author}{A.~\surnamestart Ing{\'{o}}lfsd{\'{o}}ttir\surnameend} \&
  \bibinfo{author}{B.. \surnamestart Luttik\surnameend} (\bibinfo{year}{2009}):
  \emph{\bibinfo{title}{A finite equational base for {CCS} with left merge and
  communication merge}}.
\newblock {\sl \bibinfo{journal}{{ACM} Trans. Comput. Log.}}
  \bibinfo{volume}{10}(\bibinfo{number}{1}), \doi{10.1145/1459010.1459016}.

\bibitemdeclare{inproceedings}{AILT08}
\bibitem{AILT08}
\bibinfo{author}{L.~\surnamestart Aceto\surnameend},
  \bibinfo{author}{A.~\surnamestart Ing{\'{o}}lfsd{\'{o}}ttir\surnameend},
  \bibinfo{author}{B.~\surnamestart Luttik\surnameend} \&
  \bibinfo{author}{P.~\surnamestart van Tilburg\surnameend}
  (\bibinfo{year}{2008}): \emph{\bibinfo{title}{Finite Equational Bases for
  Fragments of {CCS} with Restriction and Relabelling}}.
\newblock In: {\sl \bibinfo{booktitle}{{IFIP} 20th World Computer Congress,
  {TC} 1, Foundations of Computer Science}}, pp. \bibinfo{pages}{317--332},
  \doi{10.1007/978-0-387-09680-3\_22}.

\bibitemdeclare{phdthesis}{Cas88}
\bibitem{Cas88}
\bibinfo{author}{I.~\surnamestart Castellani\surnameend}
  (\bibinfo{year}{1998}): \emph{\bibinfo{title}{Bisimulations for
  Concurrency}}.
\newblock Ph.D. thesis, \bibinfo{school}{University of Edinburgh}.
\newblock \bibinfo{note}{Also published as LFCS-88-51}.

\bibitemdeclare{phdthesis}{Chris93}
\bibitem{Chris93}
\bibinfo{author}{S.~\surnamestart Christensen\surnameend}
  (\bibinfo{year}{1993}): \emph{\bibinfo{title}{Decidability and Decomposition
  in Process Algebra}}.
\newblock Ph.D. thesis, \bibinfo{school}{University of Edinburgh}.

\bibitemdeclare{article}{CGM98}
\bibitem{CGM98}
\bibinfo{author}{F.~\surnamestart Corradini\surnameend},
  \bibinfo{author}{R.~\surnamestart Gorrieri\surnameend} \&
  \bibinfo{author}{D.~\surnamestart Marchignoli\surnameend}
  (\bibinfo{year}{1998}): \emph{\bibinfo{title}{Towards parallelization of
  concurrent systems}}.
\newblock {\sl \bibinfo{journal}{Informatique th{\'e}orique et applications}}
  \bibinfo{volume}{32}(\bibinfo{number}{4-6}), pp. \bibinfo{pages}{99--125}.

\bibitemdeclare{article}{DELL16}
\bibitem{DELL16}
\bibinfo{author}{J.~\surnamestart Dreier\surnameend},
  \bibinfo{author}{C.~\surnamestart Ene\surnameend},
  \bibinfo{author}{P.~\surnamestart Lafourcade\surnameend} \&
  \bibinfo{author}{Y.~\surnamestart Lakhnech\surnameend}
  (\bibinfo{year}{2016}): \emph{\bibinfo{title}{On the existence and
  decidability of unique decompositions of processes in the applied
  {\(\pi\)}-calculus}}.
\newblock {\sl \bibinfo{journal}{Theor. Comput. Sci.}} \bibinfo{volume}{612},
  pp. \bibinfo{pages}{102--125}, \doi{10.1016/j.tcs.2015.11.033}.

\bibitemdeclare{inproceedings}{DLL12}
\bibitem{DLL12}
\bibinfo{author}{J.~\surnamestart Dreier\surnameend},
  \bibinfo{author}{P.~\surnamestart Lafourcade\surnameend} \&
  \bibinfo{author}{Y.~\surnamestart Lakhnech\surnameend}
  (\bibinfo{year}{2012}): \emph{\bibinfo{title}{Defining Privacy for Weighted
  Votes, Single and Multi-voter Coercion}}.
\newblock In: {\sl \bibinfo{booktitle}{{ESORICS} 2012}}, pp.
  \bibinfo{pages}{451--468}, \doi{10.1007/978-3-642-33167-1\_26}.

\bibitemdeclare{inproceedings}{FL00}
\bibitem{FL00}
\bibinfo{author}{W.~\surnamestart Fokkink\surnameend} \&
  \bibinfo{author}{B.~\surnamestart Luttik\surnameend} (\bibinfo{year}{2000}):
  \emph{\bibinfo{title}{An \emph{omega}-Complete Equational Specification of
  Interleaving}}.
\newblock In: {\sl \bibinfo{booktitle}{Automata, Languages and Programming,
  27th International Colloquium, {ICALP} 2000}}, pp. \bibinfo{pages}{729--743},
  \doi{10.1007/3-540-45022-X\_61}.

\bibitemdeclare{inproceedings}{GM92}
\bibitem{GM92}
\bibinfo{author}{J.~Friso \surnamestart Groote\surnameend} \&
  \bibinfo{author}{F.~\surnamestart Moller\surnameend} (\bibinfo{year}{1992}):
  \emph{\bibinfo{title}{Verification of Parallel Systems via Decomposition}}.
\newblock In: {\sl \bibinfo{booktitle}{{CONCUR} '92, Third International
  Conference on Concurrency Theory}}, pp. \bibinfo{pages}{62--76},
  \doi{10.1007/BFb0084783}.

\bibitemdeclare{article}{HP08}
\bibitem{HP08}
\bibinfo{author}{D.~\surnamestart Hirschkoff\surnameend} \&
  \bibinfo{author}{D.~\surnamestart Pous\surnameend} (\bibinfo{year}{2008}):
  \emph{\bibinfo{title}{A Distribution Law for {CCS} and a New Congruence
  Result for the $\pi$-calculus}}.
\newblock {\sl \bibinfo{journal}{Logical Methods in Computer Science}}
  \bibinfo{volume}{4}(\bibinfo{number}{2}), \doi{10.2168/LMCS-4(2:4)2008}.

\bibitemdeclare{inproceedings}{HJ99}
\bibitem{HJ99}
\bibinfo{author}{Y.~\surnamestart Hirshfeld\surnameend} \&
  \bibinfo{author}{M.~\surnamestart Jerrum\surnameend} (\bibinfo{year}{1999}):
  \emph{\bibinfo{title}{Bisimulation Equivalence Is Decidable for Normed
  Process Algebra}}.
\newblock In: {\sl \bibinfo{booktitle}{Automata, Languages and Programming,
  26th International Colloquium, ICALP'99}}, pp. \bibinfo{pages}{412--421},
  \doi{10.1007/3-540-48523-6\_38}.

\bibitemdeclare{article}{LPSS11}
\bibitem{LPSS11}
\bibinfo{author}{I.~\surnamestart Lanese\surnameend}, \bibinfo{author}{J.~A.
  \surnamestart P{\'{e}}rez\surnameend}, \bibinfo{author}{D.~\surnamestart
  Sangiorgi\surnameend} \& \bibinfo{author}{A.~\surnamestart
  Schmitt\surnameend} (\bibinfo{year}{2011}): \emph{\bibinfo{title}{On the
  expressiveness and decidability of higher-order process calculi}}.
\newblock {\sl \bibinfo{journal}{Inf. Comput.}}
  \bibinfo{volume}{209}(\bibinfo{number}{2}), pp. \bibinfo{pages}{198--226},
  \doi{10.1016/j.ic.2010.10.001}.

\bibitemdeclare{article}{Luttik16}
\bibitem{Luttik16}
\bibinfo{author}{B.~\surnamestart Luttik\surnameend} (\bibinfo{year}{2016}):
  \emph{\bibinfo{title}{Unique parallel decomposition in branching and weak
  bisimulation semantics}}.
\newblock {\sl \bibinfo{journal}{Theor. Comput. Sci.}} \bibinfo{volume}{612},
  pp. \bibinfo{pages}{29--44}, \doi{10.1016/j.tcs.2015.10.013}.

\bibitemdeclare{article}{LO05}
\bibitem{LO05}
\bibinfo{author}{B.~\surnamestart Luttik\surnameend} \&
  \bibinfo{author}{V.~\surnamestart van Oostrom\surnameend}
  (\bibinfo{year}{2005}): \emph{\bibinfo{title}{Decomposition orders another
  generalisation of the fundamental theorem of arithmetic}}.
\newblock {\sl \bibinfo{journal}{Theor. Comput. Sci.}}
  \bibinfo{volume}{335}(\bibinfo{number}{2-3}), pp. \bibinfo{pages}{147--186},
  \doi{10.1016/j.tcs.2004.11.019}.

\bibitemdeclare{article}{MM93}
\bibitem{MM93}
\bibinfo{author}{R.~\surnamestart Milner\surnameend} \&
  \bibinfo{author}{F.~\surnamestart Moller\surnameend} (\bibinfo{year}{1993}):
  \emph{\bibinfo{title}{Unique Decomposition of Processes}}.
\newblock {\sl \bibinfo{journal}{Theor. Comput. Sci.}}
  \bibinfo{volume}{107}(\bibinfo{number}{2}), pp. \bibinfo{pages}{357--363},
  \doi{10.1016/0304-3975(93)90176-T}.

\bibitemdeclare{phdthesis}{Moller89}
\bibitem{Moller89}
\bibinfo{author}{F.~\surnamestart Moller\surnameend} (\bibinfo{year}{1989}):
  \emph{\bibinfo{title}{Axioms for Concurrency}}.
\newblock Ph.D. thesis, \bibinfo{school}{University of Edinburgh}.

\bibitemdeclare{inproceedings}{Moller90b}
\bibitem{Moller90b}
\bibinfo{author}{F.~\surnamestart Moller\surnameend} (\bibinfo{year}{1990}):
  \emph{\bibinfo{title}{The Importance of the Left Merge Operator in Process
  Algebras}}.
\newblock In: {\sl \bibinfo{booktitle}{Automata, Languages and Programming,
  17th International Colloquium, ICALP90}}, pp. \bibinfo{pages}{752--764},
  \doi{10.1007/BFb0032072}.

\bibitemdeclare{inproceedings}{Moller90a}
\bibitem{Moller90a}
\bibinfo{author}{F.~\surnamestart Moller\surnameend} (\bibinfo{year}{1990}):
  \emph{\bibinfo{title}{The Nonexistence of Finite Axiomatisations for {CCS}
  Congruences}}.
\newblock In: {\sl \bibinfo{booktitle}{Proceedings of {LICS '90}}}, pp.
  \bibinfo{pages}{142--153}, \doi{10.1109/LICS.1990.113741}.

\bibitemdeclare{book}{SW01}
\bibitem{SW01}
\bibinfo{author}{D.~\surnamestart Sangiorgi\surnameend} \&
  \bibinfo{author}{D.~\surnamestart Walker\surnameend} (\bibinfo{year}{2001}):
  \emph{\bibinfo{title}{$\pi$-Calculus: A Theory of Mobile Processes}}.
\newblock \bibinfo{publisher}{Cambridge University Press},
  \bibinfo{address}{New York, NY, USA}.

\end{thebibliography}

\end{document}

\newpage

\appendix

\input{appendix.tex}